\documentclass[aos]{imsart}

\RequirePackage{amsthm,amsmath,amsfonts,amssymb,dsfont,mathrsfs}
\RequirePackage{bbm}
\RequirePackage[authoryear]{natbib}
\RequirePackage[colorlinks,citecolor=blue,urlcolor=blue]{hyperref}
\RequirePackage{graphicx}
\RequirePackage{scrextend}
\RequirePackage[ruled,norelsize,vlined]{algorithm2e}
\RequirePackage{colortbl} 
\RequirePackage{xcolor}
\RequirePackage{booktabs}
\usepackage{enumerate}
\usepackage{caption}
\usepackage{subcaption}
\usepackage{scalefnt}
\usepackage{adjustbox}
\usepackage{etoolbox} 
\apptocmd{\thebibliography}{\setlength{\itemsep}{5pt}}{}{}
\changefontsizes{13.6pt}

\startlocaldefs

\newcommand\scalemath[2]{\scalebox{#1}{\mbox{\ensuremath{\displaystyle #2}}}}


\theoremstyle{plain}

\newtheorem{theorem}{Theorem}[section]
\newtheorem{lemma}[theorem]{Lemma}
\newtheorem{example}[theorem]{Example}
\newtheorem{proposition}[theorem]{Proposition}

\newtheorem{corollary}[theorem]{Corollary}
\newtheorem{definition}[theorem]{Definition}
  

\theoremstyle{remark}


\endlocaldefs

\begin{document}
\begin{frontmatter}
\title{Concentration of discrepancy-based approximate Bayesian computation via Rademacher complexity}
\runtitle{Concentration of discrepancy-based ABC via Rademacher complexity}

\begin{aug}
\author[A]{\fnms{Sirio} \snm{Legramanti}\ead[label=e1,mark]{sirio.legramanti@unibg.it}}
\author[B]{\fnms{Daniele} \snm{Durante}\ead[label=e2,mark]{daniele.durante@unibocconi.it}},
\and
\author[C]{\fnms{Pierre} \snm{Alquier}\ead[label=e3,mark]{alquier@essec.edu}}
\address[A]{Department of Economics, 
University of Bergamo
\printead{e1}}
\address[B]{Department  of Decision Sciences and Institute for Data Science and Analytics,
Bocconi University,
\printead{e2}}
\address[C]{Department of Information Systems, Decision Sciences and Statistics,
ESSEC Business School,
\printead{e3}}
\end{aug}

\begin{abstract}
There has been an increasing interest on summary-free solutions for approximate Bayesian computation (\textsc{abc}) which replace distances among summaries with discrepancies between the empirical distributions of the observed data and the synthetic samples generated under the proposed parameter values. The success of these strategies has motivated theoretical studies on the limiting properties of the induced posteriors. However, there is still the lack of a theoretical framework for summary-free \textsc{abc} that  {\em (i)} is unified, instead of discrepancy-specific,  {\em (ii)} does not necessarily require to constrain the analysis to data generating processes and statistical models meeting specific regularity conditions, but rather facilitates the derivation of limiting properties that hold uniformly, and {\em (iii)} relies on verifiable assumptions that provide more explicit concentration bounds clarifying which factors govern the limiting behavior of the \textsc{abc} posterior. We address this gap via a novel theoretical framework that introduces the concept of Rademacher complexity in the analysis of the limiting properties for discrepancy-based \textsc{abc} posteriors, including in non-i.i.d.\ and misspecified settings. This yields a unified theory that relies on constructive arguments and provides more informative asymptotic results and uniform concentration bounds, even in settings not covered by current studies. These advancements are obtained by relating the asymptotic properties of summary-free \textsc{abc} posteriors to the behavior of the Rademacher complexity associated with the chosen discrepancy within the family of integral probability semimetrics (\textsc{ips}). The \textsc{ips} class extends summary-based distances, and also includes the widely-implemented Wasserstein distance and maximum mean discrepancy (\textsc{mmd}), among others. As clarified in specialized theoretical analyses of popular  \textsc{ips}  discrepancies and via  illustrative simulations, this new perspective  improves the understanding of summary-free \textsc{abc}.
\end{abstract}

\begin{keyword}[class=MSC2020]
\kwd[Primary ]{62F15}
\kwd[; secondary ]{62C10}
\kwd{62E20}
\end{keyword}

\begin{keyword}
\kwd{\scriptsize ABC}
\kwd{\scriptsize Integral probability semimetrics}
\kwd{\scriptsize MMD}
\kwd{\scriptsize Rademacher complexity}
\kwd{\scriptsize Wasserstein distance}
\end{keyword}

\end{frontmatter}


\section{Introduction} 
\label{sec_intro}
The growing complexity of statistical models in modern applications not only yields intractable likelihoods, but also raises substantial challenges in the identification of effective summary statistics \citep[see e.g.,][]{fearnhead2012constructing,marin2014relevant,frazier2018asymptotic}. Such drawbacks have motivated an increasing adoption of  \textsc{abc} solutions, along with a shift away from summary-based implementations \citep[see e.g.,][]{marin2012approximate} and towards summary-free strategies relying on discrepancies among the empirical distributions of the observed and synthetic data \citep[see e.g.,][]{drovandi2022comparison}. These solutions provide samples from an approximate posterior distribution for the parameter of interest  ${\theta \in \Theta \subseteq \mathbb{R}^{p}}$ under the only requirement that simulating from the assumed model $\mu_{\theta}$ is feasible. This is achieved by retaining all those values of $\theta$, drawn from the prior, that produced synthetic samples $ z_{1:m}=(z_1,\ldots,z_m) $ from $\mu_{\theta}$ whose empirical distribution is close enough to the one of the  observed data $ y_{1:n}=(y_1,\ldots,y_n) $, under the chosen discrepancy.
	
Remarkable examples of the above implementations are \textsc{abc} versions that employ  maximum mean discrepancy (\textsc{mmd}) \citep{park2016k2}, Kullback--Leibler (\textsc{kl}) divergence \citep{jiang2018approximate}, Wasserstein distance \citep{bernton2019approximate}, energy statistic \citep{nguyen2020approximate}, Hellinger and Cramer--von Mises distances \citep{frazier2020robust}, and  $\gamma$-divergence \citep{fujisawa2021gamma}; see also \citet{gutmann2018likelihood}, \citet{forbes2021approximate} and \citet{wang2022} for additional examples of summary-free \textsc{abc}  strategies.  By overcoming the need to pre-select summaries, all these solutions reduce the potential information loss of summary-based \textsc{abc}, thereby yielding  improved performance in simulation studies and illustrative applications. These promising empirical results have motivated active research on the theoretical properties of the induced \textsc{abc} posterior, with a main focus on the limiting behavior under different asymptotic regimes for the tolerance threshold and the  sample size \citep{jiang2018approximate,bernton2019approximate,nguyen2020approximate,frazier2020robust,fujisawa2021gamma}.  Among these regimes, of particular interest are the two situations in which the \textsc{abc}  threshold is either fixed or progressively shrinks as both $n$ and $m$ diverge. In the former case, the theory focus is on  assessing whether the control on the discrepancy among the empirical distributions established by the selected \textsc{abc} threshold yields a pseudo-posterior whose asymptotic form guarantees the same threshold-control on the discrepancy among the underlying truths \citep[e.g.,][]{jiang2018approximate}. The latter addresses instead the more challenging theoretical question on whether a suitably-decaying \textsc{abc} threshold can provide meaningful rates of concentration for the sequence of \textsc{abc} posteriors around those $\theta$ values yielding a $\mu_\theta$ close enough to the data generating process  $\mu^*$, as $n$ and $m$ diverge  \citep[e.g.,][]{bernton2019approximate}. 

Available contributions along these lines of research have the merit of providing theoretical support to several  versions of summary-free  \textsc{abc}. However, current theory is often tailored to the specific discrepancy analyzed, and generally relies on difficult-to-verify existence assumptions and concentration inequalities that constrain  the analysis, either implicitly or explicitly, to data generating processes and statistical models which meet suitable regularity conditions, thereby lacking results that hold uniformly. Recalling \citet{bernton2019approximate} and \citet{nguyen2020approximate}, this also yields concentration bounds involving sequences of control functions which are not made explicit. As such, although convergence and concentration can still be proved, the core factors that govern these asymptotic properties remain yet unexplored, thus  limiting the methodological impact of current theory, while hindering the derivation of novel informative results in more challenging settings. For example, the available theoretical studies  pre-assume that the discrepancy among the empirical distributions $\hat{\mu}_{z_{1:m}}$ and $\hat{\mu}_{y_{1:n}}$ suitably converges to the one among the corresponding truths $\mu_{\theta}$ and $\mu^*$, or, alternatively, that both  $\hat{\mu}_{z_{1:m}}$ and $\hat{\mu}_{y_{1:n}}$  converge, under the selected discrepancy, to $\mu_{\theta}$ and $\mu^*$, respectively. While these assumptions can be verified under suitable conditions and for specific discrepancies, as clarified within Sections~\ref{sec_abc} and \ref{sec_abc_posterior}, an in-depth theoretical understanding of  summary-free  \textsc{abc} necessarily requires relating convergence to the learning properties of the selected discrepancy, rather than pre-assuming it. Such a more precise but yet-unexplored theoretical treatment has the potentials to shed light on the factors that govern the limiting behavior of discrepancy-based \textsc{abc} posteriors. In addition, it could possibly provide more general, verifiable and explicit sufficient conditions under which popular discrepancies are guaranteed to ensure that the previously pre-assumed convergence and concentration hold uniformly over models and data generating processes, while facilitating the study of the limiting behavior of discrepancy-based \textsc{abc} posteriors  in  more general situations where these convergence guarantees may lack.

In this article, we address the above gap by introducing an innovative theoretical framework which analyses the  limiting properties of discrepancy-based \textsc{abc} posteriors, under a unified perspective and for different asymptotic regimes, through the concept of  Rademacher complexity \citep[see e.g.,][Chapter 4]{wainwright2019high},  within the general class of integral probability semimetrics (\textsc{ips}) \citep[e.g.,][]{muller1997integral,sriperumbudur2009integral}. Such a class naturally generalizes distances among summaries  and includes the widely-implemented \textsc{mmd} and Wasserstein distance, among others. As clarified in Sections~\ref{sec_abc}--\ref{sec_abc_posterior} and in Appendix C of the Supplementary Material, this perspective, which to the best of our knowledge is novel within \textsc{abc}, allows to derive unified, informative  and uniform concentration bounds for \textsc{abc} posteriors under several discrepancies, in possibly misspecified  and non-i.i.d.\ contexts. Moreover, it relies~on more constructive arguments that clarify under which sufficient conditions a discrepancy within the \textsc{ips} class guarantees uniform convergence and concentration of the induced  \textsc{abc} posterior; i.e., without necessarily requiring suitable regularity conditions for the underlying data generating process $\mu^*$ and the assumed statistical model. This yields an important theoretical and methodological advancement, since $\mu^*$ is often unknown in practice and, hence, verifying regularity conditions on the data generating process is generally unfeasible. 

Crucially, the theoretical framework we introduce allows to prove informative theoretical results even in yet-unexplored settings that possibly lack those convergence guarantees assumed in the literature. More specifically, in these settings we derive novel upper and lower bounds for the limiting acceptance probabilities that clarify in which contexts the \textsc{abc} posterior is still well-defined for $n$ large enough. When this is the case, it is further possible to obtain informative supersets for the support of such a posterior. These  show that, when relaxing standard convergence assumptions employed in state-of-the-art theoretical studies, the control established by a fixed \textsc{abc} threshold on the discrepancy among the empirical distributions does not necessarily translate, asymptotically, into the same control on the discrepancy among the corresponding truths, but rather yields an upper bound equal to the sum between the \textsc{abc} threshold and a multiple of the Rademacher complexity, namely a measure of {\em richness} of the class of functions that uniquely identify the chosen \textsc{ips}; see Section~\ref{sec_abc_lim}.

The above results clarify the fundamental relation among the limiting behavior of \textsc{abc} posteriors and the  learning properties of the chosen discrepancy, when measured via Rademacher complexity. In addition, the bounds derived clarify that a sufficient condition to recover a limiting pseudo-posterior with the same threshold-control on the discrepancy among the truths as the one enforced on the corresponding empirical distributions, is that the selected discrepancy has a Rademacher complexity vanishing to zero in the large-data limit. As proved within Section~\ref{sec_abc_con}, this setting also allows constructive derivations of novel, informative and uniform concentration bounds for  discrepancy-based \textsc{abc} posteriors in the challenging regime where the threshold shrinks towards zero as both $m$ and $n$ diverge.  This is facilitated by the existence of meaningful upper bounds for the Rademacher complexity associated to popular \textsc{abc} discrepancies, along with the availability of constructive conditions for the derivation of these bounds \citep[][]{sriperumbudur2009integral}. Such results leverage fundamental connections among the Rademacher complexity and other key quantities in statistical learning theory, such as the Vapnik--Chervonenkis (\textsc{vc}) dimension and the notion of uniform Glivenko--Cantelli classes \citep[e.g.,][Chapter 4]{wainwright2019high}. This yields an improved understanding of the factors that govern the concentration of discrepancy-based \textsc{abc} posteriors under a unified perspective that further allows to {\em (i)} quantify  rates of concentration and {\em (ii)} directly translate any advancement on Rademacher complexity into novel \textsc{abc} theory. Section~\ref{sec_mmd_tot} illustrates point {\em (i)} through a specific focus on  \textsc{mmd} with routinely-implemented bounded kernels, and also clarifies that in the absence of guarantees on uniformly-vanishing Rademacher complexity (e.g., under Wasserstein distance in unbounded data spaces) concentration results can be still derived, but at the expense of regularity conditions for the data generating process $\mu^*$ and the assumed model. Point~{\em (ii)} is clarified in Appendix C of the Supplementary Material, where we extend the theory from Section~\ref{sec_abc_posterior} to non-i.i.d. settings, leveraging results in  \citet{Mohri2008} on the Rademacher complexity for $\beta$-mixing processes \citep[e.g.,][]{doukhan}.

The illustrative simulation studies in Section~\ref{sec_empirical} show that the theoretical results derived in Sections~\ref{sec_abc_posterior}--\ref{sec_mmd_tot} find empirical evidence in practice, including also in scenarios characterized by model misspecification and data contamination (theoretical and empirical results under non-i.i.d. data generating processes can be found in  Appendix C of the Supplementary Material). These findings suggest that  discrepancies with guarantees of uniformly-vanishing Rademacher complexity provide a robust and sensible choice when the assumed statistical model and/or the underlying data generating process do not necessarily meet specific regularity conditions, or it is not possible to verify these conditions. This is a common situation in applications, since the data generating process is unknown in practice.

As discussed in Section~\ref{sec_discussion}, the unexplored bridge between discrepancy-based \textsc{abc} and the Rademacher complexity introduced in this article can also be leveraged to derive even more general theory by exploiting the active literature on the Rademacher complexity. For example, combining our perspective with the recent unified treatment of \textsc{ips} and $f$-divergences \citep{agrawal2021optimal,birrell2022f} might set the premises to derive similarly-interpretable and general results for other  discrepancies employed within \textsc{abc}, such as the Kullback--Leibler  divergence \citep{jiang2018approximate} and Hellinger distance \citep{frazier2020robust}. More generally, our contribution can have  implications even beyond \textsc{abc}, in particular on generalized Bayesian inference via pseudo-posteriors based on discrepancies \citep[][]{bissiri2016general,cherief2020mmd,matsubara2021robust,frazier2024}. Proofs and additional results  can be found in the Supplementary Material.


\section{Integral probability semimetrics and  Rademacher complexity} 
\label{sec_abc}
Denote with $ y_{1:n}=(y_1,\ldots,y_n) \in \mathcal{Y}^n $ an i.i.d.\ sample from $ \mu^* \in  \mathcal{P}(\mathcal{Y}) $, where $ \mathcal{P}(\mathcal{Y}) $ is the space of probability measures on $ \mathcal{Y} $, and assume that $ \mathcal{Y} $ is a metric space endowed with  distance $\rho$; see Appendix~C in the Supplementary Material for extensions to non-i.i.d. settings.  Given a statistical model $ \{ \mu_\theta: \theta \in \Theta  \subseteq \mathbb{R}^{p} \} $ in $ \mathcal{P}(\mathcal{Y}) $ and a prior distribution $ \pi $ on $ \theta $, rejection \textsc{abc} iteratively samples $ \theta $ from $ \pi $, draws a synthetic i.i.d.\ sample $ z_{1:m}=(z_1,\ldots,z_m) $ from $ \mu_\theta $, and retains $ \theta $ as a sample from the \textsc{abc} posterior if the discrepancy $\Delta(z_{1:m},y_{1:n}) $ between $z_{1:m}$ and $y_{1:n}$ is below a user-selected threshold $\varepsilon_n \geq 0 $. Although $ m $ may differ from  $ n $, here we follow common practice in \textsc{abc} theory \citep[e.g.,][]{bernton2019approximate,frazier2020robust} and set $ {m=n} $. Rejection \textsc{abc} does not sample from the exact posterior $ \pi_n(\theta) \propto \pi(\theta) \mu_\theta^n(y_{1:n}), $ but rather from the \textsc{abc} posterior  defined as
\begin{eqnarray*}
\pi_n^{(\varepsilon_n)}(\theta) \propto \pi(\theta) \int_{\mathcal{Y}^n} \mathds{1}\{\Delta(z_{1:n},y_{1:n})\leq\varepsilon_n\} \ \mu_\theta^n(dz_{1:n}),
\end{eqnarray*}
 whose properties clearly depend on the chosen discrepancy $\Delta(\cdot,\cdot) $. Within summary-based \textsc{abc}, $ \Delta(z_{1:n},y_{1:n}) $ is a suitable distance, typically Euclidean, among summaries computed from the synthetic sample $z_{1:n}$ and the observed data $y_{1:n}$. However, recalling, e.g.,  \citet{marin2014relevant} and \citet{frazier2018asymptotic}, the identification of summaries that do not lead to information loss is challenging for those complex models requiring \textsc{abc} implementations.

To overcome these challenges,  \textsc{abc} literature has progressively moved towards the adoption of discrepancies $ \mathcal{D}: \mathcal{P}(\mathcal{Y}) \times \mathcal{P}(\mathcal{Y}) \to [0,\infty] $ between the empirical distributions of the synthetic  and observed data, that is
\begin{eqnarray*}\label{eq_dist_empirical}
\Delta(z_{1:n},y_{1:n}) = \mathcal{D}( \hat{\mu}_{z_{1:n}}, \hat{\mu}_{y_{1:n}})=\mathcal{D} ( n^{-1}\textstyle{\sum\nolimits_{i=1}^n} \delta_{z_i}, n^{-1}\textstyle{\sum\nolimits_{i=1}^n} \delta_{y_i} ),
\end{eqnarray*}
where $\delta_{x}$ is the Dirac delta at a generic $x \in \mathcal{Y}$. Popular examples are \textsc{abc} versions based on \textsc{mmd}, \textsc{kl} divergence, Wasserstein distance, energy statistic, Hellinger and Cramer--von Mises distances, and $\gamma$-divergence, whose limiting properties have been studied in \citet{park2016k2}, \citet{jiang2018approximate}, \citet{bernton2019approximate}, \citet{nguyen2020approximate}, \citet{frazier2020robust}  and \citet{fujisawa2021gamma} under different asymptotic regimes and with a common reliance on existence assumptions to  ease the proofs. As a first step toward our unified and constructive theoretical framework, we shall emphasize that, although most of the above contributions treat the discrepancies separately, some of these choices share, in fact, a common origin. For example, \textsc{mmd}, Wasserstein distance and the energy statistic all belong to the \textsc{ips} class \citep{muller1997integral} in Definition~\ref{def_IPS}. Such a class also includes summary-based distances.

\begin{definition}[Integral probability semimetric --- \textsc{ips}]  \label{def_IPS} 
Let $ \mathfrak{F}$ be a class of measurable functions $ f: \mathcal{Y} \to \mathds{R} $. Then, the integral probability semimetric $\mathcal{D}_\mathfrak{F}$ between $ \mu_1,\mu_2 \in \mathcal{P}(\mathcal{Y}) $ is defined as
\begin{eqnarray*}
\mathcal{D}_\mathfrak{F}(\mu_1,\mu_2):=\sup\nolimits_{f\in\mathfrak{F}} \left| \int_{\mathcal{Y}} f d\mu_1 - \int_{\mathcal{Y}} f d\mu_2 \right|,
\end{eqnarray*}
where $\mathfrak{F}$ is the pre-specified class of functions that characterizes $\mathcal{D}_\mathfrak{F}$.
\end{definition}

Examples~\ref{exm_wass}--\ref{exm_sum} show that routinely-employed discrepancies both  in summary-free and summary-based \textsc{abc} \citep[e.g.,][]{park2016k2,bernton2019approximate,nguyen2020approximate,drovandi2022comparison} are, in fact, \textsc{ips} with a known characterizing family $\mathfrak{F}$ that uniquely identifies each discrepancy. 

\begin{example} \label{exm_wass}
{\em (Wasserstein-1 distance)}.	Let us consider the Lipschitz seminorm defined as $ || f ||_L := \sup \{ |f(x)-f(x')|/\rho(x,x'): x \neq x' \mbox{ in } \mathcal{Y} \} $. If $ \mathfrak{F} = \{f : ||f||_L \leq 1 \} $, then $ \mathcal{D}_\mathfrak{F} $ coincides with the Kantorovich metric. Recalling the Kantorovich--Rubinstein theorem \citep[e.g.,][]{dudley2018real}, when $\mathcal{Y}$ is
separable this metric coincides with the dual representation of the Wasserstein-1 distance, which is therefore recovered as an example of \textsc{ips}. Recall also that such a distance is defined  on the space of probability measures having finite first moment \citep[e.g.,][]{sriperumbudur2009integral,bernton2019approximate}.
\end{example}

\begin{example} {\em (\textsc{mmd} and energy distance).} \label{exm_mmd}
Given a positive-definite kernel $ k(\cdot,\cdot) $ on $ \mathcal{Y} \times \mathcal{Y} $, let $ \mathfrak{F}=\{f:  ||f||_\mathcal{H} \leq 1\} $, where $ \mathcal{H} $ is the reproducing kernel Hilbert space corresponding to $ k(\cdot,\cdot) $. Then $\mathcal{D}_\mathfrak{F} $ is the \textsc{mmd} \citep[e.g.,][]{muandet2017kernel}. When $ k(\cdot,\cdot) $ is characteristic, then $\mathcal{D}_\mathfrak{F} $ is not only a semimetric, but also a metric --- i.e., $\mathcal{D}_\mathfrak{F}(\mu_1,\mu_2)=0 $ implies $\mu_1=\mu_2$. Relevant examples of routinely-implemented characteristic kernels in $\mathcal{Y} =\mathbb{R}^d$ are the Gaussian {$\exp(-\|x-x'\|^2/\sigma^2)$} and Laplace {$\exp(-\|x-x'\|/\sigma)$} ones \citep[]{muandet2017kernel}. Note that \textsc{mmd} is also directly related to the energy distance, due to the correspondence among positive-definite kernels and negative-definite functions \citep[][]{sejdinovic2013equivalence}.
\end{example}

\begin{example}\label{exm_sum} {\em (Summary-based distances).} 
Classical summary-based \textsc{abc} employs distances among a finite set of  summaries $f_1, \ldots, f_K$  \citep[e.g.,][]{drovandi2022comparison}. Interestingly, these distances can be recovered as a special case of \textsc{mmd} with  suitably-defined kernel and, hence, are guaranteed to belong to the \textsc{ips} class. In particular, for a pre-selected vector of summaries $f(x)=[f_1(x), \ldots, f_K(x)] \in \mathcal{H}=\mathbb{R}^K$ equipped with the standard Euclidean norm $\|f(x)\|_{\mathcal{H}}^2 = \left< f(x),f(x)\right> = f^2_1(x)+ \cdots +f^2_K(x)$, one can define the kernel $k(x,x')=\left< f(x),f(x')\right>$ to obtain classical summary-based \textsc{abc}  recasted within the \textsc{mmd}  framework. As a result, popular   kernels, such as the Gaussian one, relying on an infinite dimensional feature space can be interpreted as a limiting version of summary-based  \textsc{abc}.
\end{example}

Although Examples \ref{exm_wass}--\ref{exm_sum}  characterize the most popular instances of \textsc{ips} discrepancies employed in \textsc{abc}, it shall be emphasized that other  interesting semimetrics belong to the \textsc{ips} class \citep[e.g.,][]{sriperumbudur2009integral,birrell2022f}.  Two relevant ones are the total variation  (\textsc{tv})  and Kolmogorov--Smirnov distances discussed in the Supplementary Material.

While \textsc{abc} based on discrepancies, such as those presented above, overcomes the need to pre-select summaries, it still requires the choice of a discrepancy. This motivates theory  on the asymptotic  properties of the induced \textsc{abc} posterior, under general discrepancies, for  $n \rightarrow \infty$ and relevant thresholding schemes. Considering here a unified perspective which applies to the whole \textsc{ips} class, the  current theory focuses both on fixed $\varepsilon_n=\varepsilon$ settings, and also on $\varepsilon_n \rightarrow \varepsilon^*$ regimes, where $\varepsilon^* = \inf\nolimits_{\theta \in \Theta} \mathcal{D}_\mathfrak{F} (\mu_\theta, \mu^*)$ is the lowest attainable discrepancy between the assumed model and the data generating process \citep{jiang2018approximate, bernton2019approximate, nguyen2020approximate,frazier2020robust,frazier2020model,fujisawa2021gamma}. Under these settings, available theory investigates whether the $\varepsilon_n$-control on $ \mathcal{D}_\mathfrak{F} ( \hat{\mu}_{z_{1:n}}, \hat{\mu}_{y_{1:n}})$   established by rejection-\textsc{abc}  translates into meaningful convergence results, and upper  bounds on the rates of concentration for the sequence of discrepancy-based \textsc{abc} posteriors around the true data generating process $\mu^*$, both under correctly specified models where $\mu^*=\mu_{\theta^*}$ for some $\theta^* \in \Theta$, and  in misspecified contexts where $\mu^*$ does not belong to $ \{ \mu_\theta: \theta \in \Theta  \subseteq \mathbb{R}^{p} \} $.

A common strategy to derive the aforementioned results is to study  convergence of the empirical distributions $\hat{\mu}_{z_{1:n}}$ and $\hat{\mu}_{y_{1:n}}$ to the corresponding truths $\mu_\theta$ and  $\mu^*$, in the chosen discrepancy $ \mathcal{D}_\mathfrak{F}$, after noticing that, since $\mathcal{D}_\mathfrak{F}$ is a semimetric, we have
\begin{eqnarray}
\mathcal{D}_\mathfrak{F} (\mu_\theta, \mu^*) \leq  \mathcal{D}_\mathfrak{F} (\hat{\mu}_{z_{1:n}},\mu_\theta) +\mathcal{D}_\mathfrak{F} (\hat{\mu}_{z_{1:n}},\hat{\mu}_{y_{1:n}})+  \mathcal{D}_\mathfrak{F}(\hat{\mu}_{y_{1:n}},\mu^*).
\label{ineq}
\end{eqnarray}
To this end, available theoretical studies pre-assume suitable convergence results for the empirical measures along with specific concentration inequalities regulated by non-explicit sequences of control functions; see Assumptions 1--2 in Proposition 3 of {\citet{bernton2019approximate}, and Assumptions A1--A2 in \citet{nguyen2020approximate}. Alternatively, it is possible to directly require that  $\mathcal{D}_\mathfrak{F} (\hat{\mu}_{z_{1:n}},\hat{\mu}_{y_{1:n}}) \rightarrow \mathcal{D}_\mathfrak{F} (\mu_\theta, \mu^*)$ almost surely as $n \rightarrow \infty$  \citep[see][Theorems 1 and 2, respectively]{jiang2018approximate, nguyen2020approximate}. However, as highlighted by the same authors, these conditions {\em (i)} can be  difficult to verify for several discrepancies, {\em (ii)}  do not allow to assess whether some of these discrepancies can achieve convergence and concentration uniformly over $ \mathcal{P}(\mathcal{Y})$,  and  {\em (iii)}  often yield bounds which hinder an in-depth understanding of the factors regulating these limiting properties. In addition, when the above assumptions are not met, the asymptotic behavior of  \textsc{abc} posteriors remains yet unexplored. 

As a first step towards addressing the above issues, notice that the aforementioned results, when contextualized within the \textsc{ips} class, are inherently related to the richness of the known class of functions $\mathfrak{F}$ that identifies each $ \mathcal{D}_\mathfrak{F} $; see Definition~\ref{def_IPS}. Intuitively, if $\mathfrak{F}$ is too rich, it is always possible to find a function $f \in \mathfrak{F}$ yielding large discrepancies between the empiricals and the corresponding truths even when $\hat{\mu}_{z_{1:n}}$ and $\hat{\mu}_{y_{1:n}}$ are arbitrarily close to $\mu_\theta$ and  $\mu^*$, respectively. Hence, $\mathcal{D}_\mathfrak{F} (\hat{\mu}_{z_{1:n}},\mu_\theta)$ and $ \mathcal{D}_\mathfrak{F}(\hat{\mu}_{y_{1:n}},\mu^*)$ will  remain large with positive probability, making the  triangle  inequality in \eqref{ineq} of limited interest, since a low $\mathcal{D}_\mathfrak{F} (\hat{\mu}_{z_{1:n}},\hat{\mu}_{y_{1:n}})$ will not necessarily imply a small $\mathcal{D}_\mathfrak{F} (\mu_\theta, \mu^*)$. In fact, in this context, it is clearly even not guaranteed that $\mathcal{D}_\mathfrak{F} (\hat{\mu}_{z_{1:n}},\hat{\mu}_{y_{1:n}})$ will converge to $\mathcal{D}_\mathfrak{F} (\mu_\theta, \mu^*)$. This suggests that the limiting properties of the \textsc{abc} posterior are inherently related to the richness of $\mathfrak{F}$.  In Section~\ref{sec_abc_posterior} we prove that this is the case when such a richness is measured through the notion of Rademacher complexity clarified in Definition~\ref{def_rade}; see also  Chapter~4 in \citet{wainwright2019high} for an introduction.

\begin{definition}[Rademacher complexity] \label{def_rade}
Given a vector $ x_{1:n} = (x_1, \ldots, x_n) $ of  i.i.d.\ random variables with distribution  $\mu \in  \mathcal{P}(\mathcal{Y})$, the Rademacher complexity $ \mathfrak{R}_{\mu,n}(\mathfrak{F}) $ of a class $\mathfrak{F}$ of real-valued measurable functions is 
\begin{eqnarray*}
 \mathfrak{R}_{\mu,n}(\mathfrak{F}) = \mathbb{E}_{x_{1:n},\epsilon_{1:n}}\left[ \sup\nolimits_{f\in\mathfrak{F}}\left| (1/n) {\textstyle \sum\nolimits}_{i=1}^{n} \epsilon_i f(x_i) \right| \right], 
 \end{eqnarray*} 
where $\epsilon_{1},\dots,\epsilon_n$ denote i.i.d.\ Rademacher variables, that is $\mathbb{P}(\epsilon_i=1)=\mathbb{P}(\epsilon_i=-1)=1/2$.
\end{definition}

As is clear from the above definition, high values of the Rademacher complexity $\mathfrak{R}_{\mu,n}(\mathfrak{F})$ mean that $\mathfrak{F}$ is rich enough to contain functions that can closely capture, on average, even the behavior of full-noise vectors. Hence, if $\mathfrak{R}_{\mu,n}(\mathfrak{F})$ is bounded away from zero for every $n$, then $\mathfrak{F}$ might be excessively rich for statistical purposes.  Conversely, if $\mathfrak{R}_{\mu,n}(\mathfrak{F})$ goes to zero as $n$ diverges, then $\mathfrak{F}$ is a more parsimonious class. Lemma~\ref{lemma_rademacher} formalizes this intuition.

\begin{lemma}[Theorem 4.10 and Proposition 4.12 in \citet{wainwright2019high}]
\label{lemma_rademacher}
Let $x_{1:n}$ be i.i.d.\ from some distribution $\mu \in  \mathcal{P}(\mathcal{Y})$. Then, for any $b$-uniformly bounded class $\mathfrak{F}$, i.e., any class $ \mathfrak{F} $ of functions $f$ such that $ \|f\|_\infty\leq b$, any integer $n \geq 1$ and  scalar $\delta \geq 0$, it holds that
 \begin{eqnarray}\label{eq1}
\mathbb{P}_{x_{1:n}} \left[\mathcal{D}_{\mathfrak{F}}(\hat{\mu}_{x_{1:n}},\mu) \leq  2 \mathfrak{R}_{\mu,n}(\mathfrak{F})+\delta \right]\geq 1-\exp(-n \delta^2/2b^2),
\end{eqnarray}
and $\mathbb{P}_{x_{1:n}} [ \mathcal{D}_{\mathfrak{F}}(\hat{\mu}_{x_{1:n}},\mu) \geq  \mathfrak{R}_{\mu,n}(\mathfrak{F})/2- \sup_{f \in \mathfrak{F}}| \mathbb{E}(f)|/2 n^{1/2}-\delta ]\geq 1-\exp(-n \delta^2/2b^2)$.
\end{lemma}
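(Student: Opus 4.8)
The plan is to localise $\mathcal{D}_{\mathfrak{F}}(\hat{\mu}_{x_{1:n}},\mu)$ around its expectation by a bounded--differences argument, and then to sandwich that expectation between two multiples of the Rademacher complexity $\mathfrak{R}_{\mu,n}(\mathfrak{F})$ via symmetrization. Set $Z := \mathcal{D}_{\mathfrak{F}}(\hat{\mu}_{x_{1:n}},\mu) = \sup_{f\in\mathfrak{F}}\bigl| n^{-1}\sum_{i=1}^{n}(f(x_i)-\mathbb{E}(f))\bigr|$, viewed as a function of $x_{1:n}$. Since $\|f\|_\infty\leq b$ for every $f\in\mathfrak{F}$, changing a single coordinate $x_i$ perturbs each summand $n^{-1}(f(x_i)-\mathbb{E}(f))$ by at most $2b/n$ and hence perturbs $Z$ by at most $2b/n$; McDiarmid's bounded--differences inequality (with constants $c_i=2b/n$, so $\sum_i c_i^2 = 4b^2/n$) then yields, for every $\delta\geq 0$,
\begin{equation*}
\mathbb{P}_{x_{1:n}}(Z \geq \mathbb{E}Z + \delta) \leq \exp(-n\delta^2/2b^2), \qquad \mathbb{P}_{x_{1:n}}(Z \leq \mathbb{E}Z - \delta) \leq \exp(-n\delta^2/2b^2).
\end{equation*}

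For \eqref{eq1} it then remains to prove $\mathbb{E}Z\leq 2\mathfrak{R}_{\mu,n}(\mathfrak{F})$. This is the classical symmetrization step: introduce an independent ghost sample $x_{1:n}'$ with law $\mu^n$, write $\mathbb{E}(f)=\mathbb{E}_{x'}\bigl[n^{-1}\sum f(x_i')\bigr]$, move this expectation out of the supremum by Jensen's inequality, note that $f(x_i)-f(x_i')$ is symmetric so inserting Rademacher signs $\epsilon_i$ leaves the law unchanged, and finally split the two samples by the triangle inequality; this gives $\mathbb{E}Z\leq 2\,\mathbb{E}_{x,\epsilon}\sup_{f}\bigl|n^{-1}\sum\epsilon_i f(x_i)\bigr| = 2\mathfrak{R}_{\mu,n}(\mathfrak{F})$. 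Combining with the upper McDiarmid tail above and passing to the complement event yields the first inequality.

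For the second inequality we need the matching lower bound $\mathbb{E}Z\geq \mathfrak{R}_{\mu,n}(\mathfrak{F})/2-\sup_{f\in\mathfrak{F}}|\mathbb{E}(f)|/(2\sqrt{n})$. Starting from $\mathfrak{R}_{\mu,n}(\mathfrak{F})=\mathbb{E}_{x,\epsilon}\sup_f\bigl|n^{-1}\sum\epsilon_i f(x_i)\bigr|$, decompose $f(x_i)=(f(x_i)-\mathbb{E}(f))+\mathbb{E}(f)$ and apply the triangle inequality to get $\mathfrak{R}_{\mu,n}(\mathfrak{F})\leq \mathbb{E}_{x,\epsilon}\sup_f\bigl|n^{-1}\sum\epsilon_i(f(x_i)-\mathbb{E}(f))\bigr| + \sup_f|\mathbb{E}(f)|\cdot\mathbb{E}_\epsilon\bigl|n^{-1}\sum_{i=1}^{n}\epsilon_i\bigr|$. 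The last factor is at most $\bigl(\mathbb{E}_\epsilon(n^{-1}\sum_{i=1}^n\epsilon_i)^2\bigr)^{1/2}=n^{-1/2}$ by Cauchy--Schwarz and independence of the $\epsilon_i$, while the first term is at most $2\mathbb{E}Z$ by the very same ghost--sample symmetrization argument now applied to the centered functions $f(x_i)-\mathbb{E}(f)$ (whose mean is $0$, so the ghost term drops out after Jensen). Rearranging delivers the lower bound on $\mathbb{E}Z$, and combining it with the lower McDiarmid tail $\mathbb{P}_{x_{1:n}}(Z\geq\mathbb{E}Z-\delta)\geq 1-\exp(-n\delta^2/2b^2)$ produces the stated inequality.

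The main obstacle is essentially bookkeeping rather than conceptual: carrying the factor $2$ correctly through both directions of symmetrization, checking that the bounded--differences constant is exactly $2b/n$ so that McDiarmid returns the exponent $n\delta^2/2b^2$, and tracking the residual $\sup_f|\mathbb{E}(f)|/(2\sqrt{n})$ in the lower bound, which is the price of working with the uncentered Rademacher process. No new ingredients beyond symmetrization and one concentration inequality are required, since the statement is a transcription of Theorem~4.10 and Proposition~4.12 in \citet{wainwright2019high} into the \textsc{ips} notation of Definitions~\ref{def_IPS} and~\ref{def_rade}.
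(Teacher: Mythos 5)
Your proposal is correct and takes essentially the same route as the paper, which does not reprove this lemma but imports it verbatim from \citet{wainwright2019high} (Theorem 4.10 and Proposition 4.12); the proofs there are exactly your combination of McDiarmid's bounded--differences inequality with constants $2b/n$ (yielding the exponent $n\delta^2/2b^2$) and the symmetrization/desymmetrization arguments, including the residual $\sup_{f\in\mathfrak{F}}|\mathbb{E}(f)|/(2\sqrt{n})$ term in the lower bound. Nothing is missing, and the bookkeeping of the factors of $2$ is right.
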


Lemma~\ref{lemma_rademacher} provides bounds for the probability that $\mathcal{D}_{\mathfrak{F}}(\hat{\mu}_{x_{1:n}},\mu)$ takes values below a multiple and above a fraction of the Rademacher complexity of $\mathfrak{F}$. Recalling the previous discussion on concentration of \textsc{abc} posteriors, this result is crucial to study the convergence of both $\hat{\mu}_{z_{1:n}}$ and $\hat{\mu}_{y_{1:n}}$ to $\mu_\theta$ and $\mu^*$, respectively. Our theory in Section~\ref{sec_abc_posterior} proves that this is possible when $ \mathfrak{R}_{n}(\mathfrak{F}) := \sup_{\mu \in  \mathcal{P}(\mathcal{Y})} \mathfrak{R}_{\mu,n}(\mathfrak{F}) $ goes to $0$ as $n \rightarrow \infty$. According to Lemma~\ref{lemma_rademacher}, this is a necessary and sufficient condition for $\mathcal{D}_{\mathfrak{F}}(\hat{\mu}_{x_{1:n}},\mu) \rightarrow 0 $ as $n \rightarrow \infty$, in $ \mathbb{P}_{x_{1:n}}$--probability, uniformly over $\mathcal{P}(\mathcal{Y})$. See Appendix C in the Supplementary Material for extensions to non-i.i.d.\ settings.


\section{Asymptotic properties of discrepancy-based \textsc{abc} posterior distributions}
\label{sec_abc_posterior}
As anticipated in Section~\ref{sec_abc}, the asymptotic results we derive leverage Lemma~\ref{lemma_rademacher} to connect the properties of the \textsc{abc} posterior under an \textsc{ips} discrepancy $\mathcal{D}_\mathfrak{F} $  with the behavior of the Rademacher complexity for the associated family $ \mathfrak{F} $. Section~\ref{sec_abc_lim} clarifies the importance of this bridge by studying the limiting properties of the \textsc{abc} posterior when  $\varepsilon_n=\varepsilon$ is fixed and $n \rightarrow \infty$, including in yet-unexplored situations where convergence guarantees for the empirical measures and for the discrepancy among such measures are not necessarily available. The theoretical results we derive in this case suggest that the use of discrepancies whose Rademacher complexity decays to zero in the limit is a sufficient condition to obtain strong convergence guarantees.  On the basis of these findings, Section~\ref{sec_abc_con} focuses on the regime  $\varepsilon_n \rightarrow \varepsilon^*= \inf\nolimits_{\theta \in \Theta} \mathcal{D}_{\mathfrak{F}}(\mu_\theta, \mu^*)$ as $n \rightarrow \infty$, and derives unified and informative concentration bounds that  hold uniformly over $\mathcal{P}(\mathcal{Y})$ and are based on more constructive conditions than those employed in the available discrepancy-specific theory. More specifically,  to prove our theoretical results in Sections~\ref{sec_abc_lim}--\ref{sec_abc_con},  we rely on some or all of the following assumptions:
\begin{enumerate}[(I)]
	\item \label{C1} the observed data $ y_{1:n} $	are i.i.d.\ from a data generating process $ \mu^* $;
	\item \label{C2} there exist $L,c_\pi \in  \mathds{R}^+$ such that, for $\bar{\varepsilon}$ small, $ \pi(\{\theta: \mathcal{D}_{\mathfrak{F}}(\mu_\theta,\mu^*) \leq \varepsilon^* + \bar{\varepsilon}\}) \geq c_\pi \bar{\varepsilon}^L$;
	\item \label{C3} $ \mathfrak{F}$ is $b$-uniformly bounded; i.e., there exists $b \in \mathds{R}^+$ such that $ ||f||_{\infty} {\leq b} $ for any $ f \in \mathfrak{F} $;
	\item \label{C4} $\mathfrak{R}_{n}(\mathfrak{F}) := \sup_{\mu \in  \mathcal{P}(\mathcal{Y})} \mathfrak{R}_{\mu,n}(\mathfrak{F}) $ goes to zero as $ n \to \infty  $.
\end{enumerate}

Condition (\ref{C1}) is the only assumption made on the data generating process and is present in, e.g., \citet{nguyen2020approximate} and in the supplementary materials of \citet{bernton2019approximate}. Although the theory we derive in Appendix C of the Supplementary Material relaxes (\ref{C1}) to study convergence and concentration also beyond the i.i.d.\ context, it shall be emphasized that some of the assumptions considered in the literature may not hold even in i.i.d.\ settings. Hence,  an improved understanding of the \textsc{abc} properties under (\ref{C1}) is crucial to clarify the range of applicability and potential limitations  of available existence theory under more complex, possibly non-i.i.d.\ regimes. In fact, as shown within Section~\ref{sec_abc_lim}, certain discrepancies may yield posteriors that are either not well-defined in the limit or lack strong convergence guarantees even in  i.i.d.\ settings. Assumption (\ref{C2}) is standard in Bayesian asymptotics and \textsc{abc} theory \citep[][]{bernton2019approximate, nguyen2020approximate,frazier2020robust}, and requires that sufficient prior mass is placed on those parameter values yielding models $\mu_{\theta}$ close to $\mu^*$, under $ \mathcal{D}_{\mathfrak{F}}$.  Finally, (\ref{C3})--(\ref{C4}) provide conditions on $\mathfrak{F}$ to ensure that $ \mathcal{D}_{\mathfrak{F}}$ yields meaningful concentration bounds, leveraging Lemma~\ref{lemma_rademacher}.  Crucially, (\ref{C3})--(\ref{C4}) are made directly on the user-selected discrepancy $\mathcal{D}_\mathfrak{F} $, rather than on the data generating process or on the statistical model, thereby providing general constructive conditions that are more realistic to verify in practice compared to regularity assumptions on the unknown $\mu^*$ or on $ \{ \mu_\theta: \theta \in \Theta  \subseteq \mathbb{R}^{p} \} $, while guiding the choice of $\mathcal{D}_ \mathfrak{F}$. 

Notice that (\ref{C3})--(\ref{C4}) depend, in principle, on  $\mathcal{P}(\mathcal{Y})$. Nonetheless, as clarified within Section~\ref{sec_ass}, available bounds on the Rademacher complexity ensure that these two conditions  hold for the relevant \textsc{ips} discrepancies in Examples~\ref{exm_wass}--\ref{exm_sum} under either no  constraints on $\mathcal{P}(\mathcal{Y})$ --- thereby circumventing such a dependence --- or for suitable spaces $\mathcal{Y}$ (e.g., bounded ones). Unlike for constraints on  the unknown $\mu^*$ and on $ \{ \mu_\theta: \theta \in \Theta  \subseteq \mathbb{R}^{p} \} $, conditions on $\mathcal{Y}$ can be verified in practice; see  Examples \ref{exm_wass1}--\ref{exm_sum1}, and refer to the Supplementary Material for the analysis of other relevant \textsc{ips} discrepancies. Moreover, Assumptions (\ref{C3}) and (\ref{C4}) are generally not stronger than those implicit in the current theory on discrepancy-based \textsc{abc}, and are inherently related with the notion of uniform Glivenko--Cantelli classes \citep[see e.g.,][Chapter~4]{wainwright2019high}, arguably a minimal sufficient requirement for establishing convergence and concentration properties that, unlike those currently derived,  hold uniformly over $\mathcal{P}(\mathcal{Y})$. This latter point is further clarified in Propositions \ref{corr_MMD_unb} and \ref{corr_Wasserstein_unb}, which state concentration results for \textsc{mmd} with unbounded kernel in $\mathbb{R}^{d}$ (which also includes \textsc{abc} with unbounded summary statistics) and Wasserstein-1 distance in $\mathbb{R}^{d}$, respectively. Although these two  discrepancies meet (\ref{C3})--(\ref{C4}) for suitable constraints on $\mathcal{Y}$, in general unbounded $\mathcal{Y}$ settings these conditions are no more guaranteed for such discrepancies. In this case, Propositions~\ref{corr_MMD_unb} and \ref{corr_Wasserstein_unb}  clarify that  concentration properties can still be derived, but at the expense of additional constraints on $\mu^*$ and $ \{ \mu_\theta: \theta \in \Theta  \subseteq \mathbb{R}^{p} \} $, which yield results that hold uniformly only in a suitable subset of $\mathcal{P}(\mathcal{Y})$, rather than over the entire  $\mathcal{P}(\mathcal{Y})$. It shall be also emphasized that the convergence theory derived in Section~\ref{sec_abc_lim} does not necessarily require a vanishing Rademacher complexity to obtain informative results. However, it also clarifies that, when this condition is verified, the limiting pseudo-posterior inherits the same threshold-control on $\mathcal{D}_\mathfrak{F} (\mu_\theta, \mu^*)$ as the one established by rejection \textsc{abc}  on $\mathcal{D}_\mathfrak{F} (\hat{\mu}_{z_{1:n}},\hat{\mu}_{y_{1:n}})$. 

As anticipated above, Sections~\ref{sec_abc_lim}--\ref{sec_abc_con} consider the two scenarios where the sample size $ n $ diverges to infinity, while the \textsc{abc} threshold $\varepsilon_n$ is either fixed at $\varepsilon$ or progressively shrinks to $\varepsilon^* = \inf\nolimits_{\theta \in \Theta} \mathcal{D}_{\mathfrak{F}}(\mu_\theta, \mu^*)$. If the model is well-specified, i.e., $ \mu^*= \mu_{\theta^*} $ for some, possibly not unique, $ \theta^* \in \Theta $, then $ \varepsilon^* =0 $. Although the regime characterized by $ \varepsilon_n \to 0 $ with fixed $ n $ is also of interest, this setting can be readily addressed under the unified \textsc{ips} class via a direct adaptation of available  results in, e.g., \citet{jiang2018approximate} and \citet{bernton2019approximate}; see also  \citet{miller2019robust} for results in the context of coarsened posteriors under the regime with  $ n \to \infty $ and fixed $ \varepsilon_n= \varepsilon$, related to those in  \citet{jiang2018approximate}.  


\subsection{Limiting behavior with fixed $\varepsilon_n=\varepsilon$ and $n \rightarrow \infty$}
\label{sec_abc_lim}
Current \textsc{abc} theory for the setting $\varepsilon_n=\varepsilon$ fixed and $n \rightarrow \infty$, studies convergence of the discrepancy-based \textsc{abc} posterior
\begin{eqnarray*}
{\pi^{(\varepsilon)}_{n}(\theta)} \propto  \pi(\theta) \int_{\mathcal{Y}^n} \mathds{1}\{\mathcal{D}_\mathfrak{F}(\hat{\mu}_{z_{1:n}},\hat{\mu}_{y_{1:n}})\leq\varepsilon\}  \mu_\theta^n(dz_{1:n}),
\end{eqnarray*}
 under the key assumption that $\mathcal{D}_\mathfrak{F} (\hat{\mu}_{z_{1:n}},\hat{\mu}_{y_{1:n}}) \rightarrow \mathcal{D}_\mathfrak{F} (\mu_\theta, \mu^*)$ almost surely as $n \rightarrow \infty$ \citep[see, e.g.,][Theorem 1]{jiang2018approximate}. When such a condition is met, it can be shown that $\lim_{n \rightarrow \infty} {\pi^{(\varepsilon)}_{n}(\theta)}  \propto \pi(\theta)\mathds{1}\{\mathcal{D}_\mathfrak{F}(\mu_{\theta},\mu^*)\leq\varepsilon\} $. This means that the limiting \textsc{abc} posterior coincides with the prior constrained within the support $\{\theta: \mathcal{D}_\mathfrak{F}(\mu_{\theta},\mu^*)\leq\varepsilon\}$, where $\varepsilon$ is the same threshold employed for $\mathcal{D}_\mathfrak{F}(\hat{\mu}_{z_{1:n}},\hat{\mu}_{y_{1:n}})$. 

Although this is a key result, verifying the convergence assumption behind the above finding can be difficult for several discrepancies \citep[e.g.,][]{jiang2018approximate}. More crucially, such theory fails to provide informative results in those situations where $\mathcal{D}_\mathfrak{F} (\hat{\mu}_{z_{1:n}},\hat{\mu}_{y_{1:n}})$ is not guaranteed to converge to $\mathcal{D}_\mathfrak{F} (\mu_\theta, \mu^*)$. In fact, under this setting, the asymptotic properties of discrepancy-based \textsc{abc} posteriors have been overlooked, and it is  not even clear whether a well-defined ${\pi^{(\varepsilon)}_{n}}$  exists in the  limit. Indeed, for specific choices of the discrepancy and threshold,  the \textsc{abc} posterior may not be well-defined even for fixed $n$. For example, when $\mathcal{D}_\mathfrak{F}$ is the \textsc{tv} distance (see Appendix A in the Supplementary Material), then $\mathcal{D}_\mathfrak{F}(\hat{\mu}_{z_{1:n}},\hat{\mu}_{y_{1:n}})=1$, almost surely, whenever  $z_{1:n}$ and $y_{1:n}$ are from two continuous distributions $\mu_{\theta}$ and $\mu^*$, respectively.  Hence, for any $\varepsilon < 1$, the \textsc{abc} posterior is not defined, even if  $\mu_{\theta}$ and $\mu^*$ are not mutually singular. Note that, as clarified in Appendix A, the Rademacher complexity of the \textsc{tv} distance is always bounded away from zero in such a setting. In this case and, more generally, for \textsc{ips} discrepancies whose $\mathfrak{R}_{n}(\mathfrak{F})$ does not vanish to zero, Lemma~\ref{lemma_rademacher} together with the triangle inequality in \eqref{ineq} still allow to derive new informative upper and lower bounds for limiting acceptance probabilities which further guarantee that $\mathcal{D}_\mathfrak{F} (\mu_\theta, \mu^*)$ can be bounded from above,  for a sufficiently large $n$, by $\varepsilon + 4\mathfrak{R}_{n}(\mathfrak{F})$, even without assuming that $\mathcal{D}_\mathfrak{F} (\hat{\mu}_{z_{1:n}},\hat{\mu}_{y_{1:n}}) \rightarrow \mathcal{D}_\mathfrak{F} (\mu_\theta, \mu^*)$ almost surely for $n \rightarrow \infty$, as in the current convergence theory. Theorem~\ref{conv}  formalizes this intuition for the whole class of  \textsc{abc} posteriors arising from discrepancies in the \textsc{ips} family.

 \begin{theorem} \label{conv}
Let $\mathcal{D}_{\mathfrak{F}}$ be an \textsc{ips} as in Definition~\ref{def_IPS}. Moreover, assume (\ref{C1}), (\ref{C3}), and let $c_{\mathfrak{F}}= 4 \lim\sup \mathfrak{R}_n(\mathfrak{F})$. Then, the acceptance probability $ p_n=\int_{\Theta}\int_{\mathcal{Y}^n} \mathds{1}\{\mathcal{D}(\hat{\mu}_{z_{1:n}},\hat{\mu}_{y_{1:n}})\leq\varepsilon\}  \mu_\theta^n(dz_{1:n})\pi(d\theta) $ of  rejection \textsc{abc} with discrepancy $\mathcal{D}_{\mathfrak{F}}$ satisfies, for any fixed $ \varepsilon >0$, 
\begin{eqnarray}
\label{eq_pABC}
\qquad \pi\{\theta: \mathcal{D}_{\mathfrak{F}}(\mu_\theta,\mu^* )\leq  \varepsilon - c_{\mathfrak{F}}\} - e_n  \leq   p_n   \leq    \pi\{\theta: \mathcal{D}_{\mathfrak{F}}(\mu_\theta,\mu^* )\leq  \varepsilon + c_{\mathfrak{F}}\} + e_n,
\end{eqnarray}
almost surely with respect to $y_{1:n} \stackrel{\mbox{\normalfont \scriptsize i.i.d.}}{\sim} \mu^{*}$, as $n \rightarrow \infty$, where $e_n=\exp(-\sqrt{n}/(2b^2))=o(1)$.

In particular, whenever $\varepsilon > \tilde{\varepsilon}+c_{\mathfrak{F}} $, with  $\tilde{\varepsilon}= \inf \{ \epsilon>0: {\pi\{\theta:\mathcal{D}_{\mathfrak{F}}(\mu_\theta,\mu^* ) \leq \epsilon \}}>0 \}$,  the probability $p_n$ is bounded away from $0$, for  $n$ large enough. This implies that, for every fixed $\varepsilon>\tilde{\varepsilon}+c_{\mathfrak{F}}$, the \textsc{abc}  posterior  is always well-defined for a large-enough $n$ and its support is almost surely included asymptotically  in the set $\{\theta: \mathcal{D}_{\mathfrak{F}}(\mu_\theta,\mu^*) \leq \varepsilon + 4 \mathfrak{R}_n(\mathfrak{F}) \}$. Namely, 
\begin{eqnarray*}
 \pi_n^{(\varepsilon)}\{\theta: \mathcal{D}_{\mathfrak{F}}(\mu_\theta,\mu^* ) \leq \varepsilon + 4 \mathfrak{R}_n(\mathfrak{F})\} \rightarrow 1, 
 \end{eqnarray*}
 for any fixed $\varepsilon>\tilde{\varepsilon}+c_{\mathfrak{F}}$, almost surely with respect to $y_{1:n} \stackrel{\mbox{\normalfont \scriptsize i.i.d.}}{\sim} \mu^{*}$, as $n \rightarrow \infty$. 
 \end{theorem}

Theorem~\ref{conv} clarifies that, even when relaxing the assumptions behind current theory, the  Rademacher complexity framework still allows to obtain guarantees on the acceptance probabilities,  existence and limiting support  of the \textsc{abc} posterior. Such a relaxation  highlights the key role of the richness of $\mathfrak{F}$ in driving these properties. As previously discussed, the higher is {$ \mathfrak{R}_n(\mathfrak{F})$} the lower are the guarantees that $\mathcal{D}_\mathfrak{F}(\hat{\mu}_{z_{1:n}},\hat{\mu}_{y_{1:n}})$ is close enough to $\mathcal{D}_{\mathfrak{F}}(\mu_\theta,\mu^*)$ in the large data limit. In fact, recalling Section~\ref{sec_abc}, when {$ \mathfrak{R}_n(\mathfrak{F})$} does not vanish with $n$ there are no guarantees that the convergence of  $\mathcal{D}_\mathfrak{F} (\hat{\mu}_{z_{1:n}},\hat{\mu}_{y_{1:n}})$ to $\mathcal{D}_\mathfrak{F} (\mu_\theta, \mu^*)$ assumed in the literature is verified. Nonetheless, the inequalities in Lemma \ref{lemma_rademacher}  still yield informative results that   translate these arguments into an inflation of the superset containing the support of the  limiting \textsc{abc} posterior, and in a more conservative choice for the threshold $\varepsilon$ to ensure that such a posterior is well-defined for large $n$. Notice that in Theorem~\ref{conv} the dependence on the dimension of the parameter and data space is embedded within $\mathcal{D}_{\mathfrak{F}}(\mu_\theta,\mu^* )$; see Equation (6) in the Appendix of \citet{cherief2022finite} for an example making this dependence explicit.

Recall that, for the \textsc{abc} posterior to be  well-defined in the limit (with a fixed threshold $\varepsilon$),  the acceptance probability $p_n$ needs to be bounded away from zero. According to Theorem~\ref{conv}, as $n \rightarrow \infty$ this is guaranteed whenever $ \varepsilon >\tilde{\varepsilon}+c_{\mathfrak{F}}$. For instance, recalling Section~\ref{sec_abc}, in misspecified models $ \mathcal{D}_{\mathfrak{F}}(\mu_\theta,\mu^* ) \geq \varepsilon^*$. As a consequence, $ \varepsilon > c_{\mathfrak{F}}$ is not sufficient to guarantee that $ \pi\{\theta: \mathcal{D}_{\mathfrak{F}}(\mu_\theta,\mu^* )\leq  \varepsilon - c_{\mathfrak{F}}\}$ in \eqref{eq_pABC} is bounded away from zero (in fact, when  $c_{\mathfrak{F}}<\varepsilon <c_{\mathfrak{F}}+  \varepsilon^*$ the lower bound in \eqref{eq_pABC} approaches zero from below). To achieve this result,  $ \varepsilon - c_{\mathfrak{F}}$ needs to define a ball with radius not lower than the one to which the prior is already guaranteed~to~assign strictly positive mass. This implies $\varepsilon >\tilde{\varepsilon}+c_{\mathfrak{F}} $, with $c_{\mathfrak{F}}= 4 \lim\sup \mathfrak{R}_n(\mathfrak{F})$.

The results in Theorem~\ref{conv} further clarify that, when $\mathfrak{R}_n(\mathfrak{F}) \rightarrow 0$ as $n \rightarrow \infty$, stronger and uniform convergence statements can be derived, both in correctly-specified and in misspecified models. More specifically, in this setting, the acceptance probability $p_n$ in Theorem~\ref{conv} approaches $ \pi\{\theta: \mathcal{D}_{\mathfrak{F}}(\mu_\theta,\mu^* )\leq  \varepsilon \}$ in the limit. Moreover, the \textsc{abc} posterior is well-defined for any $ \varepsilon>\tilde{\varepsilon}$ and its asymptotic support has the same $\varepsilon$-control as the one established by the \textsc{abc} procedure on the discrepancy among the empirical distributions. These results allow to establish the almost sure convergence of the \textsc{abc} posterior stated in Corollary~\ref{conv_rad}.

 \begin{corollary} \label{conv_rad}
 Under the conditions  in Theorem~\ref{conv}, if also $\mathfrak{R}_n(\mathfrak{F}) \rightarrow 0$ as $n \rightarrow \infty$ (i.e., Assumption (\ref{C4}) is satisfied), then, for any fixed $ \varepsilon>\tilde{\varepsilon}$, it holds that
  \begin{eqnarray}
   \pi_n^{(\varepsilon)}(\theta)  \rightarrow \pi( \theta \mid \mathcal{D}_{\mathfrak{F}}(\mu_\theta,\mu^*) \leq \varepsilon ) \propto  \pi( \theta ) \mathds{1}\left\{\mathcal{D}_{\mathfrak{F}}(\mu_\theta,\mu^*) \leq \varepsilon \right\},
   \label{conv_ABC}
  \end{eqnarray}
almost surely with respect to $y_{1:n} \stackrel{\mbox{\normalfont \scriptsize i.i.d.}}{\sim} \mu^{*}$, as $n \rightarrow \infty$. 
\end{corollary}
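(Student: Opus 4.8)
The plan is to derive the corollary directly from Theorem~\ref{conv} by specializing the two-sided bound on $p_n$ and the limiting-support statement to the case $\mathfrak{R}_n(\mathfrak{F}) \to 0$, and then upgrading convergence of the support to convergence of the full posterior density. First I would observe that under Assumption (\ref{C4}) we have $c_{\mathfrak{F}} = 4\limsup \mathfrak{R}_n(\mathfrak{F}) = 0$, so the hypothesis $\varepsilon > \tilde{\varepsilon} + c_{\mathfrak{F}}$ in Theorem~\ref{conv} reduces to $\varepsilon > \tilde{\varepsilon}$, which is exactly the hypothesis of the corollary. Hence Theorem~\ref{conv} applies and gives that $p_n$ is bounded away from zero for $n$ large, so $\pi_n^{(\varepsilon)}$ is well defined. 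Moreover the sandwich \eqref{eq_pABC} collapses to $p_n \to \pi\{\theta: \mathcal{D}_{\mathfrak{F}}(\mu_\theta,\mu^*) \leq \varepsilon\}$ almost surely (using that, since both bounds now have $c_{\mathfrak{F}}=0$, and exploiting continuity of $\bar\varepsilon \mapsto \pi\{\theta: \mathcal{D}_{\mathfrak{F}}(\mu_\theta,\mu^*)\leq\varepsilon+\bar\varepsilon\}$ at $\bar\varepsilon=0$ away from atoms of the pushforward of $\pi$ under $\theta\mapsto\mathcal{D}_{\mathfrak{F}}(\mu_\theta,\mu^*)$; for $\varepsilon$ an atom one argues by a standard squeeze with $\varepsilon\pm\eta$ and lets $\eta\downarrow 0$).

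Next I would prove the density convergence \eqref{conv_ABC}. The \textsc{abc} posterior can be written as
\begin{eqnarray*}
\pi_n^{(\varepsilon)}(\theta) = \frac{\pi(\theta)\, a_n(\theta)}{\int_\Theta \pi(\theta')\, a_n(\theta')\, \pi(d\theta')}, \qquad a_n(\theta) := \int_{\mathcal{Y}^n} \mathds{1}\{\mathcal{D}_\mathfrak{F}(\hat{\mu}_{z_{1:n}},\hat{\mu}_{y_{1:n}})\leq\varepsilon\}\, \mu_\theta^n(dz_{1:n}),
\end{eqnarray*}
where the denominator is precisely $p_n$. The core step is to show that, for $\pi$-almost every fixed $\theta$, $a_n(\theta) \to \mathds{1}\{\mathcal{D}_{\mathfrak{F}}(\mu_\theta,\mu^*) \leq \varepsilon\}$ almost surely in $y_{1:n}$. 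To see this, apply Lemma~\ref{lemma_rademacher} twice — once to $\hat{\mu}_{z_{1:n}}$ relative to $\mu_\theta$ (over the randomness in $z_{1:n}\sim\mu_\theta$) and once to $\hat{\mu}_{y_{1:n}}$ relative to $\mu^*$ — with $\delta = \delta_n \to 0$ chosen so that $n\delta_n^2 \to \infty$ (e.g. $\delta_n = n^{-1/4}$); then $\mathcal{D}_{\mathfrak{F}}(\hat{\mu}_{z_{1:n}},\mu_\theta) \to 0$ and $\mathcal{D}_{\mathfrak{F}}(\hat{\mu}_{y_{1:n}},\mu^*) \to 0$, the latter almost surely in $y_{1:n}$ by Borel–Cantelli (the failure probabilities $\exp(-n\delta_n^2/2b^2)$ are summable), and the former in $\mu_\theta^n$-probability. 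Feeding these into the triangle inequality \eqref{ineq} and its reverse-direction companion ($|\mathcal{D}_\mathfrak{F}(\hat\mu_{z_{1:n}},\hat\mu_{y_{1:n}}) - \mathcal{D}_\mathfrak{F}(\mu_\theta,\mu^*)| \le \mathcal{D}_\mathfrak{F}(\hat\mu_{z_{1:n}},\mu_\theta) + \mathcal{D}_\mathfrak{F}(\hat\mu_{y_{1:n}},\mu^*)$) shows $\mathcal{D}_\mathfrak{F}(\hat\mu_{z_{1:n}},\hat\mu_{y_{1:n}}) \to \mathcal{D}_\mathfrak{F}(\mu_\theta,\mu^*)$ in $\mu_\theta^n$-probability, almost surely in $y_{1:n}$. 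If $\mathcal{D}_\mathfrak{F}(\mu_\theta,\mu^*) < \varepsilon$ this forces the indicator inside $a_n(\theta)$ to be $1$ with probability tending to $1$, so $a_n(\theta)\to 1$; if $\mathcal{D}_\mathfrak{F}(\mu_\theta,\mu^*) > \varepsilon$ the indicator is $0$ with probability tending to $1$, so $a_n(\theta)\to 0$; the boundary set $\{\theta: \mathcal{D}_\mathfrak{F}(\mu_\theta,\mu^*)=\varepsilon\}$ has $\pi$-measure zero for all but countably many $\varepsilon$, and can be handled for the remaining $\varepsilon$ by the same $\varepsilon\pm\eta$ squeeze used for $p_n$, so it does not affect the normalized limit. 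Since $0 \le a_n(\theta) \le 1$, dominated convergence gives that the numerator $\pi(\theta)a_n(\theta) \to \pi(\theta)\mathds{1}\{\mathcal{D}_\mathfrak{F}(\mu_\theta,\mu^*)\le\varepsilon\}$ and, integrating, the denominator $p_n \to \pi\{\theta:\mathcal{D}_\mathfrak{F}(\mu_\theta,\mu^*)\le\varepsilon\}$, which is strictly positive because $\varepsilon > \tilde{\varepsilon}$. Taking the ratio yields \eqref{conv_ABC}.

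The main obstacle is the interchange of the two sources of randomness: $a_n(\theta)$ is itself an expectation over the synthetic data $z_{1:n}$, while the almost-sure statement is with respect to $y_{1:n}$, so one must be careful that the "in $\mu_\theta^n$-probability" convergence of $\mathcal{D}_\mathfrak{F}(\hat\mu_{z_{1:n}},\mu_\theta)$ to $0$ is uniform enough (it is, because the Rademacher bound in Lemma~\ref{lemma_rademacher} is uniform over $\mu\in\mathcal{P}(\mathcal{Y})$ under (\ref{C4}), so in particular holds for $\mu=\mu_\theta$ with rate independent of $\theta$) to conclude $a_n(\theta)\to\mathds{1}\{\cdots\}$ for each fixed $y_{1:n}$-realization in a set of full measure; a routine Fubini argument, together with the summability that drives Borel–Cantelli, lets one pick a single almost-sure event in $y_{1:n}$ on which the convergence holds simultaneously for $\pi$-a.e. $\theta$. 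The treatment of the non-generic thresholds $\varepsilon$ that are atoms of the law of $\mathcal{D}_\mathfrak{F}(\mu_\theta,\mu^*)$ under $\pi$ is the only other point requiring care, and is dispatched by monotonicity in $\varepsilon$ and the definition of $\tilde\varepsilon$.
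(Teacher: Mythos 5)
Your proposal is correct and its mathematically substantive core coincides with the paper's: both arguments use the first inequality of Lemma~\ref{lemma_rademacher} with a summable choice of $\delta$ and Borel--Cantelli, plus the two-sided triangle inequality, to conclude that $\mathcal{D}_{\mathfrak{F}}(\hat{\mu}_{z_{1:n}},\hat{\mu}_{y_{1:n}})\rightarrow\mathcal{D}_{\mathfrak{F}}(\mu_\theta,\mu^*)$ once $\mathfrak{R}_n(\mathfrak{F})\rightarrow 0$. The difference is in the second half: the paper stops there and imports the remaining step wholesale from the proof of Theorem~1 in \citet{jiang2018approximate}, whereas you make it self-contained by writing $\pi_n^{(\varepsilon)}$ as a ratio with acceptance function $a_n(\theta)$, proving $a_n(\theta)\rightarrow\mathds{1}\{\mathcal{D}_{\mathfrak{F}}(\mu_\theta,\mu^*)\leq\varepsilon\}$ pointwise (handling the $z_{1:n}$ randomness in probability is enough, since $a_n(\theta)$ is already an expectation over $z_{1:n}$), and closing with dominated convergence and positivity of the limiting normalizer from $\varepsilon>\tilde{\varepsilon}$; your preliminary observation that $c_{\mathfrak{F}}=0$ under (\ref{C4}) is a nice sanity check but is not needed. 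The one shaky point is your treatment of thresholds $\varepsilon$ at which $\pi\{\theta:\mathcal{D}_{\mathfrak{F}}(\mu_\theta,\mu^*)=\varepsilon\}>0$: the $\varepsilon\pm\eta$ squeeze does not pin down the limit there, because the lower bound converges to the strict-inequality indicator $\mathds{1}\{\mathcal{D}_{\mathfrak{F}}(\mu_\theta,\mu^*)<\varepsilon\}$, which differs from the claimed limit exactly on the atom; for such $\theta$ the indicator inside $a_n(\theta)$ need not converge at all. This boundary subtlety is, however, equally present in the paper's own proof, which inherits whatever continuity/non-atom condition underlies Jiang et al.'s Theorem~1, so it is not a defect specific to your route; outside that measure-zero family of exceptional $\varepsilon$ your argument is complete.
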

 
Corollary~\ref{conv_rad} clarifies that to obtain within the whole \textsc{ips} class a convergence result in line  with those stated in the current theory it is not necessary to assume that $\mathcal{D}_\mathfrak{F} (\hat{\mu}_{z_{1:n}},\hat{\mu}_{y_{1:n}})$ converges almost surely to $\mathcal{D}_\mathfrak{F} (\mu_\theta, \mu^*)$. In fact, it is sufficient to verify that $\mathfrak{R}_n(\mathfrak{F}) \rightarrow 0$ as $n \rightarrow \infty$. Crucially, this condition is imposed directly on the selected discrepancy within the \textsc{ips} class, rather than on the model or the underlying data generating process and, therefore, can be constructively verified for all the discrepancies in Examples~\ref{exm_wass}--\ref{exm_sum}. For instance, while \citet{jiang2018approximate} is unable to provide conclusive guidelines on whether \eqref{conv_ABC} holds under Wasserstein distance and \textsc{mmd}, our Corollary~\ref{conv_rad} can be directly applied to prove convergence for both these divergences, leveraging the results and discussions in Section~\ref{sec_ass}.  
 
Corollary~\ref{conv_rad} accounts also for misspecified models. In fact, the limiting pseudo-posterior in \eqref{conv_ABC} is well-defined only when  $ \varepsilon> \tilde{\varepsilon}$. For example, when $\mu^*$ is not within $ \{ \mu_\theta: \theta \in \Theta  \subseteq \mathbb{R}^{p} \} $, then $ \mathcal{D}_{\mathfrak{F}}(\mu_\theta,\mu^* ) \geq \varepsilon^*$. Hence, for any $\varepsilon \leq  \varepsilon^*$, we have that  $\mathds{1}\{\mathcal{D}_{\mathfrak{F}}(\mu_\theta,\mu^*) \leq \varepsilon \}=0$ for any $\theta \in \Theta$ and the limiting pseudo-posterior is not well-defined.
 
Motivated by the convergence result in Corollary~\ref{conv_rad}, the theory in Section~\ref{sec_abc_con} provides an in-depth study of the concentration properties for the \textsc{abc} posterior when the tolerance threshold progressively shrinks. As is clear from \eqref{conv_ABC}, employing a vanishing threshold might guarantee that, in the limit, the \textsc{abc} posterior concentrates all its mass around $\mu^*$, in correctly specified settings, or around the distribution $\mu_{\theta^*}$ closest to $\mu^*$, among those in $ \{ \mu_\theta: \theta \in \Theta  \subseteq \mathbb{R}^{p} \} $ for misspecified regimes. These results and the associated rates of concentration are provided in Section~\ref{sec_abc_con} by leveraging, again, Rademacher complexity theory.
 

\subsection{Concentration as $\varepsilon_n \rightarrow \varepsilon^*$ and $n \rightarrow \infty$}
\label{sec_abc_con}
Theorem~\ref{thm_rademacher}  states our main concentration result. As in related \textsc{abc} theory (see e.g., Proposition 3 in \citet{bernton2019approximate} and \citet{nguyen2020approximate}), we also leverage the triangle inequality  \eqref{ineq}. However, we crucially avoid pre-assuming convergence of $\mathcal{D}_\mathfrak{F}(\hat{\mu}_{y_{1:n}},\mu^*)$, and we do not rely on concentration inequalities for  $\mathcal{D}_\mathfrak{F} (\hat{\mu}_{z_{1:n}}{,}\mu_\theta)$  regulated by non-explicit sequences of control functions $c_n(\theta)f_n(\bar{\varepsilon}_n)$  \citep{bernton2019approximate}  and $c_n(\theta)s_n(\bar{\varepsilon}_n)$  \citep{nguyen2020approximate}. Rather, we leverage Lemma~\ref{lemma_rademacher} under the proposed Rademacher complexity framework  to obtain  more direct and informative results, that are also guaranteed to hold uniformly over $\mathcal{P}(\mathcal{Y})$. Indeed, Lemma~\ref{lemma_rademacher} ensures that both $\mathcal{D}_\mathfrak{F}(\hat{\mu}_{y_{1:n}},\mu^*)$ and $ \mathcal{D}_\mathfrak{F} (\hat{\mu}_{z_{1:n}},\mu_\theta)$ exceed $2\mathfrak{R}_n(\mathfrak{F})$ with vanishing probability. Therefore, when $\mathfrak{R}_n(\mathfrak{F}) \rightarrow 0$, we have that $\mathcal{D}_\mathfrak{F} (\hat{\mu}_{z_{1:n}},\hat{\mu}_{y_{1:n}}) \approx \mathcal{D}_\mathfrak{F} (\mu_\theta, \mu^*)$, by combining  \eqref{ineq} with 
\begin{eqnarray*}
\mathcal{D}_\mathfrak{F} (\mu_\theta, \mu^*) \geq - \mathcal{D}_\mathfrak{F} (\hat{\mu}_{z_{1:n}},\mu_\theta) +\mathcal{D}_\mathfrak{F} (\hat{\mu}_{z_{1:n}},\hat{\mu}_{y_{1:n}})-  \mathcal{D}_\mathfrak{F}(\hat{\mu}_{y_{1:n}},\mu^*). 
\end{eqnarray*}
This means that, if $\mathcal{D}_\mathfrak{F} (\hat{\mu}_{z_{1:n}},\hat{\mu}_{y_{1:n}})$ is small, then  $\mathcal{D}_{\mathfrak{F}}(\mu_\theta,\mu^*)$ is also small with a high probability.  This clarifies the importance of Assumption (\ref{C4}), which is further supported~by~the~fact that, if $\mathfrak{R}_n(\mathfrak{F})$ does not shrink to zero, by the second inequality in Lemma~\ref{lemma_rademacher}, the two empirical measures $\hat{\mu}_{y_{1:n}} $ and $ \hat{\mu}_{z_{1:n}}$ do not converge to $ \mu^* $ and $ \mu_{\theta} $, respectively, and hence there is no guarantee that a small $\mathcal{D}_\mathfrak{F} (\hat{\mu}_{z_{1:n}},\hat{\mu}_{y_{1:n}})$ would imply a vanishing $\mathcal{D}_{\mathfrak{F}}(\mu_\theta,\mu^*)$.

\begin{theorem}
	\label{thm_rademacher}
 Let $\bar{\varepsilon}_n \to 0$ when $ n \to\infty $, with {$n\bar{\varepsilon}^2_n \to \infty $} and {$\bar{\varepsilon}_n / \mathfrak{R}_{n}(\mathfrak{F}) \to \infty $}. Then, if $\mathcal{D}_\mathfrak{F}$ is from the \textsc{ips} class in Definition~\ref{def_IPS} and Assumptions (\ref{C1})--(\ref{C4}) hold,  the \textsc{abc} posterior with threshold $\varepsilon_n = \varepsilon^* + \bar{\varepsilon}_n $ satisfies 
	 \begin{eqnarray*}
	 {\pi_n^{(\varepsilon^*+\bar{\varepsilon}_n)}\Bigl(\Bigl\{\theta:  \mathcal{D}_{\mathfrak{F}}(\mu_\theta,\mu^*)} > {\varepsilon^* + \frac{4\bar{\varepsilon}_n}{3}+2\mathfrak{R}_{n}(\mathfrak{F}) +\Bigl(\frac{2b^2}{n}\log \frac{n}{\bar{\varepsilon}_n^L}\Bigr)^{1/2}}\Bigr\}\Bigr) \leq \frac{2 \cdot 3^L}{c_\pi n},
	 \end{eqnarray*}
	with {$\mathbb{P}_{y_{1:n}}$}--probability going to $1$ as $n \rightarrow \infty.$
\end{theorem}

The proof of Theorem~\ref{thm_rademacher} can be found in Appendix D of the Supplementary Material and follows similar arguments considered to establish the concentration results in \citet{bernton2019approximate} and \citet{nguyen2020approximate}, which, in turn, extend those in \citet{frazier2018asymptotic}. However, as mentioned above, those proofs are specific to a single discrepancy, pre-assume the convergence of $\mathcal{D}_\mathfrak{F}(\hat{\mu}_{y_{1:n}},\mu^*)$, and rely on concentration inequalities for $\mathcal{D}_\mathfrak{F} (\hat{\mu}_{z_{1:n}},\mu_\theta)$  that depend on non-explicit sequences of control functions. Theorem~\ref{thm_rademacher} overcomes these issues and proves a unified theory based on the single concentration inequality in Lemma~\ref{lemma_rademacher}. This yields technical differences in the proof and, more importantly, it introduces a novel and broadly-impactful perspective for the analysis of  concentration properties of discrepancy-based \textsc{abc} posteriors. 

Note that in Theorem~\ref{thm_rademacher} the  constant $b$  can be typically set equal to $1$ either by definition or upon normalization of the class of $b$-uniformly bounded functions. Moreover, as clarified in the Supplementary Material, Theorem~\ref{thm_rademacher} also holds when replacing $n/\bar{\varepsilon}_n^L$ and $c_\pi n$ with $M_n/\bar{\varepsilon}_n^L$ and $c_\pi M_n$, respectively, for any sequence $M_n>1$. Nonetheless, to ensure concentration, it suffices to let $M_n = n$. In such a case, the quantities $4\bar{\varepsilon}_n/3$, $2\mathfrak{R}_{n}(\mathfrak{F})$, $[(2b^2/n)\log(n/\bar{\varepsilon}_n^L)]^{1/2} $ and $2 \cdot 3^L/c_\pi n$ converge to $0$ as $n \rightarrow \infty$, under the settings of Theorem~\ref{thm_rademacher}. This implies that the \textsc{abc} posterior asymptotically concentrates around those $\theta$ values yielding a  $ \mu_\theta $ within discrepancy $ \varepsilon^* $ from  $\mu^*$. Appendix B in the Supplementary Material translates these concentration results from  the space of distributions to the space of parameters.

In contrast to available theory, the concentration result in Theorem~\ref{thm_rademacher} holds uniformly over $\mathcal{P}(\mathcal{Y})$, and replaces the currently-employed non-explicit control functions with a known and well-studied quantity, i.e., the Rademacher complexity. Notice that, although Theorem~\ref{thm_rademacher} provides a unified statement for all the \textsc{ips} discrepancies, the bound we derive depends on $\mathfrak{R}_{n}(\mathfrak{F})$, which is specific to each discrepancy $\mathcal{D}_\mathfrak{F}$ and plays a fundamental role in controlling the rate of concentration of the \textsc{abc} posterior. In particular,  to make the bound as tight as possible, we must choose an $\bar{\varepsilon}_n$ such that $4\bar{\varepsilon}_n/3$ and {$[(2b^2/n)\log(n/\bar{\varepsilon}_n^L)]^{1/2}$} are of the same order. By neglecting all the terms in $\log \log n$, such a choice leads to setting {$\bar{\varepsilon}_n$} of the order {$[\log(n)/n]^{1/2}$}. In this case, the constraint $\bar{\varepsilon}_n / \mathfrak{R}_{n}(\mathfrak{F}) \to \infty $ may not be satisfied, thereby requiring a larger $\bar{\varepsilon}_n$, such as $\mathfrak{R}_{n}(\mathfrak{F}) \log \log(n)$. Summarizing, when $\bar{\varepsilon}_n = \max\{[\log(n)/n]^{1/2}, \mathfrak{R}_{n}(\mathfrak{F}) \log \log(n) \} $ it follows, under the conditions of  Theorem~\ref{thm_rademacher}, that
\begin{eqnarray*}
{\pi_n^{(\varepsilon^*+\bar{\varepsilon}_n)}(\{\theta:  \mathcal{D}_{\mathfrak{F}}(\mu_\theta,\mu^*)} > \varepsilon^* + \mathcal{O}( \bar{\varepsilon}_n )\}) \leq 2 \cdot 3^L/(c_\pi n).
\end{eqnarray*}

Notice that the conditions {$n\bar{\varepsilon}^2_n \to \infty $} and {$\bar{\varepsilon}_n / \mathfrak{R}_{n}(\mathfrak{F}) \to \infty $} do not allow to set $\bar{\varepsilon}_n<1/\sqrt{n}$. Although this regime is of interest, we are not aware of explicit results in the discrepancy-based \textsc{abc} literature for such a setting. In fact, available studies rely on similar restrictions for some sequence of functions $f_n(\bar{\varepsilon}_n)$ which is not made explicit except for specific examples that still point toward setting $\bar{\varepsilon}_n=[\log (n)/n]^{1/2}>1/\sqrt{n}$  \citep[e.g.,][supplementary materials]{bernton2019approximate}. Faster rates for the \textsc{abc} threshold have been considered by \citet{lifearn2018} in the context of summary-based \textsc{abc}, but with a substantially different theoretical focus and asymptotic regime relative to the one considered here. Recalling, e.g., Theorem~1 in  \citet{frazier2018asymptotic}, within our theory setting the rate $1/\sqrt{n}$ (up to log terms) cannot be improved for summary-based \textsc{abc}; see also Section~\ref{sec_mmd} for additional discussion. It shall be also emphasized that $\bar{\varepsilon}_n<1/\sqrt{n}$ would not yield a faster concentration rate in Theorem~\ref{thm_rademacher} because of the terms $2\mathfrak{R}_{n}(\mathfrak{F})$ and $[(2b^2/n)\log(n/\bar{\varepsilon}_n^L)]^{1/2} $ in the bound.

Theorem~\ref{thm_rademacher} holds under both well-specified and misspecified models. In the former case, $ \mu^*= \mu_{\theta^*} $ for some $ \theta^* \in \Theta $. Therefore, $ \varepsilon^* =0 $ and the \textsc{abc} posterior concentrates around those $\theta$ yielding $ \mu_\theta=\mu^* $. Conversely, when the model is misspecified, the \textsc{abc} posterior concentrates on those $\theta$ yielding a $ \mu_\theta $ within discrepancy $ \varepsilon^* $ from $\mu^*$. Since $ \varepsilon^*  = \inf\nolimits_{\theta \in \Theta} \mathcal{D}_{\mathfrak{F}}(\mu_\theta, \mu^*)$,  the \textsc{abc} posterior will concentrate on those $\theta$ such that $\mathcal{D}_{\mathfrak{F}}(\mu_\theta,\mu^*)$ is arbitrarily close to $\varepsilon^*$. Example~\ref{exm_huber} provides an explicit bound for $\varepsilon^*$ in a simple yet noteworthy class of misspecified models.

\begin{example}
{\em (Huber contamination model)}. \label{exm_huber}
In this model, with probability $ 1-\alpha_n$, the data are  from a distribution $\mu_{\theta^*}$ belonging to the assumed model $ \{ \mu_\theta: \theta \in \Theta  \subseteq \mathbb{R}^{p} \} $, while with probability $\alpha_n$ arise from a contaminating distribution $\mu_{C}$. Therefore, the data generating process is $ \mu^* = (1-\alpha_n) \mu_{\theta^*} + \alpha_n \mu_C, $ with $\alpha_n \in [0,1)$ controlling the amount of contamination. In such a context, Definition~\ref{def_IPS} and Assumption (\ref{C3}) imply  
\begin{eqnarray*}
\varepsilon^*  \leq  \mathcal{D}_{\mathfrak{F}}(\mu_{\theta^*}, \mu^*)= \mathcal{D}_{\mathfrak{F}}(\mu_{\theta^*},(1-\alpha_n) \mu_{\theta^*} + \alpha_n \mu_C)= \alpha_n \mathcal{D}_{\mathfrak{F}}(\mu_{\theta^*}, \mu_C) \leq 2b \alpha_n.
\end{eqnarray*}
 By plugging this bound into Theorem~\ref{thm_rademacher}, we can obtain the same statement with $\varepsilon^*$ replaced by $2b \alpha_n$, where $ \alpha_n \in [0,1)$ is the amount of contamination. This means that the \textsc{abc} posterior asymptotically contracts in a neighborhood of the contaminated model $ \mu^*$ of radius at most $2b \alpha_n$. The previous bound on $\varepsilon^*$ combined with a triangle inequality also implies 
 \begin{eqnarray*}
 \mathcal{D}_{\mathfrak{F}}(\mu_\theta, \mu^*) \geq  \mathcal{D}_{\mathfrak{F}}(\mu_\theta, \mu_{\theta^*})-\mathcal{D}_{\mathfrak{F}}(\mu_{\theta^*},\mu^*) \geq  \mathcal{D}_{\mathfrak{F}}(\mu_\theta, \mu_{\theta^*})-2b \alpha_n.
 \end{eqnarray*}
  Therefore, replacing $\mathcal{D}_{\mathfrak{F}}(\mu_\theta, \mu^*)$ with $\mathcal{D}_{\mathfrak{F}}(\mu_\theta, \mu_{\theta^*})$ guarantees concentration also around the uncontaminated model $\mu_{\theta^*}$, with $\varepsilon^*$ replaced by $2b \alpha_n+2b \alpha_n=4b \alpha_n$, thus ensuring robustness to Huber contamination.
\end{example}


\subsection{Validity of the assumptions}
\label{sec_ass}
The main theorems in Section~\ref{sec_abc_lim}--\ref{sec_abc_con} (i.e., Theorems~\ref{conv} and \ref{thm_rademacher}) leverage Assumptions (\ref{C1})--(\ref{C4}). As anticipated in Section~\ref{sec_abc_posterior}, Assumption (\ref{C1}) is useful to formally clarify the range of applicability along with the possible limitations of the current existence theory even in simple i.i.d.\ settings, and it will be relaxed in Appendix C of the Supplementary Material to study uniform convergence and concentration of discrepancy-based \textsc{abc} posteriors beyond the i.i.d.\ context. Assumption  (\ref{C2}) is not specific to our framework. Rather, it defines a standard minimal requirement routinely employed in Bayesian asymptotics and \textsc{abc} theory \citep[e.g.,][]{bernton2019approximate, nguyen2020approximate,frazier2020robust}. Conversely, Assumptions  (\ref{C3})--(\ref{C4}) provide sufficient conditions on the \textsc{ips} discrepancy $\mathcal{D}_{\mathfrak{F}}$ to obtain guarantees of uniform convergence and concentration under the proposed Rademacher complexity perspective. These two conditions essentially replace the pre-assumed convergence for $\mathcal{D}_\mathfrak{F}(\hat{\mu}_{y_{1:n}},\mu^*)$ and the non-explicit bounds within the concentration inequalities for $\mathcal{D}_\mathfrak{F} (\hat{\mu}_{z_{1:n}},\mu_\theta)$ leveraged by the current discrepancy-specific theory. While these latter assumptions may implicitly require regularity conditions on $ \{ \mu_\theta: \theta \in \Theta  \subseteq \mathbb{R}^{p} \} $ and on the unknown data generating process $\mu^*$, Assumptions (\ref{C3})--(\ref{C4}) are made directly on the user-selected discrepancy $\mathcal{D}_{\mathfrak{F}}$. Leveraging the available bounds on the Rademacher complexity, these two conditions  crucially allow to state results that hold uniformly over $\mathcal{P}(\mathcal{Y})$. 

Notice that, to derive these uniform convergence and concentration properties,  an arguably minimal sufficient requirement is that $\mathfrak{F}$ is a uniform Glivenko--Cantelli class \citep[see e.g.,][Proposition 10]{dud1991}, i.e., for $x_{1:n}$ i.i.d. from $\mu \in \mathcal{P}(\mathcal{Y})$,
\begin{eqnarray}
\label{GCC}
\sup_{\mu \in \mathcal{P}(\mathcal{Y})}\sup_{f\in\mathfrak{F}} \left| \frac{1}{n}\sum_{i=1}^n f(x_i) - \int_{\mathcal{Y}} f d \mu \right| =\sup_{\mu \in \mathcal{P}(\mathcal{Y})}\mathcal{D}_\mathfrak{F} (\hat{\mu}_{x_{1:n}},\mu) \to 0,
\end{eqnarray}
 in $ \mathbb{P}_{x_{1:n}}$--probability as $n \to \infty$. In fact, as already discussed within Section~\ref{sec_abc}, the lack of uniform convergence guarantees for both $\mathcal{D}_\mathfrak{F}(\hat{\mu}_{y_{1:n}},\mu^*)$ and $\mathcal{D}_\mathfrak{F} (\hat{\mu}_{z_{1:n}},\mu_\theta)$ would fail to ensure that the control established by the \textsc{abc} threshold on $\mathcal{D}_\mathfrak{F}(\hat{\mu}_{z_{1:n}},\hat{\mu}_{y_{1:n}})$   applies, asymptotically, to  $\mathcal{D}_\mathfrak{F}(\mu_{\theta},\mu^*)$, uniformly over $\mathcal{P}(\mathcal{Y})$. Interestingly, although the uniform Glivenko--Cantelli property in \eqref{GCC} might seem more general and weaker than Assumptions (\ref{C3})--(\ref{C4}), by the upper and lower bounds in Lemma~\ref{lemma_rademacher} along with the subsequent discussion, it follows immediately that \eqref{GCC} is exactly equivalent to Assumption (\ref{C4}), under (\ref{C3}); see also Chapter 4 in \citet{wainwright2019high}. Regarding (\ref{C3}), notice that, as discussed in \citet{dud1991}, when $\mathfrak{F}$ is a uniform Glivenko--Cantelli class, then $\bar{\mathfrak{F}} :=\{\bar{f}(\cdot) := f(\cdot)-\inf_{x}f(x), f\in\mathfrak{F} \}$ is uniformly bounded and $\mathcal{D}_{\bar{\mathfrak{F}}} = \mathcal{D}_{\mathfrak{F}}$. Indeed, for any $f\in\mathfrak{F}$, $ \int \bar{f} d \mu_1  - \int \bar{f} d \mu_2  = \int [f -\inf_x f(x)]  d \mu_1 - \int [f(x) -\inf_x f(x)]  d \mu_2 = \int f  d \mu_1 - \int f  d \mu_2  $.
 Thus, when the uniform Glivenko–Cantelli property in  \eqref{GCC}  holds, it is always possible to re-define $\mathfrak{F}$, without affecting $\mathcal{D}_\mathfrak{F}$, in order to ensure that  (\ref{C3}) is verified, and hence also (\ref{C4}) as a consequence of the above discussion.

The above connection clarifies that Assumptions (\ref{C3})--(\ref{C4}) are arguably at the core  of the uniform convergence and concentration properties of discrepancy-based \textsc{abc} posteriors. Moreover, although \eqref{GCC} is inherently related to  (\ref{C3})--(\ref{C4}), such a uniform Glivenko--Cantelli property only states a convergence in probability result which can be crucially refined through the notion of Rademacher complexity under the more precise concentration inequalities  in Lemma~\ref{lemma_rademacher}. Recalling the theoretical results in Sections~\ref{sec_abc_lim} and \ref{sec_abc_con}, this allows not only to state convergence and concentration of specific \textsc{abc} posteriors, but also to clarify the factors governing these limiting properties and possibly derive the associated rates.

As clarified in Examples \ref{exm_wass1}--\ref{exm_sum1}, Assumptions (\ref{C3})--(\ref{C4}) can be generally verified for the key \textsc{ips} discrepancies presented in Examples~\ref{exm_wass}--\ref{exm_sum} leveraging known upper bounds on the Rademacher complexity, along with connections between this measure and other well-studied quantities in statistical learning theory. See, in particular, Chapter 4.3 of \citet{wainwright2019high} for an overview of several useful techniques for upper-bounding the Rademacher complexity via, e.g., the notion of polynomial discrimination and  \textsc{vc} dimension. The validity of (\ref{C3})--(\ref{C4}) for two other instances of  the \textsc{ips} class (i.e., the total variation distance and the Kolmogorov--Smirnov distance) is discussed in  Appendix A within the Supplementary Material. Notice that, albeit interesting, these two discrepancies are less common in \textsc{abc} implementations relative to Wasserstein distance, \textsc{mmd} and summary-based distances.

\begin{example}
{\em (Wasserstein-1 distance)}. \label{exm_wass1}
When $ \mathcal{Y} $ is a bounded subset of $\mathbb{R}^d$, Assumptions (\ref{C3})--(\ref{C4}) hold without further constraints under the Wasserstein-1 distance. In particular, (\ref{C3}) follows immediately from the definition  $ \mathfrak{F} = \{f : ||f||_L \leq 1 \} $, together with the fact that the diameter of $ \mathcal{Y}$ is finite in this case \citep[e.g.,][Remark 1.15]{villani2021topics}. Assumption (\ref{C4}) is instead a direct consequence of the  bounds in \citet{sriperumbudur2009integral}. Although it would be desirable to remove such a constraint on $ \mathcal{Y} $, it shall be emphasized that this condition is  ubiquitous in state-of-the-art  concentration results of empirical measures, under the Wasserstein distance, that are guaranteed to hold uniformly over $ \mathcal{P}(\mathcal{Y})$ \citep[e.g.,][]{tal1994,sriperumbudur2009integral,ram2017,weed2019sharp}. One possibility to preserve (\ref{C3})--(\ref{C4}) under the Wasserstein-1 distance beyond bounded $ \mathcal{Y} $ is to consider a variable transformation via a monotone function $g(\cdot)$ (e.g., logistic transform) mapping from $ \mathcal{Y}=\mathbb{R}^d$ to a bounded subset of $\mathbb{R}^d$. In the original unbounded $\mathcal{Y}$, this transformation induces a Wasserstein-1 distance based on a bounded $\bar{\rho}(x,x')=\rho(g(x),g(x'))$. As such,  when $ \mathcal{Y}$ is not a bounded subset of $\mathbb{R}^d$, defining $\mathfrak{F}$ as the class of Lipschitz functions with respect to $\bar{\rho}(x,x')=\rho(g(x),g(x'))$ satisfies  (\ref{C3}) and (\ref{C4}). However, this requires care and further research on how  $g(\cdot)$ affects the properties of the discrepancy. Alternatively, as shown in Proposition~\ref{corr_Wasserstein_unb} within Section~\ref{sec_wass_unb}, concentration results for Wasserstein-1 distance in unbounded $ \mathcal{Y} \subseteq \mathbb{R}^d$ can be still derived, but at the expense of some restrictions on $\mu \in \mathcal{P}(\mathcal{Y})$.
\end{example}

\begin{example} {\em (\textsc{mmd}).} \label{exm_mmd1}
	The properties of  \textsc{mmd}  inherently depend on the selected kernel $ k(\cdot,\cdot) $.  This is  evident from the  inequalities {$\mathfrak{R}_{\mu,n}(\mathfrak{F}) \leq [\mathbb{E}_{x} k(x,x)/n]^{1/2}$}, with $x \sim \mu$  \citep[see e.g., Lemma 22 in][]{bartlett2002rademacher}, and $|f(x)| \leq[ k(x,x)]^{1/2} ||f||_\mathcal{H} $ for every $x \in \mathcal{Y}$ \citep[see e.g., equation 16 in][]{hofmann2008kernel}. By these two inequalities, all bounded kernels, including, e.g., the commonly-implemented Gaussian {$\exp(-\|x-x'\|^2/\sigma^2)$} and the Laplace  {$\exp(-\|x-x'\|/\sigma)$} ones, ensure that Assumptions (\ref{C3}) and (\ref{C4}) are always met without requiring additional regularity conditions on $\mu  \in \mathcal{P}(\mathcal{Y})$, nor constraints on $\mathcal{Y}$. Instead, when $ k(\cdot,\cdot) $ is unbounded, the inequality $\mathfrak{R}_{\mu,n}(\mathfrak{F}) \leq [\mathbb{E}_{x} k(x,x)/n]^{1/2}$ is only informative for those $\mu$ such that $\mathbb{E}_{x} k(x,x) < \infty$, with $ x \sim \mu $, whereas the bound on $ |f(x)| $ does not guarantee that $\mathfrak{F}$ is a uniformly bounded class in general, unless additional conditions are made. Due to the relevance of these \textsc{mmd} instances and the direct connections with the classical summary-based \textsc{abc} implementations leveraging unbounded summaries, Proposition~\ref{corr_MMD_unb} in Section~\ref{sec_mmd} derives specific theory proving that concentration results similar to those in Theorem~\ref{thm_rademacher} can be derived, under different assumptions, including conditions on $\mu \in \mathcal{P}(\mathcal{Y})$, also for unbounded kernels.
\end{example}

\begin{example}\label{exm_sum1} {\em (Summary-based distance).}
As discussed in Example~\ref{exm_sum}, classical \textsc{abc} implementations relying on a finite set of summaries $f_1, \ldots, f_K$ with $K < \infty$, can be seen as a special case of \textsc{mmd} by letting $f(x) = [f_1(x), \ldots, f_K(x)]$ and $k(x,x') = \left<f(x),f(x')\right>$. Leveraging this bridge and the results for  \textsc{mmd} discussed in Example~\ref{exm_mmd1}, it is clear that, if $\sup_{x\in\mathcal{Y}} \left<f(x),f(x)\right>$ is finite --- i.e., the induced kernel is bounded --- then  (\ref{C3}) and (\ref{C4})  are satisfied without requiring  regularity conditions for $\mu$ or constrains on  $\mathcal{Y}$. While this result clarifies that \textsc{abc} with bounded summaries achieves uniform convergence and concentration, classical  \textsc{abc}  implementations often employ unbounded summaries, such as moments, i.e., $f(x) = [x, x^2 \ldots, x^K]$. In this case  (\ref{C3})  is not satisfied. In fact, recalling again   Example~\ref{exm_sum}, such a setting is a special case of \textsc{mmd} with  unbounded kernel and, hence, lacks guarantees that  (\ref{C3})--(\ref{C4})  hold, unless further conditions are imposed, e.g., on $\mathcal{Y}$. Nonetheless, this connection also clarifies that the concentration theory we derive in Proposition~\ref{corr_MMD_unb} for \textsc{mmd} with unbounded kernels directly applies to classical \textsc{abc} with unbounded summaries.
\end{example} 

Examples \ref{exm_wass1}--\ref{exm_sum1} show that (\ref{C3})--(\ref{C4}) can be realistically verified for the key instances of the \textsc{ips} class presented in Section~\ref{sec_abc}, and generally hold under either no additional conditions or for suitable constraints on $\mathcal{Y}$ which can be directly checked simply on the basis of the support of the data analyzed. From a practical perspective, this is an important gain relative to the need of verifying more sophisticated regularity conditions on the assumed model and on the unknown data generating process. Notice that the boundedness condition on $ \mathcal{Y} $  in Example~\ref{exm_wass1} interestingly  relates to Assumptions 1 and 2 of \citet{bernton2019approximate} which have been verified when $ \mathcal{Y} $ is a bounded subset of $\mathbb{R}^d$ by, e.g., \cite{weed2019sharp}. In this context, our Rademacher complexity perspective further refines the important results in \citet{bernton2019approximate} by clarifying that it is possible to derive convergence and concentration results for Wasserstein-\textsc{abc}  that are regulated by a known complexity measure and hold uniformly over the space of probability measures defined on a bounded $ \mathcal{Y} $. 

Notice that an alternative possibility to verify Assumptions 1--2 in \citet{bernton2019approximate}  is to leverage the results in \citet{fournier2015rate} who replace the boundedness condition in  \cite{weed2019sharp} with assumptions on the existence of exponential moments; see also  the supplementary materials in \citet{bernton2019approximate}. A similar direction within our Rademacher complexity framework would be to assume that the class of functions $\mathfrak{F}$ defining the Wasserstein distance admits a uniform Glivenko--Cantelli property over a subset $\tilde{\mathcal{P}}(\mathcal{Y})$ of $\mathcal{P}(\mathcal{Y})$ that comprises probability measures on $\mathcal{Y}$ meeting some suitable regularity conditions. Recalling the previous discussion on the connection among \eqref{GCC} and our assumptions, this would imply a relaxation of  (\ref{C3})--(\ref{C4}) allowing the theory in Sections~\ref{sec_abc_lim} and \ref{sec_abc_con} to still hold for statistical models and data generating processes belonging to  $\tilde{\mathcal{P}}(\mathcal{Y})$. Propositions \ref{corr_MMD_unb} and \ref{corr_Wasserstein_unb} explore results along these lines for  \textsc{mmd} with unbounded kernel in $\mathbb{R}^{d}$ (which also includes \textsc{abc} with unbounded summary statistics), and for the Wasserstein-1 distance in $\mathbb{R}^{d}$, respectively, which do not meet  (\ref{C3}) and (\ref{C4}) when $\mathcal{Y}$ is unbounded. These two propositions clarify that concentration results can still be stated under alternative, yet related, proofs based on suitable relaxations of (\ref{C3})--(\ref{C4}). Nonetheless, these relaxations require checking that $ \{ \mu_\theta: \theta \in \Theta  \subseteq \mathbb{R}^{p} \} $ and $\mu^*$ meet the conditions characterizing $\tilde{\mathcal{P}}(\mathcal{Y})$, which can be difficult since, again, $\mu^*$ is generally not known in practice. Conversely, when (\ref{C3})--(\ref{C4}) hold  (e.g., in \textsc{mmd} with bounded kernels and Wasserstein-1 distance in bounded subsets of $\mathbb{R}^d$) the convergence and concentration results in Section~\ref{sec_abc_lim}--\ref{sec_abc_con} are guaranteed without the need to worry about the peculiar properties of the assumed model and of the generally-unknown data generating process.


\section{Asymptotic properties of ABC with maximum mean discrepancy and Wasserstein-1 distance}
\label{sec_mmd_tot}

Sections~\ref{sec_mmd}--\ref{sec_wass_unb} specialize the theory derived within Section~\ref{sec_abc_posterior} to two remarkable distances in the \textsc{ips} class; namely \textsc{mmd} (which  includes distances among summaries~as~a~special case) and Wasserstein-1 distance, respectively. Recalling Examples~\ref{exm_wass1}--\ref{exm_sum1} these discrepancies are covered by the general results in Section~\ref{sec_abc_posterior}, under the assumption that the kernel~or the space $\mathcal{Y}$ are bounded. For completeness, we extend such concentration results also to situations where these two conditions are not met; see Propositions~\ref{corr_MMD_unb} and \ref{corr_Wasserstein_unb}.


\subsection{Asymptotic properties of ABC with maximum mean discrepancy}
\label{sec_mmd}
As discussed in Sections~\ref{sec_intro}--\ref{sec_abc_posterior}, \textsc{mmd} stands out as a prominent example of discrepancy within summary-free \textsc{abc}. Nonetheless, an in-depth and comprehensive study on the limiting properties of \textsc{mmd}-\textsc{abc}  is still lacking. In fact, no theory is available in the original article proposing \textsc{mmd}-\textsc{abc} methods \citep{park2016k2}, while convergence in the fixed $\varepsilon_n=\varepsilon$ and $n \rightarrow \infty$ regime is explored in  \citet{jiang2018approximate} but without conclusive results.  \citet{nguyen2020approximate} study convergence and concentration of \textsc{abc}  with the energy distance in both fixed and vanishing $\varepsilon_n$ settings. Recalling Example~\ref{exm_mmd}, the direct correspondence between \textsc{mmd} and the energy distance would allow to translate these results to the \textsc{mmd} framework. However, as highlighted by the same authors, the theory derived relies on  difficult-to-verify existence assumptions which yield bounds depending on control functions that are not made explicit.

Leveraging Theorems~\ref{conv}--\ref{thm_rademacher} and Corollary~\ref{conv_rad} along with the available upper bounds on the Rademacher complexity of \textsc{mmd} --- see Example~\ref{exm_mmd1} --- Corollaries \ref{corr_MMD_2}--\ref{corr_MMD} substantially refine and expand available knowledge on the limiting properties of \textsc{mmd}-\textsc{abc} with routinely-implemented bounded kernels. Crucially,  \textsc{mmd} with such kernels automatically satisfies (\ref{C3})--(\ref{C4}) without additional constraints on the model or on the data generating process. Notice that Corollaries \ref{corr_MMD_2}--\ref{corr_MMD} also hold for summary-based distances with bounded summaries as a direct consequence of the discussion in Examples~\ref{exm_sum} and \ref{exm_sum1}.

\begin{corollary}
	\label{corr_MMD_2}
	Consider the \textsc{mmd}  with a bounded kernel $ k(\cdot,\cdot) $ defined on $\mathbb{R}^d$, where $|k(x,x)| \leq 1$ for any $x \in \mathbb{R}^d$.  Then, under (\ref{C1}),  the acceptance probability $ p_n$ of the rejection-based \textsc{abc} routine employing discrepancy $\mathcal{D}_{\textsc{mmd}}$ and the threshold $\varepsilon$ satisfies
\begin{eqnarray*}
 p_n \rightarrow   \pi\{\theta: \mathcal{D}_{\textsc{mmd} }(\mu_\theta,\mu^* )\leq  \varepsilon\},
\end{eqnarray*}
almost surely with respect to $y_{1:n} \stackrel{\mbox{\normalfont \scriptsize i.i.d.}}{\sim} \mu^{*}$, as  $n \rightarrow \infty$, for any $\mu_\theta, \mu^* \in \mathcal{P}(\mathcal{Y}) $. Moreover, let $\tilde{\varepsilon}$ be defined as in Theorem~\ref{conv}. Then,  for any fixed $ \varepsilon>\tilde{\varepsilon}$, it holds that
  \begin{eqnarray*}
   \pi_n^{(\varepsilon)}(\theta)  \rightarrow \pi( \theta \mid \mathcal{D}_{\textsc{mmd}}(\mu_\theta,\mu^*) \leq \varepsilon ) \propto  \pi( \theta ) \mathds{1}\left\{\mathcal{D}_{\textsc{mmd}}(\mu_\theta,\mu^*) \leq \varepsilon \right\},
  \end{eqnarray*}
  almost surely with respect to $y_{1:n} \stackrel{\mbox{\normalfont \scriptsize i.i.d.}}{\sim} \mu^{*}$, as $n \rightarrow \infty$.
\end{corollary}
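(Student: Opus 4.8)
The plan is to derive Corollary~\ref{corr_MMD_2} as an immediate specialization of Theorem~\ref{conv} and Corollary~\ref{conv_rad}, once it has been checked that a bounded kernel with $|k(x,x)|\le 1$ forces Assumptions (\ref{C3}) and (\ref{C4}) to hold and, in addition, makes the constant $c_{\mathfrak{F}}$ of Theorem~\ref{conv} vanish. Throughout, $\mathcal{D}_{\textsc{mmd}}=\mathcal{D}_{\mathfrak{F}}$ with $\mathfrak{F}=\{f:\|f\|_{\mathcal{H}}\le 1\}$ as in Example~\ref{exm_mmd}; once (\ref{C3}) is available one may take $g\equiv 1$, so that $\mathfrak{F}\subseteq\mathfrak{B}_g$ and $\mathcal{D}_{\textsc{mmd}}$ is indeed an \textsc{ips} in the sense of Definition~\ref{def_IPS}.

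First I would verify (\ref{C3}). By the reproducing property of $\mathcal{H}$ one has $|f(x)|\le [k(x,x)]^{1/2}\|f\|_{\mathcal{H}}$ for every $x$ (the inequality recalled in Example~\ref{exm_mmd1}); hence, for every $f\in\mathfrak{F}$ and every $x$, $|f(x)|\le [k(x,x)]^{1/2}\le 1$, so $\mathfrak{F}$ is $b$--uniformly bounded with $b=1$. Next I would verify (\ref{C4}). Using $\mathfrak{R}_{\mu,n}(\mathfrak{F})\le[\mathbb{E}_x k(x,x)/n]^{1/2}$ with $x\sim\mu$ (again quoted in Example~\ref{exm_mmd1}), boundedness of $k$ gives $\mathfrak{R}_{\mu,n}(\mathfrak{F})\le n^{-1/2}$ \emph{uniformly} over $\mu\in\mathcal{P}(\mathcal{Y})$, so that
\begin{equation*}
\mathfrak{R}_{n}(\mathfrak{F})=\sup_{\mu\in\mathcal{P}(\mathcal{Y})}\mathfrak{R}_{\mu,n}(\mathfrak{F})\le n^{-1/2}\longrightarrow 0,\qquad n\to\infty,
\end{equation*}
which is exactly (\ref{C4}); in particular $c_{\mathfrak{F}}=4\limsup\mathfrak{R}_{n}(\mathfrak{F})=0$.

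For the first display I would invoke Theorem~\ref{conv}, whose hypotheses (\ref{C1}) and (\ref{C3}) now hold, and specialize the bound \eqref{eq_pABC}: since $c_{\mathfrak{F}}=0$, its lower and upper sides both collapse to $\pi\{\theta:\mathcal{D}_{\textsc{mmd}}(\mu_\theta,\mu^*)\le\varepsilon\}+o(1)$, and a squeeze yields $p_n\to\pi\{\theta:\mathcal{D}_{\textsc{mmd}}(\mu_\theta,\mu^*)\le\varepsilon\}$ almost surely as $n\to\infty$, for every fixed $\varepsilon>0$ and every $\mu_\theta,\mu^*\in\mathcal{P}(\mathcal{Y})$ --- no regularity on the model or on $\mu^*$ being needed, precisely because the bound on $\mathfrak{R}_n(\mathfrak{F})$ was uniform over $\mathcal{P}(\mathcal{Y})$. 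For the second display I would apply Corollary~\ref{conv_rad}, whose hypotheses are exactly (\ref{C1}), (\ref{C3}) and (\ref{C4}): for any $\varepsilon>\tilde{\varepsilon}$ it gives $\pi_n^{(\varepsilon)}(\theta)\to\pi(\theta\mid\mathcal{D}_{\textsc{mmd}}(\mu_\theta,\mu^*)\le\varepsilon)\propto\pi(\theta)\mathds{1}\{\mathcal{D}_{\textsc{mmd}}(\mu_\theta,\mu^*)\le\varepsilon\}$ almost surely with respect to $y_{1:n}\stackrel{\text{i.i.d.}}{\sim}\mu^*$.

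The argument is essentially bookkeeping, so there is no substantial obstacle. The only point that deserves care is to use the Rademacher--complexity bound in its uniform form, i.e.\ to exploit $|k(x,x)|\le 1$ to control $\mathfrak{R}_n(\mathfrak{F})=\sup_{\mu}\mathfrak{R}_{\mu,n}(\mathfrak{F})$ rather than merely $\mathfrak{R}_{\mu,n}(\mathfrak{F})$ for a fixed $\mu$; this is what allows the conclusions to hold over all of $\mathcal{P}(\mathcal{Y})$ and what distinguishes the bounded--kernel case from the unbounded one handled later in Proposition~\ref{corr_MMD_unb}.
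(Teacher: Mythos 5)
Your proposal is correct and follows essentially the same route as the paper: verify (\ref{C3}) with $b=1$ and (\ref{C4}) via $\mathfrak{R}_{n}(\mathfrak{F})\leq n^{-1/2}$ from the bounded--kernel inequalities of Example~\ref{exm_mmd1}, then obtain the first display from the bound \eqref{eq_pABC} in Theorem~\ref{conv} (the paper keeps the finite--$n$ radius $4/\sqrt{n}$ and passes to the limit, whereas you note $c_{\mathfrak{F}}=0$ and squeeze directly, a purely cosmetic difference) and the second display as a direct application of Corollary~\ref{conv_rad}.
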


The above result applies, e.g., to  routinely-implemented Gaussian $k(x,x')=\exp(-\|x-x'\|^2/\sigma^2)$ and Laplace $k(x,x')=\exp(-\|x-x'\|/\sigma)$ kernels on $\mathbb{R}^d$, which are both bounded by $1$, thereby implying $ ||f||_{\infty}\leq 1 $ and  $\mathfrak{R}_{n}(\mathfrak{F}) \leq n^{-1/2}$. These results are also crucial to prove the concentration statement in Corollary~\ref{corr_MMD} below. 

\begin{corollary}
	\label{corr_MMD}
	Consider the \textsc{mmd}  with a bounded kernel $ k(\cdot,\cdot) $ defined on $\mathbb{R}^d$, where $|k(x,x)| \leq 1$ for any $x \in \mathbb{R}^d$.  Then, for any $\mu_\theta, \mu^* \in \mathcal{P}(\mathcal{Y}) $, under (\ref{C1})--(\ref{C2}) and the settings of  Theorem~\ref{thm_rademacher}, with {$\bar{\varepsilon}_n=[\log (n)/n]^{1/2}$}, we have that 
	\begin{eqnarray*}
	\pi_n^{(\varepsilon^*+\bar{\varepsilon}_n)}\Bigl(\Bigl\{\theta:  \mathcal{D}_{\textsc{mmd}}(\mu_\theta,\mu^*) > \varepsilon^* + \Bigl(\frac{10}{3}+(L+2)^{1/2}\Bigr)\cdot \Bigl(\frac{\log n}{n}\Bigr)^{1/2}\Bigl\} \Bigr) \leq \frac{2 \cdot 3^L}{c_\pi n},
	\end{eqnarray*}
	 with $\mathbb{P}_{y_{1:n}}$--probability going to $1$ as $n \rightarrow \infty$.
\end{corollary}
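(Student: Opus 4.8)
The plan is to obtain Corollary~\ref{corr_MMD} as a direct specialization of Theorem~\ref{thm_rademacher} to \textsc{mmd} with a bounded kernel, taking the explicit threshold $\bar{\varepsilon}_n = [\log(n)/n]^{1/2}$, after which the stated bound is recovered by majorizing each of the three nonconstant terms in the radius of Theorem~\ref{thm_rademacher} by a common multiple of $[\log(n)/n]^{1/2}$. So there is no new analytic content beyond Theorem~\ref{thm_rademacher}; the work is the verification of its hypotheses for this discrepancy, the choice of $\bar{\varepsilon}_n$, and elementary bookkeeping.

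First I would check that the hypotheses of Theorem~\ref{thm_rademacher} are in force. Assumptions (\ref{C1})--(\ref{C2}) are assumed, so it remains to verify (\ref{C3})--(\ref{C4}), which is exactly the content of Example~\ref{exm_mmd1}: the reproducing property gives $|f(x)| \leq [k(x,x)]^{1/2}\|f\|_{\mathcal{H}} \leq 1$ for every $f$ with $\|f\|_{\mathcal{H}} \leq 1$, so $\mathfrak{F}$ is $b$--uniformly bounded with $b = 1$, which is (\ref{C3}); and $\mathfrak{R}_{\mu,n}(\mathfrak{F}) \leq [\mathbb{E}_x k(x,x)/n]^{1/2} \leq n^{-1/2}$ uniformly over $\mu \in \mathcal{P}(\mathcal{Y})$, hence $\mathfrak{R}_n(\mathfrak{F}) \leq n^{-1/2} \to 0$, which is (\ref{C4}). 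Next I would verify that $\bar{\varepsilon}_n = [\log(n)/n]^{1/2}$ satisfies the rate conditions of Theorem~\ref{thm_rademacher}: clearly $\bar{\varepsilon}_n \to 0$ and $n\bar{\varepsilon}_n^2 = \log n \to \infty$, while $\mathfrak{R}_n(\mathfrak{F}) \leq n^{-1/2}$ gives $\bar{\varepsilon}_n/\mathfrak{R}_n(\mathfrak{F}) \geq [\log n]^{1/2} \to \infty$ (and the ratio is trivially infinite if $\mathfrak{R}_n(\mathfrak{F}) = 0$).

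It then remains to substitute $b = 1$, $\mathfrak{R}_n(\mathfrak{F}) \leq n^{-1/2}$ and $\bar{\varepsilon}_n = [\log(n)/n]^{1/2}$ into the radius $\varepsilon^* + \tfrac{4\bar{\varepsilon}_n}{3} + 2\mathfrak{R}_n(\mathfrak{F}) + [\tfrac{2b^2}{n}\log\tfrac{n}{\bar{\varepsilon}_n^L}]^{1/2}$ of Theorem~\ref{thm_rademacher} and bound the three nonconstant terms. The first is $\tfrac{4}{3}[\log(n)/n]^{1/2}$; the second obeys $2\mathfrak{R}_n(\mathfrak{F}) \leq 2n^{-1/2} \leq 2[\log(n)/n]^{1/2}$ for $n$ large enough that $\log n \geq 1$; for the third, $\bar{\varepsilon}_n^L = [\log(n)/n]^{L/2}$ gives $\log(n/\bar{\varepsilon}_n^L) = (1 + L/2)\log n - (L/2)\log\log n \leq \tfrac{L+2}{2}\log n$ (using $L > 0$ from (\ref{C2}) and $\log\log n \geq 0$ for $n \geq 3$), so that $[\tfrac{2}{n}\log(n/\bar{\varepsilon}_n^L)]^{1/2} \leq (L+2)^{1/2}[\log(n)/n]^{1/2}$. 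Adding the coefficients, $\tfrac{4}{3} + 2 + (L+2)^{1/2} = \tfrac{10}{3} + (L+2)^{1/2}$, so the radius in Theorem~\ref{thm_rademacher} is at most $\varepsilon^* + [\tfrac{10}{3} + (L+2)^{1/2}][\log(n)/n]^{1/2}$. Since enlarging the radius only shrinks the excluded set, the $\pi_n^{(\varepsilon^*+\bar{\varepsilon}_n)}$--mass of $\{\theta: \mathcal{D}_{\textsc{mmd}}(\mu_\theta,\mu^*) > \varepsilon^* + [\tfrac{10}{3}+(L+2)^{1/2}][\log(n)/n]^{1/2}\}$ is dominated by the $\pi_n^{(\varepsilon^*+\bar{\varepsilon}_n)}$--mass of the larger set appearing in Theorem~\ref{thm_rademacher}, which is $\leq 2\cdot 3^L/(c_\pi n)$ with $\mathbb{P}_{y_{1:n}}$--probability tending to $1$.

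I do not expect a genuine obstacle here: the whole argument is a substitution followed by term-by-term majorization. The only point needing minor care is the logarithmic term, where one must retain the correct sign of $-(L/2)\log\log n$ (nonpositive for $n$ eventually, so dropping it enlarges the bound) and note that elementary inequalities such as $n^{-1/2} \leq [\log(n)/n]^{1/2}$ hold for all sufficiently large $n$, which suffices since the claim is asymptotic. Finally, the parallel statement for summary--based distances with bounded summaries follows verbatim through the identification of such distances with a bounded--kernel \textsc{mmd} recalled in Examples~\ref{exm_sum} and \ref{exm_sum1}.
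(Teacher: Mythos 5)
Your proposal is correct and follows essentially the same route as the paper's proof: plug $b=1$, $\mathfrak{R}_n(\mathfrak{F}) \leq n^{-1/2}$ and $\bar{\varepsilon}_n = [\log(n)/n]^{1/2}$ into Theorem~\ref{thm_rademacher} and majorize each term of the radius using $\log n \geq 1$ (and $\log\log n \geq 0$) for $n \geq 3$, which yields exactly the coefficient $\tfrac{10}{3}+(L+2)^{1/2}$. Your additional explicit verification of (\ref{C3})--(\ref{C4}) and of the rate conditions on $\bar{\varepsilon}_n$ is consistent with Example~\ref{exm_mmd1} and with what the paper leaves implicit.
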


Notice that   $\mathfrak{R}_{n}(\mathfrak{F}) \leq n^{-1/2}$ implies $\mathfrak{R}_{n}(\mathfrak{F}) \log \log(n) \leq \log \log(n)/n^{1/2} \leq  [\log(n)/n]^{1/2}$ for any $n \geq 1$. Hence, as a consequence of the previous discussion, the concentration rate is essentially minimized by setting $\bar{\varepsilon}_n=[\log (n)/n]^{1/2}$. This result interestingly aligns with the optimal rate derived by \citet{frazier2018asymptotic} for summary-based \textsc{abc} under sub-Gaussian assumptions. The reason for such an agreement is immediately clear after noticing  that  \textsc{mmd}  with bounded kernels includes, as a special case, \textsc{abc} with bounded summaries.

Corollaries~\ref{corr_MMD_2}--\ref{corr_MMD} are effective examples of the potentials of  Theorems~\ref{conv}--\ref{thm_rademacher} and Corollary~\ref{conv_rad},  which can be readily specialized to any discrepancy within the \textsc{ips} class. For example, in the context of \textsc{mmd} with Gaussian and Laplace kernels, Corollary~\ref{corr_MMD} ensures informative posterior  concentration  without requiring assumptions on $ \{ \mu_\theta: \theta \in \Theta  \subseteq \mathbb{R}^{p} \} $ or $\mu^*$. Similar results can be obtained for all \textsc{ips} discrepancies as long as  (\ref{C3})--(\ref{C4}) are satisfied and  $\mathfrak{R}_{n}(\mathfrak{F})$ admits explicit upper bounds.  For instance,  if $\mathcal{Y}$ is bounded, this is possible for the Wasserstein-1 distance in Example~\ref{exm_wass1}, leveraging the  bounds for  $\mathfrak{R}_{n}(\mathfrak{F})$ in \citet{sriperumbudur2009integral}. 

While  (\ref{C3}) and (\ref{C4}) hold for \textsc{mmd} with a bounded kernel without additional assumptions, the currently-available bounds on the Rademacher complexity ensure that \textsc{mmd} with an unbounded kernel meets the above conditions only under specific models and data generating processes, even within the i.i.d.\ setting.  In this context, it is however possible to revisit the results for the Wasserstein case in Proposition 3 of  \citet{bernton2019approximate} under the new Rademacher complexity framework introduced in the present article. In particular, as shown in Proposition~\ref{corr_MMD_unb}, under \textsc{mmd} with an unbounded kernel, the existence Assumptions 1 and 2 in \citet{bernton2019approximate} can be directly related to constructive conditions on the kernel, inherently related to our Assumption  (\ref{C4}). This in turn yields informative concentration inequalities that are reminiscent of those in Theorem~\ref{thm_rademacher} and Corollary~\ref{corr_MMD}. Notice that these inequalities also hold for summary-based \textsc{abc} with routinely-used unbounded summaries (e.g., moments) as a direct consequence of the discussion in Example~\ref{exm_sum1}.

\begin{proposition}
	\label{corr_MMD_unb}
	Consider the \textsc{mmd} with unbounded kernel $ k(\cdot,\cdot) $ on $\mathbb{R}^d$. 
	Assume (\ref{C1})--(\ref{C2}) along with (A1) $\mathbb{E}_{y} \left[ k(y,y) \right]<\infty$, (A2) $\int_{\Theta} \mathbb{E}_{z} \left[ k(z,z) \right] \pi({\rm d}\theta) <\infty$, and (A3) there exist constants $\delta_0>0$ and $c_0>0$ such that $\mathbb{E}_{z} \left[ k(z,z) \right]<c_0$ for any $\theta$ satisfying $ (\mathbb{E}_{z,z'}\left[ k(z,z') \right] - 2 \mathbb{E}_{z,y}\left[ k(y,z) \right] + \mathbb{E}_{y,y'}\left[ k(y,y') \right])^{1/2} \leq \varepsilon^* + \delta_0$, where $z,z' \sim \mu_\theta$ and $y,y' \sim \mu^*$. Then, when $n \rightarrow \infty$, $\bar{\varepsilon}_n \rightarrow 0$ and $n\bar{\varepsilon}^2_n \rightarrow \infty$,  for some $C \in (0, \infty)$ and any $M_n \in (0, \infty)$, it holds that 
	\begin{eqnarray*}
	\pi_n^{(\varepsilon^*+\bar{\varepsilon}_n)}\Bigl(\Bigl\{\theta: \mathcal{D}_{\textsc{mmd}}(\mu_\theta,\mu^*) > \varepsilon^* + \frac{4\bar{\varepsilon}_n}{3} + \Bigr(\frac{M_n}{n \bar{\varepsilon}_n^L }\Bigr)^{1/2} \Bigr\}\Bigr) \leq \frac{C}{M_n},
	\end{eqnarray*}
with $\mathbb{P}_{y_{1:n}}$--probability going to $1$ as $n \rightarrow \infty$.
\end{proposition}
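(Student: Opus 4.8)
The plan is to reproduce the skeleton of the proof of Theorem~\ref{thm_rademacher} (itself a variant of Proposition~3 in \citet{bernton2019approximate}), but to replace the two appeals to Lemma~\ref{lemma_rademacher} --- which rest on the $b$--uniform boundedness and uniform complexity control (\ref{C3})--(\ref{C4}) --- by a second--moment tail bound for \textsc{mmd} that only requires pointwise control of the kernel diagonal. Two elementary facts drive everything. First, for any $\mu_1,\mu_2\in\mathcal{P}(\mathcal{Y})$ one has $\mathcal{D}_{\textsc{mmd}}(\mu_1,\mu_2)^2 = \mathbb{E}_{x,x'}[k(x,x')] - 2\mathbb{E}_{x,y}[k(x,y)] + \mathbb{E}_{y,y'}[k(y,y')]$ with $x,x'\sim\mu_1$, $y,y'\sim\mu_2$, so the constraint in (A3) is exactly $\mathcal{D}_{\textsc{mmd}}(\mu_\theta,\mu^*)\leq\varepsilon^*+\delta_0$. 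Second, a direct computation gives $\mathbb{E}_{x_{1:n}}[\mathcal{D}_{\textsc{mmd}}(\hat{\mu}_{x_{1:n}},\mu)^2] = n^{-1}\bigl(\mathbb{E}_{x}[k(x,x)] - \mathbb{E}_{x,x'}[k(x,x')]\bigr) \leq n^{-1}\mathbb{E}_{x}[k(x,x)]$ whenever $\mathbb{E}_{x}[k(x,x)]<\infty$; combined with Markov's inequality this yields, for every such $\mu$ and every $t>0$,
\begin{eqnarray*}
\mathbb{P}_{x_{1:n}}\bigl[\mathcal{D}_{\textsc{mmd}}(\hat{\mu}_{x_{1:n}},\mu) \geq t\bigr] \leq \frac{\mathbb{E}_{x}[k(x,x)]}{n t^2},
\end{eqnarray*}
which is the sole concentration tool used below.

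First I would control the observed--data fluctuation: by (A1) and the display above with $\mu=\mu^*$ and $t=\bar{\varepsilon}_n/3$, the event $\mathcal{E}_n := \{\mathcal{D}_{\textsc{mmd}}(\hat{\mu}_{y_{1:n}},\mu^*)\leq\bar{\varepsilon}_n/3\}$ has $\mathbb{P}_{y_{1:n}}$--probability at least $1 - 9\mathbb{E}_{y}[k(y,y)]/(n\bar{\varepsilon}_n^2)$, which tends to one under the rate condition $n\bar{\varepsilon}_n^2\to\infty$ already imposed in Theorem~\ref{thm_rademacher}; all subsequent bounds are established on $\mathcal{E}_n$. Next I would write the \textsc{abc} posterior mass of $A_n := \{\theta:\mathcal{D}_{\textsc{mmd}}(\mu_\theta,\mu^*) > \varepsilon^* + \tfrac{4\bar{\varepsilon}_n}{3} + [M_n/(n\bar{\varepsilon}_n^L)]^{1/2}\}$ as the ratio of $\int_{A_n}\pi(d\theta)\,\mathbb{P}_{z_{1:n}\mid\theta}[\mathcal{D}_{\textsc{mmd}}(\hat{\mu}_{z_{1:n}},\hat{\mu}_{y_{1:n}})\leq\varepsilon^*+\bar{\varepsilon}_n]$ to the same integral over $\Theta$, and bound the two pieces separately.

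For the numerator, on $\mathcal{E}_n$ and for $\theta\in A_n$, acceptance together with the triangle inequality \eqref{ineq} and $\mathcal{D}_{\textsc{mmd}}(\hat{\mu}_{y_{1:n}},\mu^*)\leq\bar{\varepsilon}_n/3$ forces $\mathcal{D}_{\textsc{mmd}}(\hat{\mu}_{z_{1:n}},\mu_\theta) \geq [M_n/(n\bar{\varepsilon}_n^L)]^{1/2}$; the tail bound with $\mu=\mu_\theta$ and then integration over $\Theta$ give a numerator at most $(\bar{\varepsilon}_n^L/M_n)\int_{\Theta}\mathbb{E}_{z}[k(z,z)]\,\pi(d\theta)$, finite by (A2). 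For the denominator, restrict the integral to $B_n := \{\theta:\mathcal{D}_{\textsc{mmd}}(\mu_\theta,\mu^*)\leq\varepsilon^*+\bar{\varepsilon}_n/3\}$, which for $n$ large enough that $\bar{\varepsilon}_n/3\leq\delta_0$ lies inside the set in (A3), so $\mathbb{E}_{z}[k(z,z)]<c_0$ uniformly over $B_n$. On $\mathcal{E}_n$ and for $\theta\in B_n$, the triangle inequality shows $\mathcal{D}_{\textsc{mmd}}(\hat{\mu}_{z_{1:n}},\mu_\theta)\leq\bar{\varepsilon}_n/3$ implies acceptance, so the inner probability is at least $1-9c_0/(n\bar{\varepsilon}_n^2)\geq\tfrac12$ for $n$ large; hence the denominator is at least $\tfrac12\pi(B_n)\geq (c_\pi/(2\cdot 3^L))\,\bar{\varepsilon}_n^L$ by (\ref{C2}). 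Dividing, the posterior mass of $A_n$ is at most $C/M_n$ with $C := (2\cdot 3^L/c_\pi)\int_{\Theta}\mathbb{E}_{z}[k(z,z)]\,\pi(d\theta)\in(0,\infty)$, valid on $\mathcal{E}_n$, whose probability tends to one.

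The main obstacle --- and the only place where the unboundedness of $k$ really bites --- is the lower bound on the denominator: a priori $\mathbb{E}_{z}[k(z,z)]$, which governs how often a ``good'' $\theta$ produces an accepted synthetic sample, could diverge as $\mathcal{D}_{\textsc{mmd}}(\mu_\theta,\mu^*)\downarrow\varepsilon^*$, collapsing the bound; assumption (A3) is precisely what rules this out, by capping $\mathbb{E}_{z}[k(z,z)]$ uniformly on a fixed neighborhood of the minimizing discrepancy, while (A1) and (A2) play the analogous but easier roles for the observed sample and for the numerator integral. A secondary point to verify is that the various ``for $n$ large enough'' thresholds ($\bar{\varepsilon}_n/3\leq\delta_0$, $9c_0/(n\bar{\varepsilon}_n^2)\leq\tfrac12$, $\pi(B_n)>0$) are uniform in $\theta$, which is immediate. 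The statement with $n/\bar{\varepsilon}_n^L$ and $c_\pi n$ replaced by $M_n/\bar{\varepsilon}_n^L$ and $c_\pi M_n$, as in Theorem~\ref{thm_rademacher}, is obtained without change since $M_n$ only enters through the radius term and the final ratio.
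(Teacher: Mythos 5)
Your proposal is correct, and its first half is exactly the paper's computation: the same feature--map identity $\mathcal{D}^2_{\textsc{mmd}}(\hat{\mu}_{x_{1:n}},\mu)=\|\frac{1}{n}\sum_i\phi(x_i)-\mathbb{E}\phi(x)\|^2_{\mathcal{H}}$, the same Markov/second--moment tail bound $\mathbb{P}[\mathcal{D}_{\textsc{mmd}}(\hat{\mu}_{x_{1:n}},\mu)>t]\leq \mathbb{E}_x[k(x,x)]/(nt^2)$, and the same reading of (A3) as a uniform cap on $\mathbb{E}_z[k(z,z)]$ over an \textsc{mmd}--ball of radius $\varepsilon^*+\delta_0$ around $\mu^*$. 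Where you diverge is the second half: the paper stops once it has identified $f_n(\bar{\varepsilon}_n)=1/(n\bar{\varepsilon}_n^2)$ and $c(\theta)=\mathbb{E}_z[k(z,z)]$, verifies Assumptions 1--2 of \citet{bernton2019approximate}, and then invokes their Proposition~3 with $R=M_n$ as a black box; you instead re--run the numerator/denominator argument of Theorem~\ref{thm_rademacher} by hand, with the Markov bound replacing Lemma~\ref{lemma_rademacher} at both appeals. Your route is self--contained, makes the constant explicit ($C=(2\cdot 3^L/c_\pi)\int_\Theta\mathbb{E}_z[k(z,z)]\,\pi(d\theta)$, where the paper only asserts ``some $C$''), and makes visible exactly where each of (A1), (A2), (A3) and (\ref{C2}) enters (event $\mathcal{E}_n$, numerator integral, uniform lower bound on the acceptance probability over $B_n$, prior mass of $B_n$); the paper's route is shorter and transfers the bookkeeping to the cited result. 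One point to flag: you invoke $n\bar{\varepsilon}_n^2\to\infty$, which the proposition does not state explicitly (it only says $\bar{\varepsilon}_n\to 0$); this is the same ``$\bar{\varepsilon}_n\to 0$ slowly enough'' condition that is implicit in the paper's appeal to Proposition~3 of \citet{bernton2019approximate} and in the subsequent choice $\bar{\varepsilon}_n=(M_n/n)^{1/(2+L)}$, so it is not a defect relative to the paper's own proof, but it would be cleaner to state it as a hypothesis of your argument rather than importing it from Theorem~\ref{thm_rademacher}.
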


A popular example of unbounded kernel is provided by the polynomial one, which is defined as $k(x,x') = (1+a\left<x,x'\right>)^{q}$ for some integer $q\in\{2,3,\dots\}$, and constant $a>0$. Under such a kernel, it can be easily shown that if $\mathbb{E}_{y}[\|y \|^q] < \infty$, and $\theta\mapsto \mathbb{E}_{z}(\|z\|^q)  $ is $\pi$-integrable, then Assumptions $(A1)$--$(A3)$ are satisfied.  The latter conditions essentially require that the kernel has finite expectation under both $\mu^*$ and $\mu_\theta$ for suitable $\theta \in \bar{\Theta} \subseteq \Theta$, and is uniformly bounded for those $ \mu_\theta $ close to $ \mu^* $. Recalling Example~\ref{exm_mmd} and the bound $\mathfrak{R}_{\mu,n}(\mathfrak{F}) \leq [\mathbb{E}_{x \sim \mu} k(x,x)/n ]^{1/2}$, $(A1)$--$(A3)$ are inherently related to (\ref{C4}), which however additionally requires that these expectations are finite for any $\mu \in \mathcal{P}(\mathcal{Y})$. Notice that while the concentration result in Proposition~\ref{corr_MMD_unb} does not make explicit the dependence on the regularity of the different moments \citep[e.g.,][]{frazier2018asymptotic}, as clarified in the polynomial kernel example, such a dependence on $q$ is embedded within the definition of the kernel, which in turn enters the expression for $\mathcal{D}_{\textsc{mmd}}$. As such, translating the results in Proposition~\ref{corr_MMD_unb} from our overarching focus on the space of distributions to the one of parameters would make this dependence explicit. Such a mapping is challenging to obtain analytically in general settings. Hence, further refinements along these lines are left for future research.

Before moving to the Wasserstein-1 distance, notice that by Proposition~\ref{corr_MMD_unb} a sensible setting for  $\bar{\varepsilon}_n$ in the unbounded-kernel case would be {$\bar{\varepsilon}_n=(M_n/n)^{1/(2+L)}$}. This yields 
\begin{eqnarray*}
{\pi_n^{(\varepsilon^*+\bar{\varepsilon}_n)}}(\{\theta:  \mathcal{D}_{\textsc{mmd}}(\mu_\theta,\mu^*) > \varepsilon^* + (7/3)(M_n/n)^{1/(2+L)}\}) \leq C/M_n,
\end{eqnarray*}
 which is essentially the tighter possible order of magnitude for the bound. In the unbounded-kernel setting, $M_n=n$ would not be suitable, but any $M_n\rightarrow \infty$ slower than $n$ can work; e.g., $M_n=n^{1/2}$ yields {$  \pi_n^{(\varepsilon^*+\bar{\varepsilon}_n)}(\{\theta:  \mathcal{D}_{\textsc{mmd}}(\mu_\theta,\mu^*) > \varepsilon^* + (7/3)(1/n)^{1/(2L+4)} \}) \leq C/n^{1/2}.$} 

\subsection{Asymptotic properties of \textsc{abc} with Wasserstein-1 distance}
\label{sec_wass_unb}
As pointed out earlier, the  general results in Section~\ref{sec_abc_posterior} apply directly to the Wasserstein-1 distance ($\mathcal{D}_{\textsc{wass}}$) when the data space $\mathcal{Y}$ is bounded. As clarified in Proposition~\ref{corr_Wasserstein_unb}, if $\mathcal{Y}$ is unbounded, concentration results can still be derived for $\mathcal{D}_{\textsc{wass}}$, but at the expense of regularity conditions on  $ \{ \mu_\theta: \theta \in \Theta  \subseteq \mathbb{R}^{p} \} $ and $\mu^*$.  These results and the associated proof mirror those for the \textsc{mmd} with unbounded kernel in Proposition~\ref{corr_MMD_unb}, and rely on  exponential moment assumptions that allow to leverage the recent concentration bounds for $\mathcal{D}_{\textsc{wass}}$ in \citet{lei2020w}.
\begin{proposition}
	\label{corr_Wasserstein_unb}
	Consider the Wasserstein-1 distance $\mathcal{D}_{\textsc{wass}}$ (see Example~\ref{exm_wass}) on $\mathcal{Y}=\mathbb{R}^d$, with ground distance $\rho(x,x')=\|x-x'\|$.
	Besides (\ref{C1})--(\ref{C2}), assume there exists a small enough $c>0$ such that 
	(A1')~$ \ \mathbb{E}_y(\exp(c \|y\|)) <\infty $ and 
	(A2')~${\sup_{\mu_{\theta},\theta\in\Theta}} \mathbb{E}_z(\exp(c \|z\|)) <~\infty $,  where $z \sim \mu_\theta$ and $y \sim \mu^*$. Then, when $n \rightarrow \infty$ and $\bar{\varepsilon}_n \rightarrow 0$ such that  $n\bar{\varepsilon}^2_n \rightarrow \infty$ and also {$ n^{-1/ \max(d,3)} \ll \bar{\varepsilon}_n$}, for some $C \in (0, \infty)$ and any $M_n \in (0, \infty)$, 
	it holds that
	\begin{eqnarray*}
	\pi_n^{(\varepsilon^*+\bar{\varepsilon}_n)}\Bigl(\Bigl\{\theta: \mathcal{D}_{\textsc{wass}}(\mu_\theta,\mu^*) > \varepsilon^* + \frac{4\bar{\varepsilon}_n}{3} + \Bigl( \frac{1}{c'n}  \log\frac{M_n}{\bar{\varepsilon}_n^L} \Bigr)^{1/2}  + c_1 n^{-{1}/{ \max(d,3) }} \Bigr\}\Bigr) \leq \frac{C}{M_n},
	\end{eqnarray*}
with $\mathbb{P}_{y_{1:n}}$--probability going to $1$ as $n \rightarrow \infty$.
\end{proposition}

The results and proof of Proposition~\ref{corr_Wasserstein_unb} are inherently related to  those in Section 2.1 of the supplementary materials in \citet{bernton2019approximate}. However, rather than leveraging the concentrations bounds by \cite{fournier2015rate}, Proposition~\ref{corr_Wasserstein_unb} exploits the more recent ones in \citet{lei2020w} under the exponential moment assumptions in (A1')--(A2') that are generally satisfied for sub-exponential random variables, including sub-Gaussian ones. We refer to   \citet{lei2020w} for a more detailed discussion on the advantages of these  bounds compared to those in \cite{fournier2015rate}. Notice that, under Proposition~\ref{corr_Wasserstein_unb}, setting  $\bar{\varepsilon}_n = n^{-1/\max(d,3)} \sqrt{\log(n) }$ is an admissible choice to ensure concentration, but the induced rate would be worse than the one in Theorem~\ref{thm_rademacher}. Deriving optimal rates for the Wasserstein-1 distance in unbounded $\mathcal{Y}$ is not the scope of Proposition~\ref{corr_Wasserstein_unb}, whose aim is, instead,  to clarify that concentration beyond the settings considered in Section~\ref{sec_abc_posterior} can still be achieved, but at the cost of regularity conditions on  $ \{ \mu_\theta: \theta \in \Theta  \subseteq \mathbb{R}^{p} \} $ and $\mu^*$ that do not allow to state results uniformly over $\mathcal{P}(\mathcal{Y})$. As such, we leave questions on the optimality of the bounds for the Wasserstein distance in unbounded spaces $\mathcal{Y}$ as future research,  and refer to  \citet{bernton2019approximate} for more in-depth and specialized results on the properties of \textsc{abc} under such a distance.


\vspace{10pt}
\section{Illustrative simulation in i.i.d.\ settings}
\label{sec_empirical}
Let us illustrate the theory we derived in the previous sections through a simple, yet insightful, simulation in i.i.d.\ settings (see Appendix~C in the Supplementary Material for empirical results under non-i.i.d.\ regimes). Note that several empirical studies have already compared the performance of summary-free \textsc{abc} under different discrepancies and complex non-i.i.d.\ models. Recalling   \citet{drovandi2022comparison}, all these analyses clarify the practical feasibility of \textsc{abc} based on various discrepancies, including those within the \textsc{ips} class,  and in several i.i.d.\ and non-i.i.d.\ scenarios which require  \textsc{abc} procedures. This feasibility  is also supported by recent softwares \citep[e.g.,][]{Dutta2021}.

Rather than replicating the available  studies on benchmark examples, we complement current  empirical evidence on summary-free \textsc{abc} by focusing on a misspecified and contaminated scenario that clarifies the possible challenges in convergence and concentration encountered even in basic  i.i.d.\ settings. As clarified in Table~\ref{tab:caption} and Figure~\ref{figure:post}, this scenario also showcases a key consequence of the novel theoretical results  in Sections~\ref{sec_abc}--\ref{sec_mmd_tot}. Namely that effective \textsc{ips} discrepancies with guarantees of uniform convergence and concentration are a safe and sensibile choice in the absence of knowledge on the specific properties of $\mu^*$. 

 \begin{table}[b]
	\centering
		\caption{Concentration and runtimes in seconds (for a single discrepancy evaluation) of \textsc{abc} under \textsc{mmd} with Gaussian kernel, Wasserstein-1 distance, summary-based distance (mean) and \textsc{kl}  divergence for a misspecified Huber contamination model with a varying $\alpha\in\{0.05, 0.10, 0.15\}$. {$ \textsc{mse}= \hat{\mathbb{E}}_{\mu^*}[\hat{\mathbb{E}}_{\textsc{abc}}(\theta-\theta_0)^2]$}.}
	\begin{tabular}{lcccc}
		\hline
		& \textsc{mse} ($\alpha=0.05$)& \textsc{mse} ($\alpha=0.10$) & \textsc{mse} ($\alpha=0.15$)& time  \\
		\hline
		(\textsc{ips}) \textsc{mmd}     &{\bf 0.024} & {\bf 0.027} &   {\bf 0.031}& $<$ 0.01''  \\
		(\textsc{ips})  Wasserstein-1      &  0.027  &  0.067 &0.122 & $<$ 0.01''  \\
		(\textsc{ips})  summary (mean)    &  0.841  &  2.648 &2.835 & $<$ 0.01''\\
			\cline{1-5}
		(non-\textsc{ips}) \textsc{kl}      &    0.073 &  0.076  &0.077   & $<$ 0.01'' \\
		\hline
	\end{tabular}
	\label{tab:caption}
\end{table}

Recalling Example~\ref{exm_huber}, we consider, in particular, an uncontaminated bivariate Student's $t$ distribution  $\mu_{\theta_0}$ with $3$ degrees of freedom, mean vector $(1,1)$, and dispersion matrix having entries $\sigma_{11}=\sigma_{22}=1$ and $\sigma_{12}=\sigma_{21}=0.5.$ Such an uncontaminated data generating process is then perturbed with three different levels $\alpha_n=\alpha\in\{0.05, 0.10, 0.15\}$ of contamination from a Student's~$t$ distribution $\mu_C$ having the same parameters as $\mu_{\theta_0}$, except for the mean vector which is set to $(20,20)$. As such, the data $y_{1:n}$ from $\mu^* = (1-\alpha) \mu_{\theta_0} + \alpha \mu_C$ are obtained  by sampling $n=100$ draws from the bivariate Student's~$t$ $\mu_{\theta_0}$ and then replacing $ (100 \cdot \alpha) \%$ of these draws with  samples from the contaminating Student's~$t$ distribution $\mu_C$. For Bayesian inference, we focus on the parameter $\theta \in \mathbb{R}$ defining the unknown location vector $[1,1]^{\intercal}\theta$, and consider a misspecified bivariate Gaussian model $\mu_\theta$ with mean vector $[1,1]^{\intercal}\theta$ and known covariance matrix coinciding with that of the uncontaminated Student's~$t$  data generating process. Such a choice is interesting in providing a model that is slightly misspecified even when the data are not contaminated. Notice that, although this model does not necessarily require an \textsc{abc} approach to allow Bayesian inference, as discussed above, the issues outlined in Table~\ref{tab:caption} and Figure~\ref{figure:post} for certain discrepancies, even in such a basic example, provide a useful empirical insight that complements those in the extensive quantitive studies already available in the literature for more complex settings.

\begin{figure}[t]
	\begin{center} 
		\includegraphics[scale=0.73]{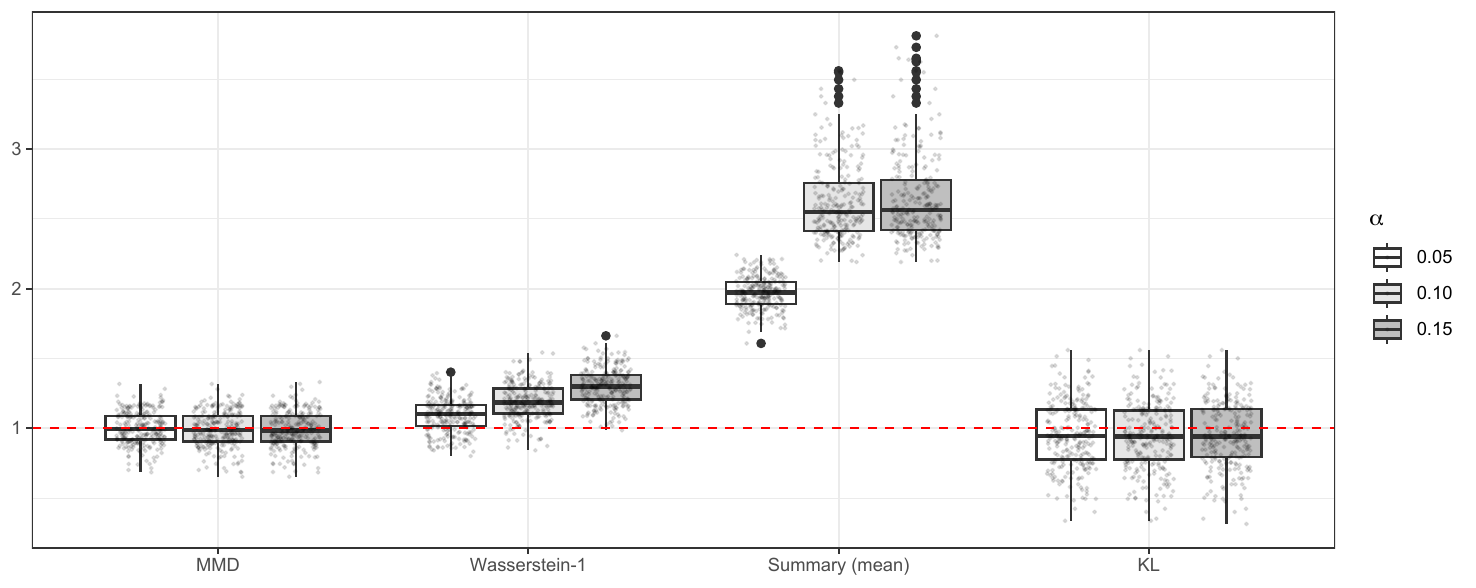}
		\caption{\footnotesize Graphical representation of the \textsc{abc} posterior for $\theta$ under \textsc{mmd} with Gaussian kernel, Wasserstein-1 distance, summary-based distance (mean) and \textsc{kl}  divergence for one simulated dataset from a misspecified Huber contamination model with a varying $\alpha\in\{0.05, 0.10, 0.15\}$; see also Example~\ref{exm_huber} for details. The red dashed line corresponds to the location parameter $\theta_0=1$ of the uncontaminated model. }
		\label{figure:post}
	\end{center}
\end{figure}

In performing \textsc{abc} under the above model and for the different discrepancies of interest, we employ rejection-based \textsc{abc}  with $m = n=100$ and a $\mbox{N}(0,1)$ prior for $\theta$. Following the standard practice in comparing discrepancies \citep{drovandi2022comparison}, we specify a common budget of $T=25{,}000$ simulations and define the \textsc{abc} threshold  to retain, for every discrepancy analyzed, the $1\%$ (i.e., $250$) values of $\theta$ that generated the synthetic data closest to the observed ones, under such a discrepancy. Although the theory in Section~\ref{sec_abc_posterior} can potentially guide the choice of the   threshold, similar guidelines are not yet available beyond the \textsc{ips} class. Hence, to ensure a fair assessment across all the discrepancies, we rely on the recommended practice in comparing the different \textsc{abc} implementations. As clarified in Table~\ref{tab:caption} and Figure~\ref{figure:post}, the discrepancies assessed are those most commonly used in \textsc{abc}, namely, the Wasserstein-1 distance as in Example~\ref{exm_wass}  \citep{bernton2019approximate}, \textsc{mmd} with Gaussian kernel defined in  Example~\ref{exm_mmd}  \citep{park2016k2,nguyen2020approximate}, and a summary-based distance leveraging the sample mean as the summary statistics (see Example~\ref{exm_sum}). For comparison, we also consider a popular discrepancy that does not belong to the \textsc{ips} class, i.e., the Kullback--Leibler divergence \citep{jiang2018approximate}. All these discrepancies can be effectively implemented in \texttt{R}, leveraging, in particular, the libraries \texttt{transport} and \texttt{eummd}. For \textsc{mmd}, the choice of the length-scale parameter $\sigma^2$ is based on the median heuristic \citep{gret2012} automatically implemented in the function \texttt{mmd}.

Leveraging the samples  from the \textsc{abc} posterior for $\theta$ under each discrepancy, we first estimate the mean squared error $\hat{\mathbb{E}}_{\textsc{abc}}(\theta-\theta_0)^2$ with respect to the location $\theta_0=1$ of the uncontaminated Student's~$t$, under the different discrepancies, thereby assessing performance  with a focus on a common metric on the space of parameters.  Table \ref{tab:caption} displays these error estimates, under each discrepancy and level of contamination, further averaged over $50$ simulated datasets to obtain a comprehensive assessment based on  replicated studies.  The results in Table \ref{tab:caption} together with the graphical representation in Figure \ref{figure:post} of the \textsc{abc} posteriors for one simulated dataset from the contaminated model with varying $\alpha\in\{0.05, 0.10, 0.15\}$ effectively illustrate an important practical consequence of the theory derived in Sections~\ref{sec_abc}--\ref{sec_mmd_tot}. Namely, when one is not sure, or cannot check, whether the assumed statistical model and/or the underlying data generating process  meet specific regularity conditions, effective \textsc{ips} discrepancies with guarantees of uniform convergence and concentration (e.g., \textsc{mmd} with bounded kernel) provide a robust and safe  default choice. When the level of contamination is mild ($\alpha=0.05$),  Wasserstein-\textsc{abc} achieves comparable concentration. This result aligns with Proposition~\ref{corr_Wasserstein_unb}, which however depends on the specific properties of the model and the underlying data generating process. Such a dependence is evident when the amount of contamination grows to $\alpha=0.10$ and $\alpha=0.15$. In these settings, the performance of Wasserstein-\textsc{abc} tends to slightly deteriorate, mainly due to a location shift in the induced posterior. This shift is even more evident for summary-based \textsc{abc} relying on the sample mean, that is not robust to location contaminations.  Conversely, the Kullback--Leibler divergence preserves robustness but at the expense of a lower concentration. Note that all the discrepancies analyzed in Table \ref{tab:caption} and Figure \ref{figure:post} are well-defined under the Student's~$t$ and Gaussian distributions considered in this simulation study. In particular, since the Student's $t$ has $3$ degrees of freedom, its mean and variance  are finite. Thus, Wasserstein-1 is well-defined for both the assumed model and the underlying data generating process.

Considering the running times, as displayed in Table \ref{tab:caption} all the discrepancies under analysis can be evaluated in the order of milliseconds for a sample of size $n=m=100$  on a standard laptop. This enables scalable and effective \texttt{R} implementations.


\vspace{13pt}
\section{Discussion}
\label{sec_discussion}
This article provides theoretical advancements with respect to the recent literature on the asymptotic properties of discrepancy-based \textsc{abc} posteriors by connecting these properties with the  behavior of  the Rademacher complexity associated with the chosen discrepancy. As clarified in the article and in the Supplementary Material, although the Rademacher complexity has never been considered in \textsc{abc}, this notion yields a powerful and promising perspective to derive general, informative and uniform convergence and concentration properties. 

While the above contribution already provides key advancements, the proposed perspective based on Rademacher complexity  has broader scope and sets the premises for additional future research. For example, as clarified in this article, any novel result and bound on the Rademacher complexity of specific discrepancies can be directly applied to \textsc{abc} theory through our framework. This may yield tighter or  more explicit bounds, possibly holding under milder assumptions and more general discrepancies.   For instance, to  our knowledge, informative bounds for the Rademacher complexity of the Wasserstein-1 distance are currently  available only  for bounded $\mathcal{Y}$ and, hence, it would be of interest to leverage future findings on the  unbounded $\mathcal{Y}$ case to further refine Proposition~\ref{corr_Wasserstein_unb} and  broaden the range of models for which our theory, when specialized to  Wasserstein-1 distance, applies. To this end, it might also be promising to explore  results on local Rademacher complexities \citep[e.g.,][]{bartlett2005local}, along with the proposed variable transformation strategy in Example~\ref{exm_wass1}. This solution requires the choice of a mapping $g(\cdot)$ which clearly influences the learning properties  of the induced distance  and, as such, requires further investigation. Finally, although Appendix~C in the Supplementary Material provides important extensions to non-i.i.d. settings, other relaxations beyond $\beta$-mixing processes could be of interest, such as, for instance, the case of independent but not identically distributed data. This setting can be addressed via the residual reconstruction strategy \citep[e.g.,][Section 4.2.3.]{bernton2019approximate} that would imply studying discrepancies among empirical distributions of residuals, for which i.i.d. assumptions are again  reasonable.

Although our focus is on the active convergence and concentration theory for discrepancy-based \textsc{abc}, other properties such as accuracy in uncertainty quantification of credible intervals and the limiting shapes of \textsc{abc}  posteriors, in correctly-specified models, have attracted interest in the context of summary-based and summary-free  \textsc{abc} \citep[][]{frazier2018asymptotic,frazier2020robust,wang2022}. While this direction goes beyond the scope of our article, extending and unifying such results, as done for concentration properties, is also of interest. 

While the \textsc{ips} class is broad, it  does not cover all discrepancies employed in \textsc{abc}. For example, the \textsc{kl} divergence \citep{jiang2018approximate} and the Hellinger distance \citep{frazier2020robust} are not \textsc{ips}, but rather belong to the class of $f$-divergences.  While this latter family has important differences relative to the \textsc{ips} class, it is worth investigating $f$-divergences in the light of our results. To accomplish this goal, an option is to exploit the unified treatment  of these two classes in, e.g., \citet{agrawal2021optimal} and \citet{birrell2022f}. More generally, our results could also stimulate methodological and theoretical advancements in  generalized likelihood-free Bayesian inference via discrepancy-based pseudo-posteriors \citep[e.g.,][]{bissiri2016general,jewson2018principles,miller2019robust,cherief2020mmd,matsubara2021robust}. The recent contribution by \citet{frazier2024} establishes, among other key results, important connections between such a latter framework and \textsc{abc} that could facilitate these advancements.

Finally, notice that, for the sake of simplicity and ease of  comparison with related studies, we have focused on rejection \textsc{abc}, and constrained the number $ m $ of synthetic samples to be equal to the sample size $ n $ of the observed data. While these settings are standard in state-of-the-art theory \citep[][]{bernton2019approximate,frazier2020robust}, other \textsc{abc} routines and alternative scenarios where $ m $ grows, e.g., sub-linearly, with $ n $ deserve further investigation. This latter regime would be  of interest in settings where the simulation of synthetic data is computationally expensive.


 \begin{acks}[Acknowledgments]
 We are grateful to the Editor, the Associate Editor and the referees for the constructive feedbacks, which helped us in improving the preliminary version of the article. 
  \end{acks}

\vspace{90pt}
\changefontsizes{14pt}

\begin{center}
\textcolor{white}{.}
\\
\textcolor{white}{.}
\\
\textcolor{white}{.}
\\
\textcolor{white}{.}
\\
\large
\uppercase
{\bf Supplementary Material to ``Concentration of discrepancy-based approximate Bayesian 	\\ computation  via Rademacher complexity"}
\end{center}

\numberwithin{equation}{section}
\numberwithin{table}{section}
\numberwithin{figure}{section}

\vspace{50pt}

\appendix
\section{Additional Integral Probability Semimetrics}\label{app1}
\vspace{10pt}

\subsection{Two additional examples of   \textsc{ips} discrepancies}

While \textsc{mmd}, Wasserstein-1 and summary-based distances provide the most notable examples of \textsc{ips} discrepancies employed in \textsc{abc},  two other relevant   \textsc{ips} instances are the total variation (\textsc{tv}) distance and the Kolmogorov--Smirnov (\textsc{ks}) distance, discussed below.

\begin{example}
 {\em (Total variation distance).} \label{exm_tot}
 Although the total variation distance is not a common choice within discrepancy-based \textsc{abc}, it still provides a notable example of  \textsc{ips}, obtained when $ \mathfrak{F}$ is the class of measurable functions whose sup-norm is bounded by $1$; i.e.  $ \mathfrak{F} = \{f : ||f||_\infty \leq 1 \} $.
\end{example}

\begin{example} {\em (Kolmogorov--Smirnov distance).}\label{exm_kol}
When $\mathcal{Y}=\mathbb{R}$ and $ \mathfrak{F} = \{\mathds{1}_{(-\infty,a]}\}_{a\in\mathbb{R}} $, then $\mathcal{D}_\mathfrak{F} $ is the Kolmogorov--Smirnov distance, which can also be written as $\mathcal{D}_\mathfrak{F}(\mu_1,\mu_2) = \sup_{x \in \mathcal{Y}} |F_1(x) - F_2(x)| $, where $F_1$ and $F_2$ are the cumulative distribution functions associated with $\mu_1$ and $\mu_2$, respectively.
\end{example}

\vspace{5pt}
\subsection{Validity of (\ref{C3})--(\ref{C4})  for the  \textsc{tv} distance and Kolmogorov--Smirnov distance}~Examples~\ref{exm_tot1} and \ref{exm_kol1} verify the validity of assumptions (\ref{C3}) and (\ref{C4}) under the \textsc{tv} distance and the Kolmogorov--Smirnov distance, respectively.

\begin{example}
{\em (Total variation distance).}
\label{exm_tot1} 
The \textsc{tv} distance satisfies  (\ref{C3}) by definition, but  in general not Assumption (\ref{C4}), unless the cardinality $|\mathcal{Y}|$ of $\mathcal{Y}$ is finite. In fact, when $\mathcal{Y}=\mathbb{R}$ and $\mu \in \mathcal{P}(\mathcal{Y})$ is continuous, the probability that there exists an index $i\neq i'$ such that $x_i=x_{i'}$ is zero.~Hence, with probability $1$, for any vector $\epsilon_{1:n}$ of Rademacher variables there always exists a~function $f_\epsilon$ from $\mathcal{Y}$ to $\{0;1\}$ such that $f_\epsilon(x_i) = \mathds{1}_{\{\epsilon_i=1\}}$. Therefore, $ \sup_{f\in\mathfrak{F}} | (1/n) \sum_{i=1}^{n} \epsilon_i f(x_i)| \geq  (1/n) \sum_{i=1}^n \mathds{1}_{\{\epsilon_i = 1\}} $, which  implies that the Rademacher complexity $ \mathfrak{R}_{\mu,n}(\mathfrak{F})$ is bounded below by $(1/n) \sum_{i=1}^n  \mathbb{P} (\epsilon_i=1) = 1/2. $ Nonetheless, as mentioned above, the \textsc{tv} distance can still satisfy (\ref{C4}) in specific contexts. For instance, leveraging the bound in Lemma 5.2 of \citet{massart2000some}, when the cardinality $|\mathcal{Y}|$ of $\mathcal{Y}$ is finite, there will be replicates in $[f(x_1), \ldots, f(x_n)]$  whenever $n > |\mathcal{Y}|$. Hence, as $n \rightarrow \infty$, it will be impossible to find a function in $ \mathfrak{F}$ which can interpolate any noise vector  of Rademacher variables with $[f(x_1), \ldots, f(x_n)]$, thus ensuring $\mathfrak{R}_{n}(\mathfrak{F}) \rightarrow 0$.
\end{example}

\begin{example}
{\em (Kolmogorov--Smirnov distance).}
\label{exm_kol1} 
The \textsc{ks} distance meets (\ref{C3}) by definition and, similarly to \textsc{mmd} with bounded kernels, also condition (\ref{C4}) is satisfied without the need to impose additional constraints on the model $\mu_\theta$ or on the data-generating process. More specifically, Assumption (\ref{C4}) follows from the inequality $ \mathfrak{R}_{\mu,n}(\mathfrak{F})  \leq 2 [\log(n+1)/n]^{1/2}$  in Chapter 4.3.1 of  \citet{wainwright2019high}. This is a consequence of the bounds on $ \mathfrak{R}_{\mu,n}(\mathfrak{F})$ when $\mathfrak{F}$ is a class of $b$-uniformly bounded functions such that, for  some $\nu \geq 1$, it holds~$ \mbox{\normalfont card}\{f(x_{1:n}): f \in \mathfrak{F} \}\leq (n+1)^{\nu}$ for any $n$ and  $x_{1:n}$ in $\mathcal{Y}^n$. When $ \mathfrak{F} = \{1_{(-\infty,a]}\}_{a\in\mathbb{R}} $ each $x_{1:n}$ would divide the real line in at most $n+1$ intervals and every indicator function within $\mathfrak{F} $ will take value $1$ for all $x_i \leq a$ and zero otherwise, meaning that $\mbox{\normalfont card}\{f(x_{1:n}): f \in \mathfrak{F} \} \leq (n+1)$. Therefore, by applying Equation (4.24) in \mbox{\citet{wainwright2019high}}, with $b=1$ and $\nu=1$, yields  $ \mathfrak{R}_{\mu,n}(\mathfrak{F})  \leq 2 [\log(n+1)/n]^{1/2}$ for any $\mu \in \mathcal{P}(\mathcal{Y}) $, which implies that Assumption \mbox{(\ref{C4})} is~met. These derivations clarify the usefulness of the available techniques for upper bounding the Rademacher complexity~\citep[e.g.,][Chapter 4.3]{wainwright2019high}, leveraging, in this case, the notion of polynomial discrimination and the closely-related \textsc{vc} dimension.
\end{example}

\vspace{5pt}
\section{Concentration in the space of parameters}
\label{sec_conc_theta_space}
Theorem~\ref{thm_rademacher} in the main article is stated for neighborhoods within the space of distributions. Although such a perspective is in line with the overarching focus of current theory for discrepancy-based \textsc{abc} \citep[][]{jiang2018approximate,bernton2019approximate,nguyen2020approximate,frazier2020robust,fujisawa2021gamma}, it shall be emphasized that similar results can be also derived in the space of parameters. To this end, it suffices to adapt Corollary~1 in \citet{bernton2019approximate}  to our general framework, under the same additional assumptions, which are adapted below to the whole \textsc{ips}~class.

\vspace{2pt}
\begin{enumerate}[(I)]
\setcounter{enumi}{4}	
\item \label{C5} The minimizer $\theta^*$ of  $\mathcal{D}_{\mathfrak{F}}(\mu_\theta,\mu^*)$ exists and is well separated, meaning that  for any $\delta>0$ there is a $\delta'>0$ such that $  \inf_{\theta\in\Theta: d(\theta,\theta^*) > \delta} \mathcal{D}_{\mathfrak{F}}(\mu_\theta,\mu^*) > \mathcal{D}_{\mathfrak{F}}(\mu_{\theta^*},\mu^*) + \delta'$;
\vspace{-4pt}
\item \label{C6} The parameters $\theta$ are identifiable, and there exist positive constants $K>0$, $\nu>0$ and an open neighborhood $U\subset \Theta$ of $\theta^*$ such that, for any $\theta\in U$, it holds that $ d(\theta,\theta^*) \leq K \left[ \mathcal{D}_{\mathfrak{F}}(\mu_\theta,\mu^*)-\varepsilon^*\right]^\nu. $
\end{enumerate}
\vspace{2pt}

Assumptions (\ref{C5}) and (\ref{C6}) essentially require that the parameters $\theta$ are identifiable, sufficiently well-separated, and that the distance $d(\cdot,\cdot)$ between parameter values~has~some~reasonable correspondence with the discrepancy $\mathcal{D}_{\mathfrak{F}}(\cdot,\cdot)$ among the associated distributions. Although these two assumptions introduce a condition on the model, it shall be emphasized that (\ref{C5}) and (\ref{C6}) are not specific to our framework  \citep[e.g.,][]{frazier2018asymptotic,bernton2019approximate,frazier2020robust}. On the contrary, these identifiability conditions are  arguably customary and minimal requirements in parameter inference. Moreover,  these two assumptions have been checked in~\citet{cherief2022finite} for \textsc{mmd} and in \citet{bernton2019approximate} for Wasserstein distance, which are arguably the two most remarkable examples of \textsc{ips} employed in the \textsc{abc}~context. Under  (\ref{C5}) and (\ref{C6}), it is possible to state Corollary~\ref{cor_1}.

\begin{corollary}\label{cor_1}
Assume (\ref{C1})--(\ref{C4}) along with (\ref{C5})--(\ref{C6}),  and that $\mathcal{D}_\mathfrak{F}$ denotes a discrepancy within the \textsc{ips} class in Definition~\ref{def_IPS}. Moreover, take $\bar{\varepsilon}_n \to 0$ as $ n \to \infty $, with $n\bar{\varepsilon}^2_n \to \infty$ and $\bar{\varepsilon}_n / \mathfrak{R}_{n}(\mathfrak{F}) \to \infty $. Then, the \textsc{abc} posterior with threshold $\varepsilon_n= \varepsilon^* + \bar{\varepsilon}_n $ satisfies 
\begin{eqnarray*}
\smash{ \pi_n^{(\varepsilon^*+\bar{\varepsilon}_n)}}\Bigl(\Bigl\{\theta: d(\theta,\theta^*) > K\Bigl[ \frac{4\bar{\varepsilon}_n}{3}+2\mathfrak{R}_{n}(\mathfrak{F}) +\Bigl(\frac{2b^2}{n}\log\frac{n}{\bar{\varepsilon}_n^L}\Bigr)^{1/2}\Bigr]^{\nu}\Bigr\} \Bigr) \leq \frac{2 \cdot 3^L}{c_\pi n},
\end{eqnarray*}
 with $\mathbb{P}_{y_{1:n}}$--probability going to $1$ as $n \rightarrow \infty$.
\end{corollary}

As for Theorem~\ref{thm_rademacher}, also Corollary~\ref{cor_1} holds more generally when replacing both $n/\bar{\varepsilon}_n^L$ and $c_\pi n$ with $M_n/\bar{\varepsilon}_n^L$ and $c_\pi M_n$, respectively, for any $M_n>1$. The proof of Corollary~\ref{cor_1} follows directly from Theorem~\ref{thm_rademacher} and Assumptions (\ref{C5})--(\ref{C6}), thereby allowing to inherit the discussion after Theorem~\ref{thm_rademacher}, also when the concentration is measured directly within the parameter space via $d(\theta,\theta^*) $. For instance, when $d(\cdot,\cdot)$ is the Euclidean distance and~$\nu=1$, this implies that whenever $\mathfrak{R}_{n}(\mathfrak{F}) =\mathcal{O}(n^{-1/2})$ the contraction rate will be in the order of $\mathcal{O}([\log(n)/n]^{1/2})$, which is the expected rate in parametric models.

\vspace{8pt}

\section{Extension to non-i.i.d. settings}
\label{sec_non_iid}
Although the theoretical results in Sections~\ref{sec_abc}--\ref{sec_mmd_tot} provide an improved understanding of~the limiting properties of discrepancy-based \textsc{abc} posteriors, the i.i.d. assumption in (\ref{C1}) rules out important settings which often require \textsc{abc}. A remarkable case is that of time-dependent observations   \citep[e.g.,][]{fearnhead2012constructing,bernton2019approximate,nguyen2020approximate,drovandi2022comparison}.~

Section~\ref{iid_def} clarifies that the  theory derived under i.i.d.\  assumptions  in Section~\ref{sec_abc}--\ref{sec_mmd_tot} can be naturally extended~to these non-i.i.d.\ settings leveraging results for Rademacher complexity in $\beta$-mixing~stochastic processes \citep{Mohri2008}. Examples~\ref{ex_doe}--\ref{ex_ch} below show that such a class embraces several processes of direct practical interest. Extensions beyond this class, albeit relevant, are challenging even when the focus is on proving~simpler, non-uniform, concentration results for a single discrepancy. Hence, these extensions are left for future research, that could be facilitated by the derivation of Rademacher complexity bounds for general  processes  beyond the $\beta$-mixing ones studied in  \citet{Mohri2008}.

\subsection{Convergence and concentration beyond i.i.d.\ settings}
\label{iid_def} Let us assume again that  $ \mathcal{Y} $ is a metric space endowed with distance~$\rho$. However, unlike the i.i.d.\ setting considered in Section~\ref{sec_abc}, we now focus on the situation in which the observed data $ y_{1:n}=(y_1,\ldots,y_n) \in \mathcal{Y}^n $ are dependent and drawn from the joint distribution $ \mu^{*(n)} \in  \mathcal{P}(\mathcal{Y}^n) $, where $ \mathcal{P}(\mathcal{Y}^n) $ is the space of probability measures on $ \mathcal{Y}^n$. Under this more general framework, the i.i.d. case is recovered by assuming that $ \mu^{*(n)}$ can be expressed  as a product, i.e., $ \mu^{*(n)} = \prod_{i=1}^n\mu^{*}$. 

In the following, the above product structure is not imposed. Instead, we  only assume~that the marginal of $ \mu^{*(n)} $ is constant, and  denoted with $\mu^*$. Such an assumption is met whenever $y_{1:n}$ is extracted from a stationary stochastic process $(y_t)_{t\in\mathbb{Z}}$, thus embracing~a~broader variety~of applications of direct interest. Under these settings, a statistical~model is defined as a collection of distributions in $\mathcal{P}(\mathcal{Y}^n) $, i.e., \smash{$ \{ \mu_\theta^{(n)}: \theta \in \Theta  \subseteq \mathbb{R}^{p} \} $}, with a constant marginal denoted by $\mu_{\theta}$. Notice that these assumptions of constant marginals $\mu^*$ and $\mu_\theta$ are made also in the available concentration theory under non-i.i.d.\ settings \citep[see e.g.,][]{bernton2019approximate,nguyen2020approximate} when requiring convergence of $ \mathcal{D}_\mathfrak{F} (\hat{\mu}_{y_{1:n}},\mu^*)$ and suitable concentration inequalities for $ \mathcal{D}_\mathfrak{F} (\hat{\mu}_{z_{1:n}},\mu_\theta)$. As a result, the settings we consider are not more~restrictive than those addressed in discrepancy-specific theory. In fact, both \citet{bernton2019approximate} and \citet{nguyen2020approximate} explicitly refer to stationary  processes when discussing the validity of the assumptions on $ \mathcal{D}_\mathfrak{F} (\hat{\mu}_{y_{1:n}},\mu^*)$ and $ \mathcal{D}_\mathfrak{F} (\hat{\mu}_{z_{1:n}},\mu_\theta)$ in non-i.i.d. contexts.

Given the above statistical model, a prior  $ \pi $ on~$ \theta $ and a generic \textsc{ips} discrepancy~$ \mathcal{D}_\mathfrak{F}  $,~the~\textsc{abc} posterior with threshold $\varepsilon_n \geq 0$ is defined as 
\begin{eqnarray*}
\pi_n^{(\varepsilon_n)}(\theta) \propto \pi(\theta) \int_{\mathcal{Y}^n} \mathds{1}\{\mathcal{D}_\mathfrak{F} (\hat{\mu}_{z_{1:n}},\hat{\mu}_{y_{1:n}})\leq\varepsilon_n\} \ \mu_\theta^{(n)}(dz_{1:n}).
\end{eqnarray*}
This definition is the same as the one  in Section~\ref{sec_abc}, with~the~only~difference that~\smash{$\mu_\theta^n=\prod_{i=1}^n \mu_\theta$} is replaced by the joint  \smash{$\mu^{(n)}_\theta$}, since now the data are no more assumed to be independent.

In order to extend the convergence result in Corollary~\ref{conv_rad} together with the concentration statement in Theorem~\ref{thm_rademacher} to the above framework, we require an analog of~Equation~\eqref{eq1} in Lemma~\ref{lemma_rademacher} for time-dependent data. This generalization can be derived leveraging results  in \citet{Mohri2008} under the notion of $\beta$-mixing coefficients.
\vspace{-3pt}
\begin{definition}[$\beta$-mixing]
\label{beta_mix}
Consider the stationary sequence  $(x_t)_{t\in\mathbb{Z}}$ of random variables, and let \smash{$\sigma^{j'}_{j}$} be the $\sigma$--algebra generated by the random variables $x_k$, $j \leq k \leq j'$, for any $j, j' \in \mathbb{Z} \cup \{-\infty, +\infty\}$. Then, for any integer $k>0$, the $\beta$-mixing coefficient of  $(x_t)_{t\in\mathbb{Z}}$ is defined as
\begin{eqnarray*}
 \beta(k) = \mbox{\normalfont sup}_{t\in\mathbb{Z}} \mathbb{E}[ \mbox{\normalfont  sup}_{A \in \sigma_{t+k}^\infty} |\mathbb{P}(A \mid \sigma_{-\infty}^t)- \mathbb{P}(A)|].
 \end{eqnarray*}  
If $\beta(k) \to 0$ as $k \to \infty$, then the stochastic process $(x_t)_{t\in\mathbb{Z}}$ is said to be $\beta$-mixing.
\end{definition}

Intuitively, $\beta(k)$ measures the dependence between the past (before $t$) and the future (after $t+k$) of the process. When such a dependence is weak, we expect that $\beta(k)$ will decay to $0$ fast when $k\rightarrow\infty$. In the most extreme case, when the $x_t$'s are i.i.d., we have $\beta(k)=0$ for all $k>0$. More generally, as clarified in Definition~\ref{beta_mix}, a process having $\beta(k) \rightarrow 0$ when $k\rightarrow\infty$ is named $\beta$-mixing. We refer the reader to~\cite{doukhan}  for an in-depth study of the main properties of $\beta$-mixing processes along with a more comprehensive discussion of relevant examples. The most remarkable ones will be also presented in the following. 

Leveraging the notion of $\beta$-mixing coefficient, Lemma~\ref{lemma_rademacher_dependent_full} extends Lemma~\ref{lemma_rademacher} to the dependent setting. The proof  can be found in Appendix D and combines  Proposition 2 and Lemma 2 in~\cite{Mohri2008}. For readability, let us also~introduce the notation $s_n = \lfloor n/(2 \lfloor \sqrt{n} \rfloor) \rfloor $. Note that $s_n \sim \sqrt{n}/2$ as $n\to\infty$, and thus $s_n\to\infty$.
\begin{lemma}
\label{lemma_rademacher_dependent_full}
Define $s_n = \lfloor n/(2 \lfloor \sqrt{n} \rfloor) \rfloor$. Moreover, consider the stationary stochastic process $(x_t)_{t\in\mathbb{Z}}$ and denote with $\beta(k), k \in \mathbb{N}$, its $\beta$-mixing coefficients. Let $\mu^{(n)}$ be the joint distribution of a sample $x_{1:n}$ extracted from $(x_t)_{t\in\mathbb{Z}}$ and denote with $\mu=\mu^{(1)}$ its constant marginal. Then, for any $b$-uniformly bounded class $\mathfrak{F}$, any integer $n \geq 1$ and scalar $\delta \geq 0$,
\begin{eqnarray}\label{eq1-dep-iid}
\qquad \  \ \ \mathbb{P}_{x_{1:n}} \left[\mathcal{D}_{\mathfrak{F}}(\hat{\mu}_{x_{1:n}},\mu) \leq  2 \mathfrak{R}_{\mu,s_n}(\mathfrak{F}) + \frac{4b}{\sqrt{n}} +\delta \right]\geq 1-2{\cdot}{\exp}\left[-\frac{s_n \delta^2}{2 b^2}\right]-2s_n \beta({\lfloor }\sqrt{n} {\rfloor}),
\end{eqnarray}
with $ \mathfrak{R}_{\mu,s_n}(\mathfrak{F})$ the Rademacher complexity in Definition~\ref{def_rade} for an~i.i.d.~sample of~size~$s_n$~from~$\mu$.
\end{lemma}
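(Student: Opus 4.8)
The plan is to reduce the statement to the i.i.d.\ bound of Lemma~\ref{lemma_rademacher} via the standard independent–blocks (Yu--Bernstein) decoupling, tuned so that the ``effective'' i.i.d.\ sample size is exactly $s_n$. Write $\Phi(x_{1:n}) := \mathcal{D}_{\mathfrak{F}}(\hat{\mu}_{x_{1:n}},\mu) = \sup_{f\in\mathfrak{F}}\bigl|\tfrac1n\sum_{i=1}^n(f(x_i)-\int f\,d\mu)\bigr|$, so the claim is a tail bound for $\Phi$ at the level $2\mathfrak{R}_{\mu,s_n}(\mathfrak{F})+4b/\sqrt n+\delta$. Put $a_n := \lfloor\sqrt n\rfloor$ and write $n = 2 s_n a_n + r$ with $0\le r<2a_n$. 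I would carve the first $2 s_n a_n$ indices into $2 s_n$ consecutive blocks of length $a_n$, grouped into the $s_n$ odd-indexed and the $s_n$ even-indexed blocks, and discard the remaining $r$ points. Since $\|f\|_\infty\le b$ and $r<2a_n\le 2\sqrt n$, the discarded points move the empirical mean by at most $2br/n<4b/\sqrt n$; and because $s_n a_n\le n/2$, the odd (resp.\ even) blocks contribute to $\tfrac1n\sum_i(f(x_i)-\int f\,d\mu)$ at most $\tfrac12$ times the same average renormalised by $s_n a_n$. This yields
\[
\Phi(x_{1:n}) \;\le\; \tfrac12\,\Phi^{\mathrm{odd}} + \tfrac12\,\Phi^{\mathrm{even}} + \frac{4b}{\sqrt n},
\]
where $\Phi^{\mathrm{odd}},\Phi^{\mathrm{even}}$ are the suprema over $f\in\mathfrak{F}$ of the corresponding block averages renormalised by $s_n a_n$, so a union bound reduces the problem to showing that $\mathbb{P}_{x_{1:n}}(\Phi^{\mathrm{odd}}>2\mathfrak{R}_{\mu,s_n}(\mathfrak{F})+\delta)$, and the analogue for $\Phi^{\mathrm{even}}$, is at most $\exp(-s_n\delta^2/2b^2)+s_n\beta(\lfloor\sqrt n\rfloor)$.

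I would then decouple the odd blocks. Consecutive odd blocks are separated by one even block of length exactly $a_n$, so $\Phi^{\mathrm{odd}}$ is a statistic of $s_n$ blocks of a $\beta$-mixing sequence with inter-block gaps $a_n$. Applying Proposition~2 of \citet{Mohri2008} (Yu's blocking lemma) to the \emph{indicator} of $\{\Phi^{\mathrm{odd}}>2\mathfrak{R}_{\mu,s_n}(\mathfrak{F})+\delta\}$, a $[0,1]$-valued functional, its probability under the true dependent law differs by at most $(s_n-1)\beta(a_n)\le s_n\beta(\lfloor\sqrt n\rfloor)$ from its probability under the surrogate law $\tilde\mu$ that retains the within-block dependence but makes the $s_n$ blocks mutually independent, each distributed as a length-$a_n$ window of the stationary process. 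Applying the lemma to an indicator is exactly what keeps the $\beta(\lfloor\sqrt n\rfloor)$ term free of any range factor.

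Under $\tilde\mu$, $\Phi^{\mathrm{odd}}$ is a function of $s_n$ independent blocks, and swapping one block ($a_n$ points) changes it by at most $2b a_n/(s_n a_n)=2b/s_n$; McDiarmid's bounded-differences inequality thus gives $\mathbb{P}_{\tilde\mu}(\Phi^{\mathrm{odd}}>\mathbb{E}_{\tilde\mu}\Phi^{\mathrm{odd}}+\delta)\le\exp(-s_n\delta^2/2b^2)$. A standard symmetrisation (cf.\ Lemma~2 of \citet{Mohri2008} and Lemma~\ref{lemma_rademacher}) then bounds $\mathbb{E}_{\tilde\mu}\Phi^{\mathrm{odd}}$ by $2\,\mathbb{E}\bigl[\sup_{f\in\mathfrak{F}}\bigl|\tfrac{1}{s_n a_n}\sum_{i=1}^{s_n}\epsilon_i\sum_{j=1}^{a_n}f(z_{i,j})\bigr|\bigr]$ with $\epsilon_1,\dots,\epsilon_{s_n}$ i.i.d.\ Rademacher signs shared within each block; rewriting the inner double sum as $\tfrac1{a_n}\sum_{j=1}^{a_n}\bigl\{\tfrac1{s_n}\sum_{i=1}^{s_n}\epsilon_i f(z_{i,j})\bigr\}$ and moving the supremum inside the average over $j$, each resulting summand has expectation exactly $\mathfrak{R}_{\mu,s_n}(\mathfrak{F})$, since under $\tilde\mu$ the points $z_{1,j},\dots,z_{s_n,j}$ are i.i.d.\ from $\mu$ for each fixed $j$. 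Hence $\mathbb{E}_{\tilde\mu}\Phi^{\mathrm{odd}}\le 2\mathfrak{R}_{\mu,s_n}(\mathfrak{F})$, and chaining the three displays gives the bound for the odd blocks; the even blocks are handled identically, and adding the two contributions (with $4b/\sqrt n$ already peeled off) produces the factors $2$ in the statement.

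I expect the main obstacle to be the decoupling step: one must apply Yu's blocking lemma to an indicator (so that the $\beta(\lfloor\sqrt n\rfloor)$ term carries no multiplicative range constant), ensure the gap between the blocks being made independent is exactly $a_n=\lfloor\sqrt n\rfloor$, and track the discarded remainder, the odd/even split, and the renormalisation by $s_n a_n$ (rather than $n$) tightly enough to recover precisely the constants $\tfrac12$, $4b/\sqrt n$ and $s_n\delta^2/2b^2$. The passage from the block-level Rademacher term back to $\mathfrak{R}_{\mu,s_n}(\mathfrak{F})$ via Jensen is the other delicate point, though it becomes routine once the surrogate law $\tilde\mu$ is in place.
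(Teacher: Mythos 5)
Your proof is correct and follows essentially the same route as the paper's: the independent--blocks construction with block length $\lfloor\sqrt{n}\rfloor$, truncation of the leftover $r<2\lfloor\sqrt{n}\rfloor$ points at cost $4b/\sqrt{n}$, and a decoupling--concentration--symmetrization bound at effective sample size $s_n$, yielding exactly the stated constants $2\exp(-s_n\delta^2/2b^2)+2s_n\beta(\lfloor\sqrt{n}\rfloor)$. The only difference is presentational: the paper invokes Proposition~2 and Lemma~2 of \citet{Mohri2008} as a black box for the case where $n/(2K)$ is an integer and only adds the remainder--truncation step, whereas you unpack that citation into its constituent parts (Yu's decoupling applied to an indicator, McDiarmid under the surrogate independent--block law, and block--level symmetrization reducing to $\mathfrak{R}_{\mu,s_n}(\mathfrak{F})$), which is the same underlying argument.
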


Equation~\eqref{eq1-dep-iid} extends  \eqref{eq1} beyond i.i.d.\ settings. This extension provides a bound that still depends on the Rademacher complexity in Definition~\ref{def_rade} for an i.i.d. sample --- in this case from the common marginal $\mu$ of the process $(x_t)_{t\in\mathbb{Z}}$. As such, Assumption (\ref{C4}) requires no modifications, and no additional validity checks relative to those discussed in Section~\ref{sec_ass}. This suggests that the Rademacher complexity framework might also be leveraged to derive improved convergence and concentration results for discrepancy-based \textsc{abc} posteriors in  more general situations which do not necessarily meet Assumption  (\ref{C1}). To prove these results we leverage again Assumptions (\ref{C2}), (\ref{C3}) and (\ref{C4}), and replace (\ref{C1}) with condition (\ref{C1dep}).
\begin{enumerate}[(I)] \addtocounter{enumi}{6}
\item \label{C1dep} The data $ y_{1:n} $ are from a $\beta$-mixing stochastic process $(y_t)_{t\in\mathbb{Z}}$  with mixing coefficients \smash{$\beta(k) \leq C_\beta e^{-\gamma k^{\xi}} $}  for some $C_\beta,\gamma, \xi>0$, common marginal $\mu$, and generic joint \smash{$\mu^{*(n')}$} for a sample $ y_{1:n'} $ from $(y_t)_{t\in\mathbb{Z}}$  for any $n'\in\mathbb{N}$. The same  $\beta$-mixing conditions hold also for the process $(z_t)_{t\in\mathbb{Z}}$ associated with the  synthetic data  $ z_{1:n} $  from the assumed model. In this case, the joint distribution for a generic sample  $ z_{1:n'} $ is \smash{$\mu^{(n')}_\theta$},~$\theta \in \Theta$, and the common marginal is denoted by $\mu_\theta$. For simplicity and without loss of generality, we also assume that the constants $C_\beta$, $\gamma$, and $\xi$ are the same for $(y_t)_{t\in\mathbb{Z}}$ and $(z_t)_{t\in\mathbb{Z}}$.
\end{enumerate}

Assumption (\ref{C1dep}) is clearly more general than  (\ref{C1}). As discussed previously, it embraces several stochastic processes of substantial interest in practical applications, including those in Examples~\ref{ex_doe}--\ref{ex_ch} below; see \citet{doukhan} for additional examples and discussion.

\begin{example}[Doeblin-recurrent Markov chains]
\label{ex_doe}
Let $(x_t)_{t\in\mathbb{Z}}$ be a Markov chain on $\mathcal{Y}\subset \mathbb{R}^d$ with transition kernel $P(\cdot,\cdot)$. Such a Markov chain is said to be Doeblin-recurrent if there exists a probability measure $q$, a constant $0<c \leq 1$ and an integer $r>0$ such that, for any measurable set $A$ and any $x\in\mathbb{R}^d$, $P^r (x,A) \geq c q(A)$. When this is the case, $(x_t)_{t\in\mathbb{Z}}$ is $\beta$-mixing with $\beta(k)\leq 2(1-c)^{k/r}$; see e.g., Theorem 1 in page 88 of \citet{doukhan}.
\end{example}

\begin{example}[Hidden Markov chains]
\label{ex_ch}
Assume $(x_t)_{t\in\mathbb{Z}}$ is a $\beta$-mixing stochastic process with coefficients $\beta_x(k), k\in\mathbb{N}$. If $\tilde{x}_t = F(x_t,\varepsilon_t)$ with $\varepsilon_t$ i.i.d., then the $\beta$-mixing coefficients of $(\tilde{x}_t)_{t\in\mathbb{Z}}$ satisfy $\beta_{\tilde{x}}(k) = \beta_x(k)$. Therefore, $(\tilde{x}_t)_{t\in\mathbb{Z}}$ is also $\beta$-mixing and inherits the bounds on $\beta_x(k)$. These processes are often used in practice with  $(x_t)_{t\in\mathbb{Z}}$ being a Markov chain. In this case $(\tilde{x}_t)_{t\in\mathbb{Z}}$ is called a Hidden Markov chain.
\end{example}

Section 2.4.2 of~\cite{doukhan} also provides conditions on $F$ and on the i.i.d. sequence $(\varepsilon_t)_{t\in\mathbb{Z}}$ ensuring that a stationary process $(x_t)_{t\in\mathbb{Z}}$ satisfying $ x_t = F(x_{t-1},\dots,x_{t-k},\varepsilon_t) $ exists and is $\beta$-mixing. Lemma~\ref{lemma_ar} specializes such a result in the context of Gaussian \textsc{ar}(1)~processes, which will be considered in the empirical study in Section~\ref{sec_empirical_noiid}.

\begin{lemma}[Gaussian \textsc{ar}(1) process]
\label{lemma_ar}
Consider a generic sequence $(\varepsilon_t)_{t\in\mathbb{Z}}$ of i.i.d.~random variables from a $\mbox{\normalfont N}(0,\sigma^2)$. Moreover, let $-1<\theta<1$ and $\psi\in\mathbb{R}$. Then the stationary solution to $x_t = \psi + \theta x_{t-1} + \varepsilon_t $ is  $\beta$-mixing and has coefficients $\beta(k) \leq \smash{|\theta|^k/(2\sqrt{1-\theta^2})}=(2\sqrt{1-\theta^2})^{-1}\exp(-k \log( 1/|\theta|))$, $k \in \mathbb{N}$, thus meeting (\ref{C1dep}).
\end{lemma}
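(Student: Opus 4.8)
The plan is to use that the stationary solution is an explicit Gaussian Markov chain, so the $\beta$--mixing coefficient in Definition~\ref{beta_mix} reduces to an averaged total--variation distance between a $k$--step transition kernel and the invariant law, both Gaussian and --- crucially --- with the \emph{same variance}, which keeps every constant explicit. \emph{Step 1 (stationary solution).} First I would iterate the recursion to obtain $x_t = \psi/(1-\theta) + \sum_{j\geq 0}\theta^j\varepsilon_{t-j}$; since $\sum_j\theta^{2j}<\infty$ the series converges in $L^2$, is stationary and Gaussian, and solves $x_t=\psi+\theta x_{t-1}+\varepsilon_t$, so it is the (a.s.\ unique) stationary solution and its constant marginal is $\mu=\mathrm{N}(\mu_x,\sigma_x^2)$ with $\mu_x=\psi/(1-\theta)$, $\sigma_x^2=\sigma^2/(1-\theta^2)$. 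Iterating from a deterministic start also gives the $k$--step transition $P^k(x,\cdot)=\mathrm{N}(\mu_x+\theta^k(x-\mu_x),\,v_k)$ with $v_k=\sigma_x^2(1-\theta^{2k})$.

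\emph{Step 2 (reduction of $\beta(k)$).} Because $(x_t)$ is a Markov chain, for $A$ in the future $\sigma$--algebra $\sigma_{t+k}^\infty$ one has $\mathbb{P}(A\mid\sigma_{-\infty}^t)=\mathbb{P}(A\mid x_t)$; and since the innovations indexed after $t+k$ are independent of $x_t$ (indeed of $\sigma_{t+k}$), $\mathbb{P}(A\mid x_t)=\int P^k(x_t,\mathrm{d}y)\,g(y)$ with $g(y)=\mathbb{P}(A\mid x_{t+k}=y)\in[0,1]$, while $\mathbb{P}(A)=\int g\,\mathrm{d}\mu$. Taking the supremum over such $g$ (attained at $A=\{x_{t+k}\in B^\star\}$) and then the expectation over $x_t\sim\mu$ gives $\beta(k)=\mathbb{E}_{X\sim\mu}\,\lVert P^k(X,\cdot)-\mu\rVert_{\mathrm{TV}}$, with $\lVert\cdot\rVert_{\mathrm{TV}}=\sup_B|\cdot|$ (the classical identity for stationary Markov chains). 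Using invariance $\mu=\int P^k(x',\cdot)\,\mu(\mathrm{d}x')$ and convexity of the total--variation norm, $\lVert P^k(X,\cdot)-\mu\rVert_{\mathrm{TV}}\leq\int\lVert P^k(X,\cdot)-P^k(x',\cdot)\rVert_{\mathrm{TV}}\,\mu(\mathrm{d}x')$.

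\emph{Step 3 (Gaussian estimate and constants).} The two kernels $P^k(x,\cdot)$ and $P^k(x',\cdot)$ have the \emph{same} variance $v_k$ and means at distance $|\theta|^k|x-x'|$, so $\lVert P^k(x,\cdot)-P^k(x',\cdot)\rVert_{\mathrm{TV}}=2\Phi(|\theta|^k|x-x'|/(2\sqrt{v_k}))-1\leq |\theta|^k|x-x'|/\sqrt{2\pi v_k}$, using $2\Phi(t)-1\leq t\sqrt{2/\pi}$ for $t\geq 0$. Averaging over $X,X'$ i.i.d.\ from $\mu$, with $\mathbb{E}|X-X'|=\sqrt{2}\,\sigma_x\sqrt{2/\pi}=2\sigma_x/\sqrt{\pi}$ and $\sqrt{v_k}=\sigma_x\sqrt{1-\theta^{2k}}$, yields $\beta(k)\leq\sqrt{2}\,|\theta|^k/(\pi\sqrt{1-\theta^{2k}})$. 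Since $1-\theta^{2k}\geq 1-\theta^2$ for $k\geq 1$ and $\sqrt{2}/\pi\leq 1/2$ (equivalently $2\sqrt{2}<\pi$), this gives $\beta(k)\leq|\theta|^k/(2\sqrt{1-\theta^2})=(2\sqrt{1-\theta^2})^{-1}\exp(-k\log(1/|\theta|))$; as $\log(1/|\theta|)>0$ for $|\theta|<1$ this is positive and tends to $0$, so $(x_t)$ is $\beta$--mixing and Assumption~(\ref{C1dep}) holds with $C_\beta=(2\sqrt{1-\theta^2})^{-1}$, $\gamma=\log(1/|\theta|)$, $\xi=1$ (the case $\theta=0$, where $x_t$ is i.i.d.\ and $\beta(k)=0$, being immediate).

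\emph{Main obstacle.} The only genuinely delicate point is the reduction in Step 2 --- arguing that the future $\sigma$--algebra $\sigma_{t+k}^\infty$ contributes nothing beyond $x_{t+k}$ (because the post--$(t+k)$ innovations are independent of $x_t$), so that $\beta(k)$ equals exactly the averaged total variation; once this is in place the Gaussian bounds are routine, the key simplification being the equal--variance structure of $P^k(x,\cdot)$ and $P^k(x',\cdot)$, which is also what a synchronous coupling for $k-1$ steps followed by a maximal coupling at the last step would reproduce. A less self--contained alternative is to invoke the general $\beta$--mixing theorem for Markov models $x_t=F(x_{t-1},\varepsilon_t)$ with $F$ a contraction in its first argument and absolutely continuous innovations (\citet{doukhan}, Section~2.4.2) and then track the constant in the Gaussian case.
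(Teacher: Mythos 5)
Your proof is correct and arrives at exactly the stated constant, but the core estimate is obtained by a genuinely different route. You share the paper's skeleton: both reduce $\beta(k)$ to $\mathbb{E}_{x\sim\pi}\sup_B|P^k(x,B)-\pi(B)|$ for the stationary chain and both identify $\pi=\mbox{N}(\psi/(1-\theta),\sigma^2/(1-\theta^2))$ and $P^k(x,\cdot)=\mbox{N}(\theta^k x+\psi(1-\theta^k)/(1-\theta),\sigma^2(1-\theta^{2k})/(1-\theta^2))$ by iterating the recursion. The differences are: (i) you derive the Markov--chain reduction (your Step 2) from scratch via the Markov property and the bound $|\int g\,d\nu|\leq\sup_B|\nu(B)|$ for $g\in[0,1]$ and $\nu$ of total mass zero, whereas the paper simply invokes the identity of \citet{davydov} as recalled in \citet{doukhan}; (ii) for the total--variation estimate the paper uses Pinsker's inequality plus the closed--form $\mathcal{D}_{\textrm{KL}}$ between $P^k(x,\cdot)$ and $\pi$, followed by Jensen and $\log(1+u)\leq u$, while you instead exploit invariance and convexity to compare $P^k(x,\cdot)$ with $P^k(x',\cdot)$, use the exact equal--variance Gaussian formula $2\Phi(\delta/2)-1$ together with $2\Phi(t)-1\leq t\sqrt{2/\pi}$, and average $|X-X'|$ over the stationary law. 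Your route is more elementary (no divergence computation) and in fact yields the slightly sharper intermediate bound $\sqrt{2}\,|\theta|^k/(\pi\sqrt{1-\theta^{2k}})$, which you then weaken via $1-\theta^{2k}\geq 1-\theta^2$ and the numerical fact $2\sqrt{2}<\pi$ to recover precisely $|\theta|^k/(2\sqrt{1-\theta^2})$; the paper's Pinsker--\textsc{kl} route is shorter to write given the Gaussian \textsc{kl} formula and would adapt more mechanically to perturbations where the two laws no longer have a common variance structure. Your separate treatment of $\theta=0$ and the identification $C_\beta=(2\sqrt{1-\theta^2})^{-1}$, $\gamma=\log(1/|\theta|)$, $\xi=1$ for Assumption (\ref{C1dep}) are also fine.
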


Notice that in the empirical study in Section~\ref{sec_empirical_noiid} the focus will be on inference for the \textsc{ar} parameter $\theta$.~Clearly, in this case it is not sufficient to focus on the marginal distribution of each $x_t$. Rather, one should leverage the bivariate distribution for the pairs $\tilde{x}_t:=(x_t,x_{t+1})$; see also  \citet{bernton2019approximate} where such a strategy is named delay reconstruction. This~procedure simply changes the focus to the bivariate stochastic process $(\tilde{x}_t)_{t\in\mathbb{Z}}$, but does not alter the mixing properties. In particular, if $\beta_x(k)$ and $\beta_{\tilde{x}}(k)$ are the mixing coefficients of $(x_t)_{t\in\mathbb{Z}}$ and $(\tilde{x}_t)_{t\in\mathbb{Z}}$, respectively, then from Definition~\ref{beta_mix} we have $\beta_{\tilde{x}}(k) = \beta_x(k-1)$, for $k\geq 1$. Notice that identifiability is a key to ensure concentration in the space of parameters of interest as in  Corollary~\ref{cor_1}. This motivates further research, beyond the scope of this article, to derive delay reconstruction strategies ensuring identifiability in more complex processes.

Leveraging Lemma~\ref{lemma_rademacher_dependent_full} along with the newly-introduced Assumption~(\ref{C1dep}), Proposition \ref{conv_rad_dep} states convergence of the \textsc{abc} posterior when $\varepsilon_n=\varepsilon$ is fixed and $n \to \infty$.
\begin{proposition}
\label{conv_rad_dep}
  Under Assumptions (\ref{C3}), (\ref{C4}) and (\ref{C1dep}), for any $ \varepsilon>\tilde{\varepsilon}$, it holds that
  \begin{eqnarray}
   \pi_n^{(\varepsilon)}(\theta)  \rightarrow \pi( \theta \mid \mathcal{D}_{\mathfrak{F}}(\mu_\theta,\mu^*) \leq \varepsilon ) \propto  \pi( \theta ) \mathds{1}\left\{\mathcal{D}_{\mathfrak{F}}(\mu_\theta,\mu^*) \leq \varepsilon \right\},
   \label{conv_ABC_dep}
  \end{eqnarray}
almost surely with respect to $y_{1:n} \sim \mu^{*(n)}$, as $n \rightarrow \infty$.
\end{proposition}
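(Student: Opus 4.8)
The plan is to replay, almost verbatim, the arguments behind Theorem~\ref{conv} and Corollary~\ref{conv_rad}, with the i.i.d.\ concentration bound \eqref{eq1} in Lemma~\ref{lemma_rademacher} replaced by its $\beta$--mixing counterpart \eqref{eq1-dep-iid} in Lemma~\ref{lemma_rademacher_dependent_full}. As in those proofs, the triangle inequality \eqref{ineq} together with its reverse version reduce the statement to showing that both $\mathcal{D}_\mathfrak{F}(\hat{\mu}_{y_{1:n}},\mu^*)$ and $\mathcal{D}_\mathfrak{F}(\hat{\mu}_{z_{1:n}},\mu_\theta)$ are, for $n$ large, below a deterministic sequence $\eta_n\to 0$: for the observed--data term this should hold almost surely with respect to $y_{1:n}\sim\mu^{*(n)}$, and for the synthetic--data term it should hold with $\mu_\theta^{(n)}$--probability tending to $1$ uniformly over $\theta$.

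First I would fix the vanishing sequences. With $s_n=\lfloor n/(2\lfloor\sqrt n\rfloor)\rfloor\sim\sqrt{n/2}\to\infty$, pick $\delta_n\to 0$ with $s_n\delta_n^2\to\infty$ (e.g.\ $\delta_n=s_n^{-1/4}$), so that $\exp(-s_n\delta_n^2/2b^2)$ is summable in $n$. By Assumption (\ref{C1dep}) the $\beta$--mixing coefficients of $(y_t)_{t\in\mathbb{Z}}$ and of each $(z_t)_{t\in\mathbb{Z}}$ satisfy $\beta(k)\le C_\beta e^{-\gamma k^\xi}$, so $2 s_n\beta(\lfloor\sqrt n\rfloor)\le 2 s_n C_\beta e^{-\gamma n^{\xi/2}}$ is summable in $n$ as well. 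Since $s_n\to\infty$, Assumption (\ref{C4}) gives $\mathfrak{R}_{\mu,s_n}(\mathfrak{F})\le\mathfrak{R}_{s_n}(\mathfrak{F})\to 0$ uniformly over $\mu\in\mathcal{P}(\mathcal{Y})$. Setting $\eta_n:=2\mathfrak{R}_{s_n}(\mathfrak{F})+4b/\sqrt n+\delta_n\to 0$ and $\varrho_n:=2\exp(-s_n\delta_n^2/2b^2)+2 s_n\beta(\lfloor\sqrt n\rfloor)\to 0$ (with $\sum_n\varrho_n<\infty$), Lemma~\ref{lemma_rademacher_dependent_full} yields, for every $n$,
\begin{eqnarray*}
\mathbb{P}_{y_{1:n}}\bigl[\mathcal{D}_\mathfrak{F}(\hat{\mu}_{y_{1:n}},\mu^*)>\eta_n\bigr]\le\varrho_n, \qquad \sup_{\theta\in\Theta}\ \mathbb{P}_{z_{1:n}\sim\mu_\theta^{(n)}}\bigl[\mathcal{D}_\mathfrak{F}(\hat{\mu}_{z_{1:n}},\mu_\theta)>\eta_n\bigr]\le\varrho_n,
\end{eqnarray*}
the uniformity in $\theta$ of the second bound following because (\ref{C1dep}) imposes the same $\beta$--mixing control for every $\theta$ while (\ref{C4}) bounds $\mathfrak{R}_{\mu_\theta,s_n}(\mathfrak{F})$ uniformly. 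By Borel--Cantelli, almost surely with respect to $y_{1:n}\sim\mu^{*(n)}$ one has $\mathcal{D}_\mathfrak{F}(\hat{\mu}_{y_{1:n}},\mu^*)\le\eta_n$ for all $n$ large enough.

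Then I would transfer this to the acceptance events as in the proof of Theorem~\ref{conv}. On the (a.s., eventually valid) event $\{\mathcal{D}_\mathfrak{F}(\hat{\mu}_{y_{1:n}},\mu^*)\le\eta_n\}$ and on the $\mu_\theta^{(n)}$--event $\{\mathcal{D}_\mathfrak{F}(\hat{\mu}_{z_{1:n}},\mu_\theta)\le\eta_n\}$, \eqref{ineq} and its reverse give
\begin{eqnarray*}
\mathds{1}\{\mathcal{D}_\mathfrak{F}(\mu_\theta,\mu^*)\le\varepsilon-2\eta_n\}\ \le\ \mathds{1}\{\mathcal{D}_\mathfrak{F}(\hat{\mu}_{z_{1:n}},\hat{\mu}_{y_{1:n}})\le\varepsilon\}\ \le\ \mathds{1}\{\mathcal{D}_\mathfrak{F}(\mu_\theta,\mu^*)\le\varepsilon+2\eta_n\}.
\end{eqnarray*}
Integrating the middle indicator over $z_{1:n}\sim\mu_\theta^{(n)}$ and then over $\theta\sim\pi$, and using the second probability bound above to absorb the complement of $\{\mathcal{D}_\mathfrak{F}(\hat{\mu}_{z_{1:n}},\mu_\theta)\le\eta_n\}$ (mass at most $\varrho_n$, uniformly in $\theta$), the unnormalised \textsc{abc} mass
\begin{eqnarray*}
N_n(A):=\int_A\int_{\mathcal{Y}^n}\mathds{1}\{\mathcal{D}_\mathfrak{F}(\hat{\mu}_{z_{1:n}},\hat{\mu}_{y_{1:n}})\le\varepsilon\}\,\mu_\theta^{(n)}(dz_{1:n})\,\pi(d\theta)
\end{eqnarray*}
satisfies, for every measurable $A\subseteq\Theta$ and $n$ large, almost surely in $y_{1:n}$,
\begin{eqnarray*}
(1-\varrho_n)\,\pi\{\theta\in A:\mathcal{D}_\mathfrak{F}(\mu_\theta,\mu^*)\le\varepsilon-2\eta_n\} &\le& N_n(A) \\
&\le& \varrho_n+\pi\{\theta\in A:\mathcal{D}_\mathfrak{F}(\mu_\theta,\mu^*)\le\varepsilon+2\eta_n\}.
\end{eqnarray*}
Letting $n\to\infty$, using $\eta_n,\varrho_n\to 0$ and continuity of $\pi$ along the monotone families of sets (up to the $\pi$--null sphere $\{\mathcal{D}_\mathfrak{F}(\mu_\theta,\mu^*)=\varepsilon\}$, handled exactly as in Corollary~\ref{conv_rad}), gives $N_n(A)\to\int_A\pi(\theta)\,\mathds{1}\{\mathcal{D}_\mathfrak{F}(\mu_\theta,\mu^*)\le\varepsilon\}\,d\theta$; since $\varepsilon>\tilde\varepsilon$ the limit of $N_n(\Theta)$ is strictly positive, so normalising yields \eqref{conv_ABC_dep}, a.s.\ with respect to $y_{1:n}\sim\mu^{*(n)}$.

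The main obstacle is the bookkeeping in the middle step: verifying that the reduced effective sample size $s_n\sim\sqrt{n/2}$ (rather than $n$) and the extra mixing correction $2 s_n\beta(\lfloor\sqrt n\rfloor)$ in \eqref{eq1-dep-iid} still leave enough room to choose $\delta_n\to 0$ with $s_n\delta_n^2\to\infty$ and $\sum_n\varrho_n<\infty$, so that Borel--Cantelli delivers \emph{almost sure}---not merely in probability---control of $\mathcal{D}_\mathfrak{F}(\hat{\mu}_{y_{1:n}},\mu^*)$; this is precisely where the exponential mixing rate of (\ref{C1dep}) is needed. The remaining points---uniformity in $\theta$ of the synthetic--data bound (inherited from the $\theta$--uniform form of (\ref{C1dep}) and (\ref{C4})) and the treatment of the boundary sphere---carry over unchanged from the proofs of Theorem~\ref{conv} and Corollary~\ref{conv_rad}.
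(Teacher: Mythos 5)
Your proposal is correct and follows essentially the same route as the paper: the $\beta$--mixing bound of Lemma~\ref{lemma_rademacher_dependent_full}, summability of the failure probabilities under the exponential mixing rate in (\ref{C1dep}), Borel--Cantelli plus Assumption (\ref{C4}), and the two-sided triangle inequality to transfer the $\varepsilon$--control from $\mathcal{D}_\mathfrak{F}(\hat{\mu}_{z_{1:n}},\hat{\mu}_{y_{1:n}})$ to $\mathcal{D}_\mathfrak{F}(\mu_\theta,\mu^*)$. The only difference is presentational: the paper concludes by invoking the proof of Theorem~1 in \citet{jiang2018approximate} (as in Corollary~\ref{conv_rad}), whereas you re-derive that final step explicitly by sandwiching the unnormalised \textsc{abc} mass, which is a legitimate unpacking of the same argument.
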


According to Proposition~\ref{conv_rad_dep}, replacing Assumption (\ref{C1})  with (\ref{C1dep}), does not alter the~uniform convergence properties of the \textsc{abc} posterior originally stated in Corollary~\ref{conv_rad} under the i.i.d.\ assumption. This  allows to inherit~the discussion after  Corollary~\ref{conv_rad} also beyond i.i.d.\ settings, while suggesting that similar extensions would be possible in the regime $\varepsilon_n\rightarrow\varepsilon^*$ and $n\rightarrow\infty$. These extensions are stated in Theorem~\ref{thm_rademacher_dep}, which provides an important~generalization~of~Theorem~\ref{thm_rademacher} beyond the i.i.d.~case.

\begin{theorem}
	\label{thm_rademacher_dep}
	Let $\varepsilon_n = \varepsilon^* + \bar{\varepsilon}_n$, and  assume (\ref{C2}), (\ref{C3}), (\ref{C4}) and (\ref{C1dep}). Then, if \smash{$\bar{\varepsilon}_n\to 0$} is such that \smash{$ \sqrt{n} \bar{\varepsilon}_n^2 \to\infty $} and \smash{$\bar{\varepsilon}_n / \mathfrak{R}_{s_n}(\mathfrak{F}) \to \infty $}, with $s_n = \lfloor n/(2 \lfloor \sqrt{n} \rfloor) \rfloor$, we have
\begin{eqnarray*}
\smash{\pi_n^{(\varepsilon^*+\bar{\varepsilon}_n)}\Bigl(\Bigl\{\theta:  \mathcal{D}_{\mathfrak{F}}(\mu_\theta,\mu^*)} > \varepsilon^* + \frac{4\bar{\varepsilon}_n}{3}+2 \mathfrak{R}_{s_n}(\mathfrak{F}) + \frac{4b}{\sqrt{n}} +\Bigl(\frac{2b^2}{s_n}\log \frac{n}{\bar{\varepsilon}_n^L}\Bigr)^{1/2}  \Bigr\}\Bigr)
\leq \frac{4 \cdot 3^L}{c_\pi n},
\end{eqnarray*}
	with \smash{$\mathbb{P}_{y_{1:n}}$}--probability going to $1$ as $n \rightarrow \infty$, where $ \mathfrak{R}_{s_n}=\sup_{\mu \in \mathcal{P}(\mathcal{Y})}\mathfrak{R}_{\mu,s_n}$.
\end{theorem}

As for  Proposition~\ref{conv_rad_dep}, also Theorem~\ref{thm_rademacher_dep} shows that informative concentration inequalities similar to those derived in Section~\ref{sec_abc_con}, can be obtained beyond the i.i.d. setting. These results provide  insights comparable to those in Theorem~\ref{thm_rademacher} with the only difference that in this case we require \smash{$ \sqrt{n} \bar{\varepsilon}_n^2 \to\infty $} rather than \smash{$ n \bar{\varepsilon}_n^2 \to\infty $} and the term $2b^2/n$ within the bound in Theorem~\ref{thm_rademacher} is now replaced by $2b^2/s_n$ with $s_n \sim \sqrt{n}/2$~as $n\to\infty$. This means that $\bar{\varepsilon}_n$ must shrink to zero with a rate at least $n^{1/4}$ slower than the one allowed in the i.i.d.~setting. This is an interesting result which clarifies that when moving beyond i.i.d.~regimes concentration can still be achieved, although with a slower rate. Such a rate might be pessimistic in some models and we believe it may be improved under future refinements of Lemma~\ref{lemma_rademacher_dependent_full}.

Notice that (\ref{C1dep}) could be relaxed to include  $\beta$-mixing processes whose coefficients $\beta(k)$ vanish to zero, but at a non-exponential rate, e.g., $\beta(k) \sim 1/(k+1)^\xi$ for some $\xi>0$. In this case, we could still use Lemma~\ref{lemma_rademacher_dependent_full} to prove concentration,~but~with~a~smaller~$s_n$,~that~would lead to even slower rates. However, we did not provide the most general result for the sake of readability. As for processes that are not $\beta$-mixing, we are not aware of results similar to Lemma~\ref{lemma_rademacher_dependent_full} in this context. This is an important direction for future research.

\vspace{-2pt}

\subsection{Illustrative simulation in non-i.i.d. settings}
\label{sec_empirical_noiid}
Let us  illustrate the results in Section~\ref{iid_def}  on a simple simulation study focusing on a contaminated Gaussian $\textsc{ar}(1)$ process. More specifically, the uncontaminated data are generated from the model $y^*_t=0.5 y^*_{t-1} +\varepsilon_t$ for $t=1, \ldots, 100$ with $\varepsilon_t \sim \mbox{N}(0,1)$ independently, and initial state $y^*_0 \sim  \mbox{N}(0,1)$. Then,~similarly to the simulation study in Section~\ref{sec_empirical}, these data are contaminated with a growing fraction $\alpha \in \{0.05, 0.10, 0.15\}$ of independent realizations from a $\mbox{N}(20,1)$. As such, each observed data point $y_t$  is either equal to $y_t^*$  or to a sample from $\mbox{N}(20,1)$, for $t=1, \ldots, 100$. For Bayesian inference, we assume an $\textsc{ar}(1)$ model $z_t=\theta z_{t-1} +\varepsilon_t$, with $\varepsilon_t \sim \mbox{N}(0,1)$, and focus on learning $\theta$ via discrepancy-based \textsc{abc} under a uniform prior on $[-1,1]$ for $\theta$.

 \begin{table}[b]
 \vspace{-7pt}
	\centering
		\caption{Concentration and runtimes in seconds (for a single discrepancy evaluation) of \textsc{abc} under \textsc{mmd} with Gaussian kernel, Wasserstein-1 distance, summary-based distance (covariance) and \textsc{kl}  divergence for an $\textsc{ar}(1)$ Huber contamination model with $\alpha\in\{0.05, 0.10, 0.15\}$. \smash{$ \textsc{mse}= \hat{\mathbb{E}}_{\mu^{*(n)}}[\hat{\mathbb{E}}_{\textsc{abc}}(\theta-\theta_0)^2]$, $\theta_0=0.5$}.}
	\begin{tabular}{lcccc}
		\hline
		& \textsc{mse} ($\alpha=0.05$)& \textsc{mse} ($\alpha=0.10$) & \textsc{mse} ($\alpha=0.15$)& time  \\
		\hline
		(\textsc{ips}) \textsc{mmd}     &{\bf 0.029} & {\bf 0.036} &   {\bf 0.049}& $<$ 0.01'' \\
		(\textsc{ips})  Wasserstein-1      &  0.043  &  0.091 &0.180 & $<$ 0.01'' \\
		(\textsc{ips})  summary (covariance)    &  0.575  &  0.998 &1.001 & $<$ 0.01'' \\
			\cline{1-5}
		(non--\textsc{ips}) \textsc{kl}      &   0.058 &  0.060  &0.061   & $<$ 0.01''  \\
		\hline
	\end{tabular}
	\label{tab:nid}
	\vspace{-10pt}
\end{table}

Rejection \textsc{abc} is implemented under the same settings and discrepancies considered in Section~\ref{sec_empirical}. However, as discussed in Section~\ref{iid_def}, in this case we focus on distances among the empirical distributions of the $n=m=100$ observed $(y_0,y_1), (y_1, y_2), \ldots, (y_{99}, y_{100})$~and synthetic $(z_0,z_1), (z_1, z_2), \ldots, (z_{99}, z_{100})$ pairs. This is consistent with the delay reconstruction strategy in \citet{bernton2019approximate} and is motivated by the fact that information on~$\theta$ is in the bivariate distributions, rather than in the marginals. For the same reason, in implementing summary-based \textsc{abc} we consider the sample covariance rather~than~the~sample~mean.

 Table~\ref{tab:nid} summarizes the concentration achieved by the different discrepancies analyzed under the aforementioned non-i.i.d. data generating process and model, at varying contamination $\alpha \in \{0.05, 0.10, 0.15\}$. The results are coherent with those displayed in  Table~\ref{tab:caption} for the i.i.d. scenario and further clarify that discrepancies with guarantees of uniform convergence and concentration generally provide a robust choice, including in non-i.i.d. contexts.

\vspace{5pt}
\section{Proofs of Theorems, Corollaries and Propositions}\label{app2}
\begin{proof}[Proof of Theorem \ref{conv}]
Note that, by leveraging the first inequality in Lemma~\ref{lemma_rademacher}, we have $\mathbb{P}_{y_{1:n}} [\mathcal{D}_{\mathfrak{F}}(\hat{\mu}_{y_{1:n}},\mu^*) > 2 \mathfrak{R}_{\mu^*,n}(\mathfrak{F}) + \delta ]\leq \exp(-n \delta^2/2b^2).$ Hence, setting $\delta=1/n^{1/4}$, and recalling that $ \mathfrak{R}_{\mu^*,n}(\mathfrak{F})\leq \mathfrak{R}_{n}(\mathfrak{F})$, it follows  $\mathbb{P}_{y_{1:n}} [\mathcal{D}_{\mathfrak{F}}(\hat{\mu}_{y_{1:n}},\mu^*) > 2 \mathfrak{R}_{n}(\mathfrak{F}) + 1/n^{1/4} ]\leq \exp(-\sqrt{n}/2b^2)$; note that $\sum_{n\geq 0} \exp(-\sqrt{n} /2b^2) < \infty$. Therefore, if we define the event 
\vspace{-2pt}
\begin{equation}
\label{equa:proof:nonzero0}
 \mbox{E}_n = \{ \mathcal{D}_{\mathfrak{F}}(\hat{\mu}_{y_{1:n}},\mu^*) \leq 2 \mathfrak{R}_{n}(\mathfrak{F}) + 1/n^{1/4}\},
\end{equation}
then $ \mathds{1}\{\mbox{E}_n^c\} \rightarrow 0$ almost surely with respect to \smash{$y_{1:n} \stackrel{\scriptsize \mbox{i.i.d.}}{\sim} \mu^*$} as $n \rightarrow \infty$. Now, notice that
\begin{equation*}
\scalemath{0.96}{\pi_n^{(\varepsilon)}\{\theta:\mathcal{D}_{\mathfrak{F}}(\mu_\theta,\mu^* ) \leq \varepsilon \} \\
= \pi_n^{(\varepsilon)}\{\theta:\mathcal{D}_{\mathfrak{F}}(\mu_\theta,\mu^* ) \leq \varepsilon \} \mathds{1}\{\mbox{E}_n\} +\pi_n^{(\varepsilon)}\{\theta:\mathcal{D}_{\mathfrak{F}}(\mu_\theta,\mu^* ) \leq \varepsilon \} \mathds{1}\{\mbox{E}_n^c\}.}
\end{equation*}
Hence, in the following we focus on $\pi_n^{(\varepsilon)}\{\theta:\mathcal{D}_{\mathfrak{F}}(\mu_\theta,\mu^* ) \leq \varepsilon \} \mathds{1}\{\mbox{E}_n\}$. To this end, recall that
\vspace{-19pt}
 \begin{align*}
 \pi_n^{(\varepsilon)}(\theta) & \propto \pi( \theta ) \int \mathds{1}\left\{  \mathcal{D}_{\mathfrak{F}}(\hat{\mu}_{y_{1:n}},\hat{\mu}_{z_{1:n}}) \leq \varepsilon  \right\} \mu_{\theta}^n(dz_{1:n})
 \\
 & = \pi( \theta ) \int \mathds{1}\left\{ \mathcal{D}_{\mathfrak{F}}(\mu_\theta,\mu^*)    \leq \varepsilon + W_{\mathfrak{F}}(z_{1:n})  \right\} \mu_{\theta}^n(dz_{1:n})=: \pi(\theta) p_n(\theta),
 \end{align*}
 where $W_{\mathfrak{F}}(z_{1:n}) =\mathcal{D}_{\mathfrak{F}}(\mu_\theta,\mu^*) - \mathcal{D}_{\mathfrak{F}}(\hat{\mu}_{y_{1:n}},\hat{\mu}_{z_{1:n}})$, whereas $p_n(\theta)$ denotes the probability of generating a sample $z_{1:n}$ from $\mu_{\theta}^n$ which leads to accept the parameter value $\theta$. Note that, by applying the triangle inequality twice, we have $$- \mathcal{D}_{\mathfrak{F}}(\hat{\mu}_{z_{1:n}},\mu_\theta) - \mathcal{D}_{\mathfrak{F}}(\hat{\mu}_{y_{1:n}},\mu^*)\leq W_{\mathfrak{F}}(z_{1:n}) \leq  \mathcal{D}_{\mathfrak{F}}(\hat{\mu}_{z_{1:n}},\mu_\theta) + \mathcal{D}_{\mathfrak{F}}(\hat{\mu}_{y_{1:n}},\mu^*),$$
and, hence, $| W_{\mathfrak{F}}(z_{1:n}) | \leq   \mathcal{D}_{\mathfrak{F}}(\hat{\mu}_{z_{1:n}},\mu_\theta) + \mathcal{D}_{\mathfrak{F}}(\hat{\mu}_{y_{1:n}},\mu^*)$. This implies that the quantity $p_{n}(\theta)$ can be bounded below and above as follows
\vspace{-2pt}
\begin{equation*}
\begin{split}
&\int \mathds{1}\left\{ \mathcal{D}_{\mathfrak{F}}(\mu_\theta,\mu^*)    \leq \varepsilon - \mathcal{D}_{\mathfrak{F}}(\hat{\mu}_{z_{1:n}},\mu_\theta) - \mathcal{D}_{\mathfrak{F}}(\hat{\mu}_{y_{1:n}},\mu^*)  \right\} \mu_{\theta}^n(dz_{1:n}) \\
&\qquad  \leq p_n(\theta)  \leq \int \mathds{1}\left\{ \mathcal{D}_{\mathfrak{F}}(\mu_\theta,\mu^*)    \leq \varepsilon + \mathcal{D}_{\mathfrak{F}}(\hat{\mu}_{z_{1:n}},\mu_\theta) + \mathcal{D}_{\mathfrak{F}}(\hat{\mu}_{y_{1:n}},\mu^*)  \right\} \mu_{\theta}^n(dz_{1:n}).
\end{split}
\end{equation*}
Applying again Lemma~\ref{lemma_rademacher} yields $\mathbb{P}_{z_{1:n}} \left[\mathcal{D}_{\mathfrak{F}}(\hat{\mu}_{z_{1:n}},\mu_\theta) > 2 \mathfrak{R}_{\mu_\theta,n}(\mathfrak{F}) + \delta \right]\leq \exp(-n \delta^2/2b^2)$.
Therefore, setting $\delta=1/n^{1/4}$, and recalling that $ \mathfrak{R}_{\mu_\theta,n}(\mathfrak{F}) \leq  \mathfrak{R}_{n}(\mathfrak{F})$ and that we are on the event given in~\eqref{equa:proof:nonzero0}, it follows
\vspace{-5pt}
\begin{multline}
-\exp(-\sqrt{n}/2b^2)+  \mathds{1}\{ \mathcal{D}_{\mathfrak{F}}(\mu_\theta,\mu^*)    \leq \varepsilon - 4 \mathfrak{R}_{n}(\mathfrak{F}) -2/n^{1/4} \}
\\ \leq p_n(\theta) \leq \exp(-\sqrt{n}/2b^2)+  \mathds{1}\{ \mathcal{D}_{\mathfrak{F}}(\mu_\theta,\mu^*)    \leq \varepsilon +  4 \mathfrak{R}_{n}(\mathfrak{F}) + 2/n^{1/4}   \}.
\label{equa:proof:nonzero}
\end{multline}
Now, notice that the acceptance probability is defined as $p_n=\int p_n(\theta) \pi(d \theta)$. Hence, integrating with respect to $\pi(\theta)$ in the above inequalities yields, for $n$ large enough,
$$ \scalemath{1}{ \pi\{\theta: \mathcal{D}_{\mathfrak{F}}(\mu_\theta,\mu^* )\leq  \varepsilon - c_{\mathfrak{F}}\} - e_n  \leq   p_n   \leq    \pi\{\theta: \mathcal{D}_{\mathfrak{F}}(\mu_\theta,\mu^* )\leq  \varepsilon + c_{\mathfrak{F}}\} + e_n,}
$$
where $c_{\mathfrak{F}}= 4 \lim\sup \mathfrak{R}_n(\mathfrak{F})$, as in Equation \eqref{eq_pABC}, thus concluding the first part of the proof.

To proceed with the second part of the proof, notice that, by the definition of $\tilde{\varepsilon}= \inf \{ \epsilon>0: \smash{\pi\{\theta:\mathcal{D}_{\mathfrak{F}}(\mu_\theta,\mu^* ) \leq\epsilon \}}>0 \}$, the left part of the above inequality is bounded away from zero for $n$ large enough,  whenever $\varepsilon -c_{\mathfrak{F}}  > \tilde{\varepsilon}$. This implies that also the acceptance probability $p_n$ is strictly positive.  As a consequence, for such $n$, it follows that
$$ \pi_n^{(\varepsilon)}(\mbox{A}) = \frac{\int  p_n(\theta)  \mathds{1}_{\mbox{\small A}}(\theta) \pi(d\theta)   } { \int p_n(\theta)\pi(d\theta)  }= \frac{\int  p_n(\theta)  \mathds{1}_{\mbox{\small A}}(\theta) \pi(d\theta)   } { p_n  },$$
is well-defined for any event $\mbox{A}$. Then, leveraging the upper bound in~\eqref{equa:proof:nonzero} yields
\begin{align*}
& \pi_n^{(\varepsilon)}\{\theta:\mathcal{D}_{\mathfrak{F}}(\mu_\theta,\mu^* ) > \varepsilon + 4 \mathfrak{R}_n(\mathfrak{F}) \}=  \frac{\int  p_n(\theta)
\mathds{1}\left\{ \mathcal{D}_{\mathfrak{F}}(\mu_\theta,\mu^* ) > \varepsilon + 4 \mathfrak{R}_n(\mathfrak{F}) \right\}
\pi(d\theta)   } { \int p_n(\theta)\pi(d\theta)  }
\\
& \qquad \leq \frac{\int
\mathds{1}\left\{ \mathcal{D}_{\mathfrak{F}}(\mu_\theta,\mu^* ) \leq \varepsilon + 4 \mathfrak{R}_n(\mathfrak{F}) + 2/n^{1/4} \right\}
\mathds{1}\left\{ \mathcal{D}_{\mathfrak{F}}(\mu_\theta,\mu^* ) > \varepsilon + 4 \mathfrak{R}_n(\mathfrak{F}) \right\}
\pi(d\theta)   } { p_n }
\\
& \qquad \qquad + \frac{\int  \exp(-\sqrt{n}/2b^2)
\mathds{1}\left\{  \mathcal{D}_{\mathfrak{F}}(\mu_\theta,\mu^* ) > \varepsilon + 4 \mathfrak{R}_n(\mathfrak{F})\right\}
\pi(d\theta)   } {p_n  }.
\end{align*}
To conclude the proof it is now necessary to control both terms. Note that we already proved that the denominator $p_n$ is bounded away from zero for $n$ large enough. Both numerators are bounded by $1$, and going to $0$ when $n\rightarrow\infty$. Thus, by the dominated convergence theorem, both summands in the above upper bound for \smash{$ \pi_n^{(\varepsilon)}\{\theta:\mathcal{D}_{\mathfrak{F}}(\mu_\theta,\mu^* ) > \varepsilon + 4 \mathfrak{R}_n(\mathfrak{F}) \}$}~go~to~zero.  This implies, as a direct consequence, that  $\smash{\pi_n^{(\varepsilon)}}\{\theta: \mathcal{D}_{\mathfrak{F}}(\mu_\theta,\mu^* ) \leq \varepsilon + 4 \mathfrak{R}_n(\mathfrak{F})\} \rightarrow 1$, almost surely with respect to \smash{$y_{1:n}\stackrel{\scriptsize \mbox{i.i.d.}}{\sim}  \mu^*$}  as $n \rightarrow \infty$, thereby concluding the proof.
\end{proof}

\begin{proof}[Proof of Corollary~\ref{conv_rad}]
Note that by combining Equation~\eqref{eq1} in Lemma~\ref{lemma_rademacher}~with~the result $\sum_{n>0} \exp[-n \delta^2/(2b^2)] < \infty$, the Borel--Cantelli Lemma implies that both $\mathcal{D}_{\mathfrak{F}}(\hat{\mu}_{z_{1:n}},\mu_{\theta})$ and $\mathcal{D}_{\mathfrak{F}}(\hat{\mu}_{y_{1:n}},\mu^*)$ converge to 0 almost surely when $\mathfrak{R}_n(\mathfrak{F}) \rightarrow 0$ as~$n \rightarrow \infty$. Hence, since  $$ \scalemath{0.94}{-\mathcal{D}_\mathfrak{F} (\hat{\mu}_{z_{1:n}},\mu_\theta) -\mathcal{D}_\mathfrak{F}(\hat{\mu}_{y_{1:n}},\mu^*) \leq \mathcal{D}_{\mathfrak{F}}(\mu_\theta,\mu^*) - \mathcal{D}_{\mathfrak{F}}(\hat{\mu}_{y_{1:n}},\hat{\mu}_{z_{1:n}}) \leq  \mathcal{D}_\mathfrak{F} (\hat{\mu}_{z_{1:n}},\mu_\theta)+  \mathcal{D}_\mathfrak{F}(\hat{\mu}_{y_{1:n}},\mu^*)},$$
it follows that $\mathcal{D}_\mathfrak{F} (\hat{\mu}_{z_{1:n}},\hat{\mu}_{y_{1:n}}) \rightarrow\mathcal{D}_\mathfrak{F} (\mu_\theta, \mu^*) $ almost surely as $n \rightarrow \infty$. 

Combining this result with the proof of Theorem 1 in \citet{jiang2018approximate} yields the statement of Corollary~\ref{conv_rad}. Notice that, as discussed in Section~\ref{sec_abc_lim}, the limiting pseudo-posterior in Corollary~\ref{conv_rad}  is well-defined only for those $\varepsilon > \tilde{\varepsilon}$, with $\tilde{\varepsilon}$ as in Theorem~\ref{conv}. 
\end{proof}

\begin{proof}[Proof of Theorem \ref{thm_rademacher}]
Since Lemma~\ref{lemma_rademacher} and $ \mathfrak{R}_{n}(\mathfrak{F})=\sup_{\mu \in  \mathcal{P}(\mathcal{Y})} \mathfrak{R}_{\mu,n}(\mathfrak{F}) \geq \mathfrak{R}_{\mu,n}(\mathfrak{F})$ hold for every $\mu  \in  \mathcal{P}(\mathcal{Y})$, then, for every  integer $n \geq 1$ and any scalar $\delta \geq 0$, Equation~\eqref{eq1}  implies $\mathbb{P}_{x_{1:n}} \left[
\mathcal{D}_{\mathfrak{F}}(\hat{\mu}_{x_{1:n}},\mu) \leq  2 \mathfrak{R}_{n}(\mathfrak{F})+\delta \right]\geq 1-\exp(-n \delta^2/2b^2).$
Moreover, since this result holds for any $\delta \geq 0$, it follows that
$ \mathbb{P}_{x_{1:n}} [\mathcal{D}_{\mathfrak{F}}(\hat{\mu}_{x_{1:n}},\mu) \leq 2\mathfrak{R}_{n}(\mathfrak{F})+(c_1 - 2\mathfrak{R}_{n}(\mathfrak{F}))] \geq 1 - \exp[-n (c_1-2 \mathfrak{R}_{n}(\mathfrak{F}))^2/2b^2],$ for any $c_{1}  \geq 2 \mathfrak{R}_{n}(\mathfrak{F})$. Hence,
\begin{equation}
	\label{eq1_proof_1}
	\mathbb{P}_{x_{1:n}} \left[
	\mathcal{D}_{\mathfrak{F}}(\hat{\mu}_{x_{1:n}},\mu) \leq  c_1 \right]\geq 1-\exp[-n (c_1-2 \mathfrak{R}_{n}(\mathfrak{F}))^2/2b^2].
\end{equation}
Recalling the settings of Theorem~\ref{thm_rademacher}, consider the sequence $\bar{\varepsilon}_n \to 0$ as $ n \to \infty $, with $n\bar{\varepsilon}^2_n \to \infty$ and $\bar{\varepsilon}_n / \mathfrak{R}_{n}(\mathfrak{F}) \to \infty$, which is possible by Assumption~(\ref{C4}). These regimes imply that $\bar{\varepsilon}_n$ goes to zero slower than $ \mathfrak{R}_{n}(\mathfrak{F}) $
and, hence, for $n$ large enough,  $\bar{\varepsilon}_n /3 > 2 \mathfrak{R}_{n}(\mathfrak{F})$. Therefore, under Assumptions (\ref{C1})--(\ref{C3}) it is now possible to apply  \eqref{eq1_proof_1} to $y_{1:n}$, by setting $c_1=\bar{\varepsilon}_n /3$, which yields
\vspace{3pt}
\begin{equation*}
	\mathbb{P}_{y_{1:n}}\left[ \mathcal{D}_{\mathfrak{F}}(\hat{\mu}_{y_{1:n}},\mu^*) \leq 
	\bar{\varepsilon}_n/3 \right] \geq 1- \exp[-n (\bar{\varepsilon}_n/3-2 \mathfrak{R}_{n}(\mathfrak{F}))^2/2b^2].
\end{equation*}

\noindent Since $-n (\bar{\varepsilon}_n/3-2 \mathfrak{R}_{n}(\mathfrak{F}))^2=-n\bar{\varepsilon}^2_n[1/9+4(\mathfrak{R}_{n}(\mathfrak{F})/\bar{\varepsilon}_n)^2-(4/3)\mathfrak{R}_{n}(\mathfrak{F})/\bar{\varepsilon}_n]$, it follows that $-n (\bar{\varepsilon}_n/3-2 \mathfrak{R}_{n}(\mathfrak{F}))^2 \rightarrow - \infty$  when $ n \to \infty $. From the above settings we  also have that $n\bar{\varepsilon}^2_n \to~\infty$ and $\mathfrak{R}_{n}(\mathfrak{F})/\bar{\varepsilon}_n \to 0$, when  $ n \to \infty $. Therefore, as a consequence, we obtain $1- \exp[-n (\bar{\varepsilon}_n/3-2 \mathfrak{R}_{n}(\mathfrak{F}))^2/2b^2] \rightarrow 1$ as $ n \to \infty $.
Hence, in the rest of this proof, we will restrict to the event
$ \{ \mathcal{D}_{\mathfrak{F}}(\hat{\mu}_{y_{1:n}},\mu^*) \leq \bar{\varepsilon}_n/3 \} $. 

Denote with $\mathbb{P}_{\theta, z_{1:n} }$ the joint distribution of $\theta \sim \pi$ and $ z_{1:n}$ i.i.d.\ from $\mu_\theta$. By definition of conditional probability, for any $c_2$, including $c_2>2 \mathfrak{R}_{n}(\mathfrak{F})$, it follows that
\vspace{3pt}
\begin{equation}
	\label{eq1_proof_rate}
	\begin{split}
		& \pi_n^{(\varepsilon^*+\bar{\varepsilon}_n)}\left(\{\theta:  \mathcal{D}_{\mathfrak{F}}(\mu_\theta,\mu^*) > \varepsilon^* + 4\bar{\varepsilon}_n/3 + c_2\} \right)\\
		& \qquad = \frac{\mathbb{P}_{\theta, z_{1:n} } \left[ \mathcal{D}_{\mathfrak{F}}(\mu_\theta,\mu^*) >  \varepsilon^* + 4\bar{\varepsilon}_n/3 + c_2 ,   \mathcal{D}_{\mathfrak{F}}(\hat{\mu}_{z_{1:n}},\hat{\mu}_{y_{1:n}}) \leq \varepsilon^*+\bar{\varepsilon}_n \right]}
		{\mathbb{P}_{\theta, z_{1:n} }  \left[  \mathcal{D}_{\mathfrak{F}}(\hat{\mu}_{z_{1:n}},\hat{\mu}_{y_{1:n}}) \leq \varepsilon^*+\bar{\varepsilon}_n \right]}.
	\end{split}
\end{equation}
To derive an upper bound for the above ratio, we first identify an upper bound for its numerator. In addressing this goal, we leverage the triangle inequality $\mathcal{D}_\mathfrak{F} (\mu_\theta, \mu^*) \leq  \mathcal{D}_\mathfrak{F} (\hat{\mu}_{z_{1:n}},\mu_\theta) +\mathcal{D}_\mathfrak{F} (\hat{\mu}_{z_{1:n}},\hat{\mu}_{y_{1:n}})+  \mathcal{D}_\mathfrak{F}(\hat{\mu}_{y_{1:n}},\mu^*)$, since $ \mathcal{D}_\mathfrak{F} $ is a semimetric, and the previously-proved result that the event $ \{ \mathcal{D}_{\mathfrak{F}}(\hat{\mu}_{y_{1:n}},\mu^*) \leq \bar{\varepsilon}_n/3 \} $ has $\mathbb{P}_{y_{1:n}}$--probability going to $1$, thereby obtaining 
\vspace{3pt}
\begin{align*}
	&\mathbb{P}_{\theta, z_{1:n} }  [ \mathcal{D}_{\mathfrak{F}}(\mu_\theta,\mu^*)  > \varepsilon^* + 4\bar{\varepsilon}_n/3 + c_2 , \mathcal{D}_{\mathfrak{F}}(\hat{\mu}_{z_{1:n}},\hat{\mu}_{y_{1:n}}) \leq \varepsilon^*+\bar{\varepsilon}_n ]
	\\
	& \qquad \quad \leq  \mathbb{P}_{\theta, z_{1:n} } [   \mathcal{D}_\mathfrak{F} (\hat{\mu}_{z_{1:n}},\hat{\mu}_{y_{1:n}})+ \mathcal{D}_\mathfrak{F} (\hat{\mu}_{z_{1:n}},\mu_\theta) + \mathcal{D}_\mathfrak{F}(\hat{\mu}_{y_{1:n}},\mu^*) > \varepsilon^* + 4\bar{\varepsilon}_n/3 + c_2 , \\
	& \qquad \quad \qquad   \qquad   \mathcal{D}_{\mathfrak{F}}(\hat{\mu}_{z_{1:n}},\hat{\mu}_{y_{1:n}}) \leq \varepsilon^* + \bar{\varepsilon}_n ] 
	\\
	&  \qquad \quad \leq  \mathbb{P}_{\theta, z_{1:n} } \left[ \mathcal{D}_\mathfrak{F} (\hat{\mu}_{z_{1:n}},\mu_\theta) + \mathcal{D}_\mathfrak{F}(\hat{\mu}_{y_{1:n}},\mu^*) >  \bar{\varepsilon}_n/3 + c_2 \right]  \leq   \mathbb{P}_{\theta, z_{1:n} }\left[  \mathcal{D}_\mathfrak{F} (\hat{\mu}_{z_{1:n}},\mu_\theta)> c_2 \right]. 
\end{align*}
Rewriting $\mathbb{P}_{\theta, z_{1:n} }\left[  \mathcal{D}_\mathfrak{F} (\hat{\mu}_{z_{1:n}},\mu_\theta)> c_2 \right]$ 
as $\int_{\theta \in \Theta}\mathbb{P}_{z_{1:n} }\left[  \mathcal{D}_\mathfrak{F} (\hat{\mu}_{z_{1:n}},\mu_\theta)> c_2 \mid \theta \right] \pi({\rm d}\theta)$ and applying \eqref{eq1_proof_1} to $z_{1:n}$ yields 
\begin{equation*}
	\begin{split}
		{\int_{\theta \in \Theta}}&{\mathbb{P}_{z_{1:n} }}[ \mathcal{D}_\mathfrak{F} (\hat{\mu}_{z_{1:n}}{,}\mu_\theta)> c_2 {\mid} \theta] \pi({\rm d}\theta) = \int_{\theta \in \Theta}(1-{\mathbb{P}_{z_{1:n} }}[  \mathcal{D}_\mathfrak{F} (\hat{\mu}_{z_{1:n}}{,}\mu_\theta) \leq c_2 {\mid} \theta]) \pi({\rm d}\theta) \\
		&\quad \leq \int_{\theta \in \Theta}\exp[-n (c_2-2 \mathfrak{R}_{n}(\mathfrak{F}))^2/2b^2] \pi({\rm d}\theta)=\exp[-n (c_2-2 \mathfrak{R}_{n}(\mathfrak{F}))^2/2b^2].
	\end{split}
\end{equation*}
Hence, the numerator of the ratio in Equation \eqref{eq1_proof_rate} can be upper bounded by $\exp[-n (c_2-2 \mathfrak{R}_{n}(\mathfrak{F}))^2/2b^2]$ for any $c_2>2 \mathfrak{R}_{n}(\mathfrak{F})$.
As for the denominator,
defining the event $\mbox{E}_n := \{ \theta \in \Theta : \mathcal{D}_{\mathfrak{F}}(\mu_\theta,\mu^*) \leq  \varepsilon^*+\bar{\varepsilon}_n/3 \} $ and applying again the triangle inequality,
we have that
\begin{align*}
	&\mathbb{P}_{\theta, z_{1:n} }   [\mathcal{D}_{\mathfrak{F}}(\hat{\mu}_{z_{1:n}},\hat{\mu}_{y_{1:n}}) \leq \varepsilon^*{+}\ \bar{\varepsilon}_n]
	\geq \int_{{\footnotesize \mbox{E}}_n} \mathbb{P}_{z_{1:n}}\left[\mathcal{D}_{\mathfrak{F}}(\hat{\mu}_{z_{1:n}},\hat{\mu}_{y_{1:n}}) \leq \varepsilon^*{+} \ \bar{\varepsilon}_n \mid \theta \right] \pi({\rm d}\theta)
	\\
	& \qquad \geq \int_{{\footnotesize \mbox{E}}_n}  \mathbb{P}_{z_{1:n}}\left[ \mathcal{D}_{\mathfrak{F}}(\hat{\mu}_{y_{1:n}},\mu^*) + \mathcal{D}_{\mathfrak{F}}(\mu_\theta,\mu^*) +  \mathcal{D}_{\mathfrak{F}}(\hat{\mu}_{z_{1:n}},\mu_\theta)
	\leq \varepsilon^*+\bar{\varepsilon}_n \mid \theta \right] \pi({\rm d}\theta)
	\\
	& \qquad \geq \int_{{\footnotesize \mbox{E}}_n}   \mathbb{P}_{z_{1:n}}\left[ \mathcal{D}_{\mathfrak{F}}(\hat{\mu}_{z_{1:n}},\mu_\theta)
	\leq \bar{\varepsilon}_n/3 \mid \theta \right] \pi({\rm d}\theta),
\end{align*}
where the last inequality follows directly from the fact that it is possible to restrict to the event $ \{ \mathcal{D}_{\mathfrak{F}}(\hat{\mu}_{y_{1:n}},\mu^*) \leq \bar{\varepsilon}_n/3 \} $, and that we are integrating over $\mbox{E}_n := \{ \theta \in \Theta : \mathcal{D}_{\mathfrak{F}}(\mu_\theta,\mu^*) \leq \varepsilon^*+ \bar{\varepsilon}_n/3 \} $. Applying~again~\eqref{eq1_proof_1} to $z_{1:n}$, with $c_1 = \bar{\varepsilon}_n/3 > 2\mathfrak{R}_{n}(\mathfrak{F})$, the last term of the above inequality can be further lower bounded~by
\begin{align*}
&{\int_{{\footnotesize \mbox{E}}_n}}{(} 1-\exp[ -n(\bar{\varepsilon}_n/3- 2\mathfrak{R}_{n}(\mathfrak{F}))^2/2b^2 ] ) \pi({\rm d}\theta)\\
& \qquad\qquad \qquad = \pi( \mbox{E}_n)(  1-\exp[ -n(\bar{\varepsilon}_n/3- 2\mathfrak{R}_{n}(\mathfrak{F}))^2/2b^2 ] ),
\end{align*}
with $\pi( \mbox{E}_n) \geq c_\pi (\bar{\varepsilon}_n/3)^L$ by (\ref{C2}), and, as shown before, $ 1-\exp[ -n(\bar{\varepsilon}_n/3- 2\mathfrak{R}_{n}(\mathfrak{F}))^2/2b^2 ] \to 1 $, when $ n \to \infty $, which implies, for $n$ large enough, $ 1-\exp[ -n(\bar{\varepsilon}_n/3- 2\mathfrak{R}_{n}(\mathfrak{F}))^2/2b^2 ] > 1/2$. Leveraging both results, the denominator in  \eqref{eq1_proof_rate} is lower bounded by  \smash{$(c_\pi/2) \left(\bar{\varepsilon}_n/{3}\right)^L$}.
Let us now combine the upper and lower bounds derived, respectively, for the numerator and the denominator of the ratio in \eqref{eq1_proof_rate}, to obtain
\begin{equation}
	\label{eq2_proof_rate}
	\pi_n^{(\varepsilon^*+\bar{\varepsilon}_n)}(\{\theta \in \Theta:  \mathcal{D}_{\mathfrak{F}}(\mu_\theta{,}\mu^*) > \varepsilon^* {+} \ 4\bar{\varepsilon}_n/3 + c_2\})
	\leq \frac{ \exp[-n (c_2-2 \mathfrak{R}_{n}(\mathfrak{F}))^2/2b^2]
	}{(c_\pi/2) (\bar{\varepsilon}_n/3)^L  },
\end{equation}
with $\mathbb{P}_{y_{1:n}}$--probability going to $1$ as $n \rightarrow \infty$. To conclude the proof it suffices to replace $c_2$ in \eqref{eq2_proof_rate} with \smash{$2 \mathfrak{R}_{n}(\mathfrak{F})+\sqrt{(2b^2/n)\log(M_n/\bar{\varepsilon}_n^L)}$}, which is never lower than $2 \mathfrak{R}_{n}(\mathfrak{F})$. Finally, setting $M_n=n
$ yields the statement of  Theorem~\ref{thm_rademacher}.
\end{proof}

\begin{proof}[Proof of Corollary~\ref{cor_1}]
Corollary~\ref{cor_1} follows by replacing the bounds in the~proof of Corollary 1 by \citet{bernton2019approximate} with the  newly-derived ones   in~Theorem \ref{thm_rademacher}. 
\end{proof}

\begin{proof}[Proof of Corollary~\ref{corr_MMD_2}]
Recall that in the case of \textsc{mmd} with  kernels bounded by $1$ we have  $\mathfrak{R}_{n}(\mathfrak{F}) \leq n^{-1/2}$. Hence, regarding the upper and lower bounds on $p_n$ in \eqref{eq_pABC} it holds 
\begin{equation*}
\begin{split}
& \pi\{\theta: \mathcal{D}_{\textsc{mmd}}(\mu_\theta,\mu^* )\leq  \varepsilon - c_{\mathfrak{F}}\} \geq  \pi\{\theta: \mathcal{D}_{\textsc{mmd}}(\mu_\theta,\mu^* )\leq  \varepsilon - 4/\sqrt{n}\},\\
& \pi\{\theta: \mathcal{D}_{\textsc{mmd}}(\mu_\theta,\mu^* )\leq  \varepsilon + c_{\mathfrak{F}}\}  \leq  \pi\{\theta: \mathcal{D}_{\textsc{mmd}}(\mu_\theta,\mu^* )\leq  \varepsilon +4/\sqrt{n}\}.
\end{split}
\end{equation*}
Combining the above inequalities with the result in \eqref{eq_pABC}, and taking the limit for $n \rightarrow \infty$, proves the first part of the statement. The second part is a direct application of Corollary~\ref{conv_rad} to the case of  \textsc{mmd} with bounded kernels, after noticing that the aforementioned inequality $\mathfrak{R}_{n}(\mathfrak{F}) \leq n^{-1/2}$ implies $\mathfrak{R}_n(\mathfrak{F}) \rightarrow 0$ as $n \rightarrow \infty$.
\end{proof}

\begin{proof}[Proof of Corollary \ref{corr_MMD}]
To prove Corollary \ref{corr_MMD}, it suffices to plug $\bar{\varepsilon}_n=[(\log n)/n]^{\frac{1}{2}}$ and ${b=1}$ into the statement of Theorem~\ref{thm_rademacher}, and then upper-bound the resulting radius via the inequalities $\mathfrak{R}_{n}(\mathfrak{F}) \leq n^{-1/2}$ and $\log n \geq 1$. The latter holds for any $n\geq 3$ and hence for $n \rightarrow \infty$. 
\end{proof}

\begin{proof}[Proof of Proposition \ref{corr_MMD_unb}]
 We first show that, under $(A1)$--$(A3)$, Assumptions 1~and~2 made in  \citet{bernton2019approximate} are satisfied under \textsc{mmd} when $f_n(\bar{\varepsilon}_n)=1/(n\bar{\varepsilon}^2_n)$ and~$c(\theta)=\mathbb{E}_{z}\left[ k(z,z) \right] $, with $z \sim \mu_{\theta}$. 
 
Consistent with the above goal, first recall that, by standard properties of \textsc{mmd}, 
\begin{equation}
	\label{eq3_proof}
	\mathcal{D}^2_{\textsc{mmd}}(\mu_1, \mu_2)=\mathbb{E}_{x_1,x_1'}\left[ k(x_1,x_1') \right] - 2 \mathbb{E}_{x_1,x_2}\left[ k(x_1,x_2) \right] + \mathbb{E}_{x_2,x_2'}\left[ k(x_2,x_2') \right],
\end{equation}
with $x_1,x_1' \sim \mu_1$ and $x_2,x_2' \sim \mu_2$, all independently; see e.g., \citet{cherief2022finite}. Since $k(x,x')=\left<\phi(x),\phi(x')\right>_{\mathcal{H}}$ \citep[see e.g.,][]{muandet2017kernel}, the above~result~implies that
\begin{equation}
	\label{eq4_proof}
	\begin{split}
		&\mathcal{D}^2_{\textsc{mmd}}(\mu_1, \mu_2)\\
		&=\mathbb{E}_{x_1,x_1'}\left[\left<\phi(x_1),\phi(x_1')\right>_{\mathcal{H}} \right] - 2 \mathbb{E}_{x_1,x_2}\left[ \left<\phi(x_1),\phi(x_2)\right>_{\mathcal{H}} \right] + \mathbb{E}_{x_2,x_2'}\left[\left<\phi(x_2),\phi(x_2')\right>_{\mathcal{H}} \right]\\
		&=||\mathbb{E}_{x_1}[\phi(x_1)]||_{\mathcal{H}}^2-2\left<\mathbb{E}_{x_1}[\phi(x_1)],\mathbb{E}_{x_2}[\phi(x_2)]\right>_{\mathcal{H}}+||\mathbb{E}_{x_2}[\phi(x_2)]||_{\mathcal{H}}^2 \\
		&=\left\|\mathbb{E}_{x_1}\left[\phi(x_1)\right]-\mathbb{E}_{x_2}\left[\phi(x_2)\right]\right\|^2_{\mathcal{H}}.
	\end{split}
\end{equation}
Leveraging Equations \eqref{eq3_proof}--\eqref{eq4_proof} and basic Markov inequalities, for any $\bar{\varepsilon}_n 	\geq 0$, it holds 
\begin{align*}
	&\mathbb{P}_{y_{1:n}}  \left[
	\mathcal{D}_{\textsc{mmd}}(\hat{\mu}_{y_{1:n}},\mu^*) > \bar{\varepsilon}_n
	\right]
	\leq (1/\bar{\varepsilon}_n^2) \mathbb{E}_{y_{1:n}}\left[\mathcal{D}^2_{\textsc{mmd}}(\hat{\mu}_{y_{1:n}},\mu^*)\right] \\
	& =  (1/\bar{\varepsilon}_n^2)  \mathbb{E}_{y_{1:n}}[  \left\|(1/n){\textstyle{\sum\nolimits_{i=1}^n}} \phi(y_i)-\mathbb{E}_{y}\left[\phi(y)\right]\right\|_{\mathcal{H}}^2 ] 
		\leq [1/(n^2\bar{\varepsilon}_n^2)]  {\textstyle{\sum\nolimits_{i=1}^n}} \mathbb{E}_{y_i}[  \left\|\phi(y_i)\right\|_{\mathcal{H}}^2 ]
	\\
	& \leq [1/(n\bar{\varepsilon}_n^2)]  \mathbb{E}_{y_1}[  \left\|\phi(y_1)\right\|_{\mathcal{H}}^2 ]
	=  [1/(n\bar{\varepsilon}_n^2)] \mathbb{E}_{y_1}\left[ k(y_1,y_1) \right]= [1/(n\bar{\varepsilon}_n^2)] \mathbb{E}_{y}\left[ k(y,y) \right],
\end{align*}
with $y \sim \mu^*$. Since $[1/(n\bar{\varepsilon}_n^2)] \mathbb{E}_{y}\left[ k(y,y) \right] \rightarrow 0$ as $n\rightarrow\infty$ by condition $(A1)$, we have that  $\mathcal{D}_{\textsc{mmd}}(\hat{\mu}_{y_{1:n}},\mu^*) \to 0$ in $\mathbb{P}_{y_{1:n}}$--probability as $n \to \infty$, thus meeting Assumption 1 in \citet{bernton2019approximate}. Moreover, as a direct consequence of the above derivations,
$$
\mathbb{P}_{z_{1:n}}\left[
\mathcal{D}_{\textsc{mmd}}(\hat{\mu}_{z_{1:n}},\mu_{\theta}) > \bar{\varepsilon}_n
\right] \leq  [1/(n\bar{\varepsilon}_n^2)] \mathbb{E}_{z}\left[ k(z,z) \right].
$$
Thus, setting $1/(n\bar{\varepsilon}^2_n)=f_n(\bar{\varepsilon}_n)$ and $\mathbb{E}_{z}\left[ k(z,z) \right]=c(\theta) $, with $z \sim \mu_{\theta}$, ensures that $$ \mathbb{P}_{z_{1:n}}\left[
\mathcal{D}_{\textsc{mmd}}(\hat{\mu}_{z_{1:n}},\mu_{\theta}] > \bar{\varepsilon}_n
\right] \leq c(\theta)f_n(\bar{\varepsilon}_n), $$ with $f_n(u)=1/(nu^2)$ strictly decreasing in $u$ for any fixed $n$, and $f_n(u) \rightarrow 0$ as $n \rightarrow \infty$, for fixed $u$. Moreover, by Assumptions $(A2)$--$(A3)$, $c(\theta) =\mathbb{E}_{z}\left[ k(z,z) \right]$ is $\pi$-integrable and there exist a $\delta_0>0$ and a $c_0>0$ such that $c(\theta)<c_0$ for any $\theta$ satisfying $ (\mathbb{E}_{z,z'}\left[ k(z,z') \right] - 2 \mathbb{E}_{z,y}\left[ k(y,z) \right] + \mathbb{E}_{y,y'}\left[ k(y,y') \right])^{1/2}= \mathcal{D}_{\textsc{mmd}}(\mu_\theta,\mu^*) \leq \varepsilon^* + \delta_0$. This ensures that Assumption~2 in \citet{bernton2019approximate}~holds. 

Finally, note that Assumption 3 in \citet{bernton2019approximate}, is verified by our Assumption (\ref{C2}). Therefore, under the assumptions in  Proposition~\ref{corr_MMD_unb} it is possible to apply Proposition~3 in \citet{bernton2019approximate} with $f_n(\bar{\varepsilon}_n)=1/(n\bar{\varepsilon}^2_n)$, $c(\theta)=\mathbb{E}_{z}\left[ k(z,z) \right] $ and $R=M_n$, which yields the concentration result in Proposition~\ref{corr_MMD_unb}.
\end{proof}

\begin{proof}[Proof of Proposition \ref{corr_Wasserstein_unb}]
Assumptions (A1') and (A2') imply that the norm $\|  \|x\| \|_{\psi_1} $ (see Equation  (14) in~\cite{lei2020w} for a definition) is uniformly bounded when $x\sim \mu_\theta$ and $x\sim \mu^*$. Thus, we can use (15) in~\cite{lei2020w} to obtain
$$ \mathbb{P}_{z_{1:n}}\left( \mathcal{D}_{\textrm{wass}}(\mu_\theta,\hat{\mu}_{z_{1:n}}) > u \right) \leq \exp[-c' n (u-c_1 n^{-1/\max(d,3)})_+^2 ] =: f_n(u) $$
and, similarly,
$ \mathbb{P}_{y_{1:n}}\left( \mathcal{D}_{\textrm{wass}}(\mu^*,\hat{\mu}_{y_{1:n}} ) > u \right) \leq f_n(u)$,
where $c'$ depends only on the constant~$c$ in (A1') and (A2'), and $c_1$ is a universal constant. Notice that (15) in~\cite{lei2020w} requires $d>2$. If $d=1$, we can define $x'=(x,0,0)\in\mathbb{R}^3$ and apply the result in $\mathbb{R}^3$ (we can proceed similarly if $d=2$). This is why Proposition \ref{corr_Wasserstein_unb} is stated with $\max(d,3)$.
The above bounds ensure that Assumptions 1--2 in~\citet{bernton2019approximate} are met. Moreover, condition (\ref{C2})~verifies Assumption 3 in~\citet{bernton2019approximate}. Thus, we can apply Proposition 3 of~\citet{bernton2019approximate} with $f_n(\cdot)$ defined as above and under the vanishing conditions on $\bar{\varepsilon}_n$ in Proposition \ref{corr_Wasserstein_unb}.
This implies that when $n \rightarrow \infty$, and $\bar{\varepsilon}_n \rightarrow 0$ such that $f_n(\bar{\varepsilon}_n)\rightarrow 0$,  for some~${C \in (0, \infty)}$ and any $M_n \in (0, \infty)$, with $\mathbb{P}_{y_{1:n}}$--probability going to $1$ as $n \rightarrow \infty$,~it~holds
	\begin{eqnarray*}
	\pi_n^{(\varepsilon^*+\bar{\varepsilon}_n)}(\{\theta:  \mathcal{D}_{\textrm{wass}}(\mu_\theta,\mu^*) > \varepsilon^* + 4\bar{\varepsilon}_n/3 + f_n^{-1}\left( \bar{\varepsilon}_n^L/M_n \right) \}) \leq C/M_n.
	\end{eqnarray*}
Recall that our restriction $n^{-1/\max(d,3)} \ll \bar{\varepsilon}_n $, together with $n\bar{\varepsilon}^2_n \to \infty $, imply that for $n$ large enough $f_n(\cdot)$ is invertible and  $f_n(\bar{\varepsilon}_n)\rightarrow 0$. On such a range, we have that
$ f_n^{-1}\left( \bar{\varepsilon}_n^L/M_n\right) = [(1/(c'n)) \log(M_n/\bar{\varepsilon}_n^L) ]^{1/2} + c_1 n^{-{1}/{\max(d,3)}},$
which concludes the proof.
\end{proof}


\begin{proof}[Proof of Lemma~\ref{lemma_rademacher_dependent_full}]
Let $(x_t)_{t\in\mathbb{Z}}$ be a stochastic process with $\beta$-mixing coefficients $\beta(k)$, for $k\in\mathbb{N}$. Moreover, denote with $\mu^{(n)}$ and $\mu$ the joint distribution of a sample $x_{1:n}$ from  $(x_t)_{t\in\mathbb{Z}}$ and its constant marginal $\mu=\mu^{(1)}$, respectively. Then, by combining Proposition 2 and Lemma 2 in~\cite{Mohri2008} under our notation, we have that for any $b$-uniformly bounded class $\mathfrak{F}$, any integer $n \geq 1$ and any scalar $\delta \geq 0$, the  inequality
 \begin{equation*}
 \mathbb{P}_{x_{1:n}} \left[\mathcal{D}_{\mathfrak{F}}(\hat{\mu}_{x_{1:n}},\mu)  >  2 \mathfrak{R}_{\mu,n/(2K)}(\mathfrak{F})+\delta \right] \\
\leq 2\exp(-n \delta^2/ K b^2) +2(n/2K - 1) \beta(K),
\end{equation*}
holds for every integer $K>0$ such that $n/(2K)\in\mathbb{N}$, where $ \mathfrak{R}_{\mu,n/(2K)}$ is the Rademacher complexity based an i.i.d. sample of size $n/(2K)$ from $\mu$; see Definition~\ref{def_rade}. The above~concentration inequality also implies
 \begin{equation}\label{eq1-dep}
 \begin{split}
&\mathbb{P}_{x_{1:n}} \left[\mathcal{D}_{\mathfrak{F}}(\hat{\mu}_{x_{1:n}},\mu)  \leq  2 \mathfrak{R}_{\mu,n/(2K)}(\mathfrak{F})+\delta \right] \\
& \qquad \qquad \qquad \qquad \qquad  \geq 1-2\exp(-n \delta^2/ K b^2) -2(n/2K - 1) \beta(K)\\
& \qquad \qquad \qquad \qquad \qquad \qquad  \qquad \geq 1-2\exp[-n \delta^2/ (4K b^2)] -2(n/2K) \beta(K).
\end{split}
\end{equation}
Notice that, in order to ensure that both $2\exp(-n \delta^2/ 4K b^2)$ and $2(n/2K) \beta(K)$ vanish to zero (under Assumption (\ref{C1dep}) for $\beta(K)$), it is tempting to apply \eqref{eq1-dep} with $K = n^{\alpha} $~for~some $0<\alpha < 1$. Unfortunately,~there is no reason for such a $K$ to be an integer. A solution, would be to let $K = \lfloor n^\alpha \rfloor$, but in this case~$n/(2K)$ might not be an integer. To address such issues, it is necessary~to~consider a careful modification of \eqref{eq1-dep}. To this end write the Euclidean division $n=n' + r$ where $n' =2Kh \leq n$, $h = \lfloor n/(2K) \rfloor $ and $0\leq r<2K$. Then, under the common marginal~assumption and recalling also Proposition 1 in~\cite{Mohri2008} together with the triangle~inequality and the fact that the functions $f$ within $\mathfrak{F}$ are $b$-uniformly bounded, we have that
\begin{align*}
 \mathcal{D}_{\mathfrak{F}}(\hat{\mu}_{x_{1:n}},\mu)
 & = \sup_{f\in\mathfrak{F}} \left| \frac{1}{n} \sum\nolimits_{i=1}^n [f(x_i) - \mathbb{E}_{\mu} f(x)] \right|
 \\
 & \leq \sup_{f\in\mathfrak{F}} \left| \frac{1}{n} \sum\nolimits_{i=1}^{n'} [f(x_i) - \mathbb{E}_{\mu} f(x)] \right| +  \frac{1}{n} \sup_{f\in\mathfrak{F}} \left| \sum\nolimits_{i=n'+1}^{n'+r} [f(x_i) - \mathbb{E}_{\mu} f(x)] \right|
 \\
 & \leq \sup_{f\in\mathfrak{F}} \left| \frac{1}{n'} \sum\nolimits_{i=1}^{n'} [f(x_i) - \mathbb{E}_{\mu} f(x)] \right| +  \frac{2br}{n} \leq  \mathcal{D}_{\mathfrak{F}}(\hat{\mu}_{x_{1:n'}},\mu) + \frac{4bK}{n},
\end{align*}
where the last inequality follows directly from the definition of $ \mathcal{D}_{\mathfrak{F}}(\hat{\mu}_{x_{1:n'}},\mu) $ together~with the fact that  $0\leq r<2K$. Applying now \eqref{eq1-dep} to  $ \mathcal{D}_{\mathfrak{F}}(\hat{\mu}_{x_{1:n'}},\mu)$ yields
\vspace{-3pt}
 \begin{equation*}
 \begin{split}
&\mathbb{P}_{x_{1:n}} \left[\mathcal{D}_{\mathfrak{F}}(\hat{\mu}_{x_{1:n'}},\mu)+ 4bK/n  \leq  2 \mathfrak{R}_{\mu,h}(\mathfrak{F})+\delta + 4bK/n\right] \geq 1-2\exp(-h \delta^2/ 2b^2) -2h \beta(K).
\end{split}
\end{equation*}
Therefore, since $\mathcal{D}_{\mathfrak{F}}(\hat{\mu}_{x_{1:n}},\mu) \leq  \mathcal{D}_{\mathfrak{F}}(\hat{\mu}_{x_{1:n'}},\mu) + 4bK/n$ we also have that
\vspace{-3pt}
 \begin{equation*}
 \begin{split}
&\mathbb{P}_{x_{1:n}} \left[\mathcal{D}_{\mathfrak{F}}(\hat{\mu}_{x_{1:n}},\mu)   \leq  2 \mathfrak{R}_{\mu,h}(\mathfrak{F})+\delta + 4bK/n\right] \geq 1-2\exp(-h \delta^2/ 2b^2) -2h \beta(K).
\end{split}
\end{equation*}
To conclude the proof, notice that to prove convergence and concentration of the \textsc{abc} posterior it will be sufficient to let $K=\lfloor \sqrt{n} \rfloor$. Therefore, by replacing $K=\lfloor \sqrt{n} \rfloor$ in the above inequality and within the expression for $h= \lfloor n/(2K) \rfloor $ we have
\vspace{-3pt}
 \begin{multline*}
\mathbb{P}_{x_{1:n}} \Bigl[\mathcal{D}_{\mathfrak{F}}(\hat{\mu}_{x_{1:n}},\mu) \leq  2 \mathfrak{R}_{\mu,s_n}(\mathfrak{F}) + \frac{4b}{\sqrt{n}} +\delta \Bigr]\geq 1-2\exp(-s_n \delta^2/ 2b^2) -2s_n \beta(\lfloor \sqrt{n} \rfloor)
\end{multline*}
where $s_n= \lfloor n/(2\lfloor \sqrt{n} \rfloor) \rfloor $ and the term $4b/\sqrt{n}$ follows directly from the fact that~$\lfloor \sqrt{n}\rfloor /n \leq \sqrt{n}/n =1/\sqrt{n}$.
\end{proof}

\begin{proof}[Proof of Lemma~\ref{lemma_ar}]
The proof follows the arguments used in Chapter 2 of~\citet{doukhan} to study general Markov chains. In particular, when $(x_t)_{t \in \mathbb{Z}}$ is a stationary Markov chain with invariant distribution $\pi(\cdot)$ and transition kernel $P(\cdot,\cdot)$, a result proven in~\citet{davydov} and recalled in page 87--88 of \citet{doukhan} gives
$ \beta(k) = \mathbb{E}_{x\sim\pi} \| P^k(x,\cdot) - \pi(\cdot) \|_{\textsc{TV}} .$
When $-1<\theta<1$, standard results for the \textsc{ar}(1) model in Lemma~\ref{lemma_ar} lead to the invariant distribution $\pi = \mbox{N}( \psi/(1-\theta), \sigma^2/(1-\theta^2) )$. 

As for $P^k(x,\cdot)$, notice that, under such an \textsc{ar}(1) model with starting point $x$, we can write
\vspace{-19pt}
\begin{align*}
 x_k & = \theta x_{k-1} + \psi + \varepsilon_k =\theta ( \theta x_{k-2} +  \psi + \varepsilon_{k-1} ) +  \psi + \varepsilon_k  = \dots \\
     & = \theta^k x +  \psi \sum\nolimits_{l=0}^{k-1}\theta^l + \sum\nolimits_{l=0}^{k-1} \theta^l \varepsilon_{k-l}.
\end{align*}
Therefore, $ P^k(x,\cdot) = \mbox{N}( \theta^k x +  \psi \sum_{l=0}^{k-1}\theta^l , \sigma^2 \sum_{l=0}^{k-1}\theta^{2l}  ) $.  

Moreover, notice that, by direct application of standard properties of finite power series, we have  \smash{$\sum_{l=0}^{k-1}\theta^l=(1-\theta^k)/(1-\theta)$} and \smash{$\sum_{l=0}^{k-1}\theta^{2l}=\smash{(1-\theta^{2k})/(1-\theta^2)}$}. Since our goal is to derive an upper bound for $\beta(k)$~and provided that the \textsc{kl} divergence among Gaussian densities is available in closed form, let us first consider the Pinsker's inequality
$$ \| P^k(x,\cdot) - \pi(\cdot) \|_{\textsc{TV}} \leq [\mathcal{D}_{\textrm{KL}}( P^k(x,\cdot),\pi(\cdot) )/2 ]^{1/2} $$
where $\mathcal{D}_{\textrm{KL}}$ stands for the \textsc{kl} divergence. Since both $P^k(x,\cdot)$ and $\pi(\cdot)$ are Gaussian, then 
\begin{align*}
&\mathcal{D}_{\textrm{KL}}( P^k(x,\cdot),\pi(\cdot) )
\\
& = \frac{1}{2} \left[ \frac{\sigma^2 (1-\theta^{2k})   }{\sigma^2 } - 1 + \frac{[ \theta^k x +  \psi(1-\theta^k)/(1-\theta)  - \psi/(1-\theta)]^2}{ \sigma^2/(1-\theta^2) }  + \log \frac{  \sigma^2 }{\sigma^2(1-\theta^{2k}) } \right]
\\
& =  \frac{1}{2} \left[ (1-\theta^{2k}) - 1 + \frac{\theta^{2k}[  x- \psi/(1-\theta)   ]^2}{ \sigma^2/(1-\theta^2) }  + \log \frac{1}{1-\theta^{2k} } \right]
\\
& =  \frac{1}{2} \left[ - \theta^{2k} + \frac{\theta^{2k}[  x- \psi/(1-\theta)   ]^2}{ \sigma^2/(1-\theta^2) }  +\log \left( 1+ \frac{\theta^{2k}}{1-\theta^{2k} } \right) \right]
\\
& \leq \frac{1}{2} \left[- \theta^{2k} + \frac{\theta^{2k}[  x- \psi/(1-\theta)   ]^2}{ \sigma^2/(1-\theta^2) }  +  \frac{\theta^{2k}}{1-\theta^{2k} } \right] \leq \frac{\theta^{2k}}{2} \left[ -1 + \frac{[  x- \psi/(1-\theta)   ]^2}{ \sigma^2/(1-\theta^2) } +  \frac{1}{1-\theta^{2} } \right].
\end{align*}
Therefore, by leveraging the above result, together with standard properties of the expectation, we have
\begin{align*}
 \beta(k)
 & \leq \mathbb{E}_{x\sim\pi} [\mathcal{D}_{\textrm{KL}}( P^k(x,\cdot),\pi(\cdot) )/2 ]^{1/2} 
\leq [\mathbb{E}_{x\sim\pi} \mathcal{D}_{\textrm{KL}}( P^k(x,\cdot),\pi(\cdot) )/2 ]^{1/2} 
  \\
  &\leq \left[ \frac{\theta^{2k}}{4} \left[ -1 + \frac{\mathbb{E}_{x\sim\pi}[  x- \psi/(1-\theta)   ]^2}{ \sigma^2/(1-\theta^2) } +  \frac{1}{1-\theta^{2} } \right] \right]^{1/2}  = \sqrt{  \frac{\theta^{2k}}{4(1-\theta^2)}} = \frac{|\theta|^k}{2\sqrt{1-\theta^2}},
\end{align*}
which concludes the proof.
\end{proof}

\begin{proof}[Proof of Proposition~\ref{conv_rad_dep}]
Under Assumption (\ref{C1dep}), for any fixed $\delta>0$,  we have~that $\sum\nolimits_{n>0} [ 2\exp(-s_n \delta^2/2b^2) + 2 s_nC_{\beta}  \exp( - \gamma \lfloor \sqrt{n} \rfloor^{\xi})  ] < \infty.$ Therefore, combining Lemma~\ref{lemma_rademacher_dependent_full} with Assumption  (\ref{C4}), both $\mathcal{D}_{\mathfrak{F}}(\hat{\mu}_{z_{1:n}},\mu_{\theta})$ and $\mathcal{D}_{\mathfrak{F}}(\hat{\mu}_{y_{1:n}},\mu^*)$  converge to 0 almost surely as $n \to \infty$, by the Borel--Cantelli Lemma. As a result, since  $$\scalemath{0.95}{ -\mathcal{D}_\mathfrak{F} (\hat{\mu}_{z_{1:n}},\mu_\theta) -\mathcal{D}_\mathfrak{F}(\hat{\mu}_{y_{1:n}},\mu^*) \leq \mathcal{D}_{\mathfrak{F}}(\mu_\theta,\mu^*) - \mathcal{D}_{\mathfrak{F}}(\hat{\mu}_{y_{1:n}},\hat{\mu}_{z_{1:n}}) \leq  \mathcal{D}_\mathfrak{F} (\hat{\mu}_{z_{1:n}},\mu_\theta)+  \mathcal{D}_\mathfrak{F}(\hat{\mu}_{y_{1:n}},\mu^*)},$$
it holds that $\mathcal{D}_\mathfrak{F} (\hat{\mu}_{z_{1:n}},\hat{\mu}_{y_{1:n}}) \rightarrow\mathcal{D}_\mathfrak{F} (\mu_\theta, \mu^*) $ almost surely as $n \rightarrow \infty$. To conclude it suffices to apply  again the proof of Theorem 1 in \citet{jiang2018approximate}; see also the proof of Corollary~\ref{conv_rad}.
\end{proof}

\begin{proof}[Proof of Theorem~\ref{thm_rademacher_dep}]
To prove Theorem~\ref{thm_rademacher_dep} we will follow the same line of reasoning~as in the proof of Theorem~\ref{thm_rademacher}. However, in this case we leverage Lemma~\ref{lemma_rademacher_dependent_full} instead of Lemma~\ref{lemma_rademacher}. To this end, letting $\delta=c_1-2 \mathfrak{R}_{s_n}(\mathfrak{F}) - 4b/\sqrt{n}$ with $c_1 \geq 2 \mathfrak{R}_{s_n}(\mathfrak{F}) + 4b/\sqrt{n}$ and $ \mathfrak{R}_{s_n} =\sup_{\mu \in \mathcal{P}(\mathcal{Y})} \mathfrak{R}_{\mu,s_n} $, we obtain, under Assumption (\ref{C1dep}), Equation \eqref{eq1_proof_1dep}  below, instead of \eqref{eq1_proof_1}. 
\begin{multline}
	\label{eq1_proof_1dep}
	\mathbb{P}_{x_{1:n}} \left[
	\mathcal{D}_{\mathfrak{F}}(\hat{\mu}_{x_{1:n}},\mu) \leq  c_1 \right] \\ \geq 1-2\exp[- s_n (c_1-2 \mathfrak{R}_{s_n}(\mathfrak{F}) - 4b/\sqrt{n} )^2/ 2b^2]
-2 s_nC_{\beta} \exp( - \gamma \lfloor \sqrt{n} \rfloor^{\xi})  .
\end{multline}
As in Theorem~\ref{thm_rademacher}, let $c_1=\bar{\varepsilon}_n /3$ and notice that, by the settings of Theorem~\ref{thm_rademacher_dep}, for $n$ large enough $\bar{\varepsilon}_n /3>2 \mathfrak{R}_{s_n}(\mathfrak{F}) + 4b/\sqrt{n}$. Therefore, applying \eqref{eq1_proof_1dep} to $y_{1:n}$, with $c_1=\bar{\varepsilon}_n /3$, leads to the following upper bound
\begin{multline*}
	\mathbb{P}_{y_{1:n}}\left[ \mathcal{D}_{\mathfrak{F}}(\hat{\mu}_{y_{1:n}},\mu^*) \leq
	\bar{\varepsilon}_n/3 \right] \\ \geq 1-2\exp[- s_n (\bar{\varepsilon}_n /3-2 \mathfrak{R}_{s_n}(\mathfrak{F}) - 4b/\sqrt{n} )^2/ 2b^2]
-2 s_nC_{\beta} \exp( - \gamma \lfloor \sqrt{n} \rfloor^{\xi}).
\end{multline*}
Recall that, under the settings of Theorem~\ref{thm_rademacher_dep}, we have that $\sqrt{n} \bar{\varepsilon}^2_n \to \infty$ and $\bar{\varepsilon}_n/\mathfrak{R}_{s_n}(\mathfrak{F}) \to \infty$ and, therefore,
$$s_n (\bar{\varepsilon}_n /3-2 \mathfrak{R}_{s_n}(\mathfrak{F}) - 4b/\sqrt{n} )^2 \sim s_n \bar{\varepsilon}^2_n /9 \sim\sqrt{n} \bar{\varepsilon}^2_n  \to\infty.$$ Combining~this result with the fact that  $2 s_nC_{\beta} \exp( - \gamma \lfloor \sqrt{n} \rfloor^{\xi}) \to 0$ as $n \to \infty$, it follows that the lower bound for \smash{$\mathbb{P}_{y_{1:n}}\left[ \mathcal{D}_{\mathfrak{F}}(\hat{\mu}_{y_{1:n}},\mu^*) \leq \bar{\varepsilon}_n/3 \right]$} goes to $1$ as $n \to \infty$. Hence,~in~the~remaining part of the proof, we will restrict to the event
$ \{ \mathcal{D}_{\mathfrak{F}}(\hat{\mu}_{y_{1:n}},\mu^*) \leq \bar{\varepsilon}_n/3 \} $. 

Let $\mathbb{P}_{\theta, z_{1:n} }$ corresponds to the joint distribution of $\theta \sim \pi$ and $ z_{1:n}$ from \smash{$\mu_\theta^{(n)}$}. Then, as a direct consequence of the definition of conditional probability, for every positive $c_2$, including $c_2>2 \mathfrak{R}_{s_n}(\mathfrak{F}) + 4b/\sqrt{n}$, it follows that
\begin{equation}
	\label{eq1_proof_rate_dep}
	\begin{split}
		& \pi_n^{(\varepsilon^*+\bar{\varepsilon}_n)}\left(\{\theta:  \mathcal{D}_{\mathfrak{F}}(\mu_\theta,\mu^*) > \varepsilon^* + 4\bar{\varepsilon}_n/3 + c_2\} \right)\\
		& \qquad = \frac{\mathbb{P}_{\theta, z_{1:n} } \left[ \mathcal{D}_{\mathfrak{F}}(\mu_\theta,\mu^*) >  \varepsilon^* + 4\bar{\varepsilon}_n/3 + c_2 ,   \mathcal{D}_{\mathfrak{F}}(\hat{\mu}_{z_{1:n}},\hat{\mu}_{y_{1:n}}) \leq \varepsilon^*+\bar{\varepsilon}_n \right]}
		{\mathbb{P}_{\theta, z_{1:n} }  \left[  \mathcal{D}_{\mathfrak{F}}(\hat{\mu}_{z_{1:n}},\hat{\mu}_{y_{1:n}}) \leq \varepsilon^*+\bar{\varepsilon}_n \right]}.
	\end{split}
\end{equation}
To upper bound the ratio in \eqref{eq1_proof_rate_dep}, let us first derive an upper bound for~the numerator.~To this end, consider the triangle inequality $\mathcal{D}_\mathfrak{F} (\mu_\theta, \mu^*) \leq  \mathcal{D}_\mathfrak{F} (\hat{\mu}_{z_{1:n}},\mu_\theta) +\mathcal{D}_\mathfrak{F} (\hat{\mu}_{z_{1:n}},\hat{\mu}_{y_{1:n}})+  \mathcal{D}_\mathfrak{F}(\hat{\mu}_{y_{1:n}},\mu^*)$  (recall that $ \mathcal{D}_\mathfrak{F} $ is a semimetric), along with the previously-proved result that the event $ \{ \mathcal{D}_{\mathfrak{F}}(\hat{\mu}_{y_{1:n}},\mu^*) \leq \bar{\varepsilon}_n/3 \} $ has $\mathbb{P}_{y_{1:n}}$--probability going to $1$. Hence,~for~$n$~large~enough we have 
\begin{align*}
	&\mathbb{P}_{\theta, z_{1:n} }  [ \mathcal{D}_{\mathfrak{F}}(\mu_\theta,\mu^*)  > \varepsilon^* + 4\bar{\varepsilon}_n/3 + c_2 , \mathcal{D}_{\mathfrak{F}}(\hat{\mu}_{z_{1:n}},\hat{\mu}_{y_{1:n}}) \leq \varepsilon^*+\bar{\varepsilon}_n ]
	\\
	& \qquad \leq  \mathbb{P}_{\theta, z_{1:n} } [   \mathcal{D}_\mathfrak{F} (\hat{\mu}_{z_{1:n}},\hat{\mu}_{y_{1:n}})+ \mathcal{D}_\mathfrak{F} (\hat{\mu}_{z_{1:n}},\mu_\theta) + \mathcal{D}_\mathfrak{F}(\hat{\mu}_{y_{1:n}},\mu^*) > \varepsilon^* + 4\bar{\varepsilon}_n/3 + c_2 , \\
	& \qquad  \qquad  \qquad  \mathcal{D}_{\mathfrak{F}}(\hat{\mu}_{z_{1:n}},\hat{\mu}_{y_{1:n}}) \leq \varepsilon^* + \bar{\varepsilon}_n ]
	\\
	&\qquad \leq  \mathbb{P}_{\theta, z_{1:n} } \left[ \mathcal{D}_\mathfrak{F} (\hat{\mu}_{z_{1:n}},\mu_\theta) + \mathcal{D}_\mathfrak{F}(\hat{\mu}_{y_{1:n}},\mu^*) >  \bar{\varepsilon}_n/3 + c_2 \right]  \leq   \mathbb{P}_{\theta, z_{1:n} }\left[  \mathcal{D}_\mathfrak{F} (\hat{\mu}_{z_{1:n}},\mu_\theta)> c_2 \right],
\end{align*}
where  $\mathbb{P}_{\theta, z_{1:n} }\left[  \mathcal{D}_\mathfrak{F} (\hat{\mu}_{z_{1:n}},\mu_\theta)> c_2 \right]=\int_{\theta \in \Theta}\mathbb{P}_{z_{1:n} }\left[  \mathcal{D}_\mathfrak{F} (\hat{\mu}_{z_{1:n}},\mu_\theta)> c_2 \mid \theta \right] \pi({\rm d}\theta).$ 

Therefore, leveraging the above result and applying \eqref{eq1_proof_1dep} to $z_{1:n}$ yields,
\begin{equation*}
	\begin{split}
		{\int_{\theta \in \Theta}}&{\mathbb{P}_{z_{1:n} }}[ \mathcal{D}_\mathfrak{F} (\hat{\mu}_{z_{1:n}},\mu_\theta)> c_2 \mid \theta] \pi({\rm d}\theta) = \int_{\theta \in \Theta}(1-{\mathbb{P}_{z_{1:n} }}[  \mathcal{D}_\mathfrak{F} (\hat{\mu}_{z_{1:n}},\mu_\theta) \leq c_2 \mid \theta]) \pi({\rm d}\theta) \\
		&\qquad \qquad \leq 2\exp[- s_n (c_2-2 \mathfrak{R}_{s_n}(\mathfrak{F}) - 4b/\sqrt{n} )^2/ 2b^2]+2 s_nC_{\beta} \exp( - \gamma \lfloor \sqrt{n} \rfloor^{\xi}).
	\end{split}
\end{equation*}
This controls the numerator in \eqref{eq1_proof_rate_dep}.
As for the denominator,
defining the event $\mbox{E}_n := \{ \theta \in \Theta : \mathcal{D}_{\mathfrak{F}}(\mu_\theta,\mu^*) \leq  \varepsilon^*+\bar{\varepsilon}_n/3 \} $ and applying again the triangle inequality,
we have that
\begin{align*}
	&\mathbb{P}_{\theta, z_{1:n} }   [\mathcal{D}_{\mathfrak{F}}(\hat{\mu}_{z_{1:n}},\hat{\mu}_{y_{1:n}}) \leq \varepsilon^*{+}\ \bar{\varepsilon}_n]
	\geq \int_{{\footnotesize \mbox{E}}_n} \mathbb{P}_{z_{1:n}}\left[\mathcal{D}_{\mathfrak{F}}(\hat{\mu}_{z_{1:n}},\hat{\mu}_{y_{1:n}}) \leq \varepsilon^*{+} \ \bar{\varepsilon}_n \mid \theta \right] \pi({\rm d}\theta)
	\\
	& \qquad \geq \int_{{\footnotesize \mbox{E}}_n}  \mathbb{P}_{z_{1:n}}\left[ \mathcal{D}_{\mathfrak{F}}(\hat{\mu}_{y_{1:n}},\mu^*) + \mathcal{D}_{\mathfrak{F}}(\mu_\theta,\mu^*) +  \mathcal{D}_{\mathfrak{F}}(\hat{\mu}_{z_{1:n}},\mu_\theta)
	\leq \varepsilon^*+\bar{\varepsilon}_n \mid \theta \right] \pi({\rm d}\theta)
	\\
	& \qquad \geq \int_{{\footnotesize \mbox{E}}_n}   \mathbb{P}_{z_{1:n}}\left[ \mathcal{D}_{\mathfrak{F}}(\hat{\mu}_{z_{1:n}},\mu_\theta)
	\leq \bar{\varepsilon}_n/3 \mid \theta \right] \pi({\rm d}\theta).
\end{align*}
The last inequality follows from that fact that we can restrict to the event $ \{ \mathcal{D}_{\mathfrak{F}}(\hat{\mu}_{y_{1:n}},\mu^*) \leq \bar{\varepsilon}_n/3 \} $, and that we are integrating over $\mbox{E}_n := \{ \theta \in \Theta : \mathcal{D}_{\mathfrak{F}}(\mu_\theta,\mu^*) \leq \varepsilon^*+ \bar{\varepsilon}_n/3 \} $. Let us now apply again~\eqref{eq1_proof_1dep} to $z_{1:n}$, with $c_1=\bar{\varepsilon}_n /3$, to further lower bound  the last term of the above inequality~by
\begin{align*}
&{\int_{{\footnotesize \mbox{E}}_n}} [ 1-2\exp[- s_n (\bar{\varepsilon}_n /3-2 \mathfrak{R}_{s_n}(\mathfrak{F}) - 4b/\sqrt{n} )^2/ 2b^2] -2 s_nC_{\beta} \exp( - \gamma \lfloor \sqrt{n} \rfloor^{\xi})  ] \pi({\rm d}\theta)\\
&  = \pi( \mbox{E}_n) [ 1-2\exp[- s_n (\bar{\varepsilon}_n /3-2 \mathfrak{R}_{s_n}(\mathfrak{F}) - 4b/\sqrt{n} )^2/ 2b^2] -2 s_nC_{\beta} \exp( - \gamma \lfloor \sqrt{n} \rfloor^{\xi})  ].
\end{align*}
Note that, by Assumption (\ref{C2}), $\pi( \mbox{E}_n) \geq c_\pi (\bar{\varepsilon}_n/3)^L$. Moreover, as shown before, the quantity $  1-2\exp[- s_n (\bar{\varepsilon}_n /3-2 \mathfrak{R}_{s_n}(\mathfrak{F}) - 4b/\sqrt{n} )^2/ 2b^2] -2 s_nC_{\beta} \exp( - \gamma \lfloor \sqrt{n} \rfloor^{\xi})$ goes to~$1$~when $ n \to \infty $, which also implies that, for a large enough $n$, $$ 1-2\exp[- s_n (\bar{\varepsilon}_n /3-2 \mathfrak{R}_{s_n}(\mathfrak{F}) - 4b/\sqrt{n} )^2/ 2b^2] -2 s_nC_{\beta} \exp( - \gamma \lfloor \sqrt{n} \rfloor^{\xi})  > 1/2.$$ Therefore, leveraging both results, the denominator in  \eqref{eq1_proof_rate_dep} can be lower bounded by the term  \smash{$(c_\pi/2) \left(\bar{\varepsilon}_n/{3}\right)^L$}.

To proceed with the proof, let us combine the upper and lower bounds derived, respectively, for the numerator and the denominator of the ratio in \eqref{eq1_proof_rate_dep}. This yields,~for~any~integer $K$,
\begin{multline}
	\label{eq2_proof_rate_dep}
	\pi_n^{(\varepsilon^*+\bar{\varepsilon}_n)}(\{\theta:  \mathcal{D}_{\mathfrak{F}}(\mu_\theta,\mu^*) > \varepsilon^* +  4\bar{\varepsilon}_n/3 + c_2\})
	\\
	\leq \frac{ 2\exp[- s_n (c_2-2 \mathfrak{R}_{s_n}(\mathfrak{F}) - 4b/\sqrt{n} )^2/2 b^2] +2 s_nC_{\beta} \exp( - \gamma \lfloor \sqrt{n} \rfloor^{\xi})
	}{(c_\pi/2) (\bar{\varepsilon}_n/3)^L  },
\end{multline}
with $\mathbb{P}_{y_{1:n}}$--probability going to $1$ as $n \rightarrow \infty$. 

Replacing $c_2$ in \eqref{eq2_proof_rate_dep} with \smash{$2 \mathfrak{R}_{s_n}(\mathfrak{F})+\sqrt{(2b^2/s_n)\log(n/\bar{\varepsilon}_n^L)} + 4b/\sqrt{n} $ }, gives
\begin{multline*}
\smash{\pi_n^{(\varepsilon^*+\bar{\varepsilon}_n)}\Bigl(\Bigl\{\theta:  \mathcal{D}_{\mathfrak{F}}(\mu_\theta,\mu^*)} > \smash{\varepsilon^* + \frac{4\bar{\varepsilon}_n}{3}+2 \mathfrak{R}_{s_n}(\mathfrak{F}) + \frac{4b}{\sqrt{n}}+\Bigl(\frac{2b^2}{s_n}\log \frac{n}{\bar{\varepsilon}_n^L}\Bigr)^{1/2}    }  \Bigr\}\Bigr)
	\\
	\leq \frac{4 \cdot 3^L}{n c_\pi} \Bigl(1 +
	 \frac{ n s_n C_\beta \exp( - \gamma \lfloor \sqrt{n} \rfloor^{\xi})
	}{\bar{\varepsilon}_n^L  } \Bigr).
\end{multline*}
To conclude, note that
$$
 n s_n C_\beta \exp( - \gamma \lfloor \sqrt{n} \rfloor^{\xi})/\bar{\varepsilon}_n^L
= n^{1+L/2} s_n C_\beta \exp( - \gamma \lfloor \sqrt{n} \rfloor^{\xi})/( (n \bar{\varepsilon}_n^2)^{L/2} ),
$$
where the numerator goes to $0$ and the denominator goes to $\infty$ when $n\to\infty$, under the setting of Theorem~\ref{thm_rademacher_dep}. Therefore, with $\mathbb{P}_{y_{1:n}}$--probability going to $1$ as $n \to \infty$, we have
\begin{equation*}
\smash{\pi_n^{(\varepsilon^*+\bar{\varepsilon}_n)}\Bigl(\Bigl\{\theta:  \mathcal{D}_{\mathfrak{F}}(\mu_\theta,\mu^*)} > \smash{\varepsilon^* + \frac{4\bar{\varepsilon}_n}{3}+2 \mathfrak{R}_{s_n}(\mathfrak{F})+ \frac{4b}{\sqrt{n}} +\Bigl(\frac{2b^2}{s_n}\log \frac{n}{\bar{\varepsilon}_n^L}\Bigr)^{1/2}    }  \Bigr\}\Bigr)
	\\
	\leq \frac{4 \cdot 3^L}{n c_\pi},
\end{equation*}
concluding the proof.
\end{proof}


\bibliographystyle{imsart-nameyear}
\bibliography{biblio}

\end{document}